\let\B\relax
\let\T\relax
\renewcommand{\imath}{\iota}
\newcommand{\T}{\mathcal{T}}
\newcommand{\F}{\mathcal{F}}
\newcommand{\M}{\mathcal{M}}
\newcommand{\D}{\mathcal{D}}
\newcommand{\I}{\mathcal{I}}
\newcommand{\Tab}{\mathfrak{T}}
\newcommand{\B}{\mathfrak{B}}
\newcommand{\HFLK}{\mathsf{HFL_K}}
\newcommand{\TCHFLK}{\mathscr{TC}(\FOHLLDK)}
\newcommand{\HL}{\mathsf{HL}}
\newcommand{\FOML}{\mathsf{FOML}}
\newcommand{\FOHL}{\mathsf{FOHL}}
\newcommand{\FOHLLD}{\mathsf{FOHL}_{\lambda,\imath}}
\newcommand{\FOHTL}{\mathsf{FOHL}^\mathsf{F,P}_{\lambda,\imath}}
\newcommand{\FOHLLDK}{\FOHTL}
\newcommand{\ass}{\mathcal{v}}
\newcommand{\future}{\mathsf{F}}
\newcommand{\past}{\mathsf{P}}
\newcommand{\going}{\mathsf{G}}
\newenvironment{proof}{\noindent\textit{Proof}.\ }{\hfill$\dashv$}
\newtheorem*{rep@theorem}{\rep@title}
\newcommand{\newreptheorem}[2]{%
	\newenvironment{rep#1}[1]{%
		\def\rep@title{#2 \ref{##1}}%
		\begin{rep@theorem}}%
		{\end{rep@theorem}}}
\theoremstyle{thmstyleone}%
\newtheorem{theorem}{Theorem}
\newtheorem{lemma}[theorem]{Lemma}%
\newtheorem{proposition}[theorem]{Proposition}%
\theoremstyle{thmstyletwo}%
\newtheorem{example}{Example}%
\newtheorem{fact}{Fact}
\theoremstyle{thmstylethree}%
\begin{document}
	
	\title{Definite descriptions and hybrid tense logic}
	
	
	\author{\fnm{Andrzej} \sur{Indrzejczak} \tanm{ORCID: 0000-0003-4063-1651}}\email{andrzej.indrzejczak@filhist.uni.lodz.pl}
	\equalcont{These authors contributed equally to this work.}
	
	\author*{\fnm{Micha{\l}} \sur{Zawidzki} \tanm{ORCID: 0000-0002-2394-6056}}\email{michal.zawidzki@filhist.uni.lodz.pl}
	\equalcont{These authors contributed equally to this work.}

	\affil{\orgdiv{Department of Logic}, \orgname{University of Lodz}, \orgaddress{\street{Lindleya 3/5}, \city{\L\'od\'z}, \postcode{90-131}, \country{Poland}}}

	
	\abstract{ We provide a version of first-order hybrid tense logic with predicate abstracts and definite descriptions as the only non-rigid terms. It is formalised by means of a tableau calculus working on sat-formulas.
		A particular theory of DD exploited here is essentially based on the approach of Russell,
		but with descriptions treated as genuine terms. However, the reductionist aspect of the Russellian approach is retained in several ways. 
		Moreover, a special form of tense definite descriptions is formally developed.
		A constructive proof of the interpolation theorem for this calculus is given, which is an extension of the result provided by Blackburn and Marx.}

	\keywords{first-order tense logic,
		hybrid tense logic,
		definite descriptions,
		tableau calculus,
		interpolation}
	
	
	
	\maketitle
	
%

\section{Introduction}
\label{sec::Introduction}

Hybrid logic ($\HL$) is an important augmentation of standard modal logic with rich syntactic resources.  
The basic language of $\HL$ is obtained by adding a second sort of propositional atoms, called \emph{nominals}, each of which holds true at exactly one state of a model and serves as a name of this state. Additionally, one can introduce several
extra operators; the most important one is the \emph{satisfaction}, or $@$-, operator which takes as its first argument a nominal $\bm{j}$ and as the second one an arbitrary $\HL$-formula $\varphi$. A formula $@_{\bm{j}} \varphi$ indicates that $\varphi$ is satisfied at the state denoted by $\bm{j}$. This allows us to internalise an essential part of the semantics in the language. Another specific operator is the \emph{downarrow} binder ($\downarrow$) which binds the value of a state variable to the current state. What is nice about $\HL$ is that the additional hybrid machinery does not seriously affect the modal logic core it is based on. In particular, modifications in the relational semantics are minimal. The concept of \emph{frame} remains intact. Only at the level of models we have some changes. Moreover, adding a binder-free hybrid toolkit typically does not increase the computational complexity of the underlying modal logic.
These relatively small modifications of standard modal languages give us many advantages:
\begin{enumerate*}
	\item a more expressive language,
	\item a better behaviour in completeness theory,
	\item a more natural and simpler proof theory.\end{enumerate*}
In particular, defining frame conditions such as irreflexivity, asymmetry, trichotomy, and others, impossible in standard modal languages, becomes possible in $\HL$. 
This machinery and results are easily extendable to multimodal logics, in particular to tense and temporal logic~\citep{BlTz99,BlJo2012}. Proof theory of $\HL$ offers an even more general approach than applying labels popular in proof theory for standard modal logic, namely it allows for \emph{internalising} those labels as part of standard hybrid formulas~\citep{Brauner2011,Indrzejczak2010}.

$\HL$ offers considerable benefits pertaining to the interpolation property. It is well known that for many modal logics in standard languages this property fails. The situation is particularly bad for the first-order case; Fine (\citeyear{Fine79}) showed that the first-order variant of $\sf S5$ does not enjoy interpolation, and also all modal logics from the modal cube with constant domains fail to satisfy it. On the other hand, $\HL$ offers resources which significantly improve the situation. In this case the $\downarrow$-binder turns out particularly useful. The  uniform interpolation theorem for all propositional modal logics complete with respect to any class of frames definable in the bounded fragment of first-order logic was proved by Areces, Blackburn, and Marx~(\citeyear{ArBlMar01}). In the follow-up paper (\citeyear{ArBlMar03}) the result was extended to first-order hybrid logic ($\FOHL$). In both cases the results were obtained semantically and non-constructively, however, in the later work~\citep{BlMar03} a constructive proof of interpolation was also provided for a tableau calculus for $\FOHL$.

In this paper we provide an extension of the aforementioned tableau calculus and the interpolation theorem for a richer version of $\FOHL$ involving predicate abstracts and definite descriptions. Let us briefly comment on these two kinds of extensions.
Adding definite descriptions or other complex terms to $\FOHL$ increases the expressive power of the language, which has recently also been noticed in the area of description logics~\citep{ArtaleEtAl2021}. On the other hand, in the previous versions of $\FOHL$ due to Blackburn and Marx~(\citeyear{BlMar03}) or Bra\"uner~(\citeyear{Brauner2011}) only simple non-rigid terms were used to represent descriptions, whereas involving the $\imath$-operator enables us to unfold rich informational contents of descriptions which is often indispensable in checking the correctness of an argument. Several formal systems with rules characterising definite descriptions were proposed by Orlandelli~(\citeyear{Orlandelli2021}), Fitting and Mendelsohn~(\citeyear{FitMen98}), or Indrzejczak and Zawidzki~(\citeyear{IndZaw,IndZaw2023}). A novelty of our approach in this paper involves also the introduction of a new, specifically temporal, category of definite descriptions which we call \emph{tensal definite descriptions}. Formally they also are treated by means of the $\imath$-operator but applied to tense variables to obtain the phrases uniquely characterising some time points, hence syntactically they behave like nominals and tense variables and may also be used as first arguments of the satisfaction operator. Intuitively, descriptions of this kind correspond to phrases such as `the wedding day of Anne and Alex', `the moment in which this accident took place', `the first year of the French Revolution', etc. Although it seems that in the general setting of modal logics the introduction of such descriptive nominal phrases is not always needed, in the case of temporal interpretation such an extension of the language is very important since these phrases are commonly used.\footnote{In fact even in the case of alethic modalities they can be applied to express phrases such as the Leibnizian `the best of possible worlds'.} What differs in the way tensal definite descriptions are used in natural language and in the formal setting specified below is that in the latter they are syntactically treated as sentences uniquely characterising some points in time, whereas in the former they are usually noun phrases. Moreover, as we will show later, they are characterisable by means of well-behaved rules and the interpolation theorem applies to this extended system.

In addition to descriptions of two kinds we enrich our system with predicate abstracts built by means of the $\lambda$-operator. Such devices were introduced to the studies on $\FOML$ by Thomason and Stalnaker (\citeyear{StalnakerThomason1968}) and then the technique was developed by Fitting (\citeyear{Fitting1975}). In the realm of modal logic it has mainly been used for taking control over scoping difficulties concerning modal operators, but also complex terms like definite descriptions. Such an approach was developed by Fitting and Mendelsohn~(\citeyear{FitMen98}), and independently formalised in the form of cut-free sequent calculi by Orlandelli~(\citeyear{Orlandelli2021}) and Indrzejczak~(\citeyear{Indrzejczak2020a}). Orlandelli uses labels and characterises definite descriptions by means of a ternary designation predicate. Indrzejczak applies hybrid logic and handles definite descriptions by means of intensional equality. It provides the first version of $\FOHL$ with descriptions and $\lambda$-terms ($\FOHLLD$). 

The system of $\FOHLLD$ presented here is different from the one due to Indrzejczak (\citeyear{Indrzejczak2020a}). The latter was designed with the aim of following closely the approach of Fitting and Mendelsohn~(\citeyear{FitMen98}), which was based on the Hintikka axiom. Here we provide an approach based on the Russellian theory of definite descriptions enriched with predicate abstracts and developed in the setting of classical logic by Indrzejczak and Zawidzki~(\citeyear{IndZaw2023}).
The specific features of the Russellian approach to definite descriptions, its drawbacks and advantages were discussed at length by Indrzejczak~(\citeyear{Indrzejczak2021c}), so we omit its presentation. It should be nevertheless stressed that in spite of the fact that Russell treated descriptions as incomplete symbols eliminable by means of contextual definitions, we treat them as genuine terms. However, the reductionist aspect of the Russellian approach is retained in several ways. At the level of syntax the occurrences of definite descriptions are restricted to arguments of predicate abstracts forming so-called \emph{$\lambda$-atoms}. At the level of calculus definite descriptions cannot be instantiated for variables in quantifier rules, but they are eliminated with special rules for $\lambda$-atoms.
Eventually, at the level of semantics definite descriptions are not defined by an interpretation function, but by satisfaction clauses for $\lambda$-atoms. Therefore, their semantic treatment is different than the one known from the Fitting and Mendelsohn approach. It leads to less complex proofs of completeness of the calculus and to different rules characterising definite descriptions which are simpler than the ones from the sequent calculus by Indrzejczak~(\citeyear{Indrzejczak2020a}). Hybridised versions of the rules for $\lambda$-atoms are added here to the tableau calculus by Blackburn and Marx (\citeyear{BlMar03}), which allows us to maximally shorten the proof of the Interpolation Theorem by referring to their rules for calculating interpolants.

In~\Cref{sec:Preliminaries} we briefly characterise the language and semantics of our logic. The tableau calculus and the completeness proof for it are presented in Sections~\ref{sec::Tableaux} and~\ref{sec::Completeness}. In \Cref{sec::Interpolation} we extend the proof of the Interpolation Theorem presented by Blackburn and Marx~(\citeyear{BlMar03}). We conclude the paper with a brief comparison of the present system with Indrzejczak's former system~(\citeyear{Indrzejczak2020a}) and with some open problems.

\section{Preliminaries}
\label{sec:Preliminaries}

In what follows we will provide a formal characterisation of first-order hybrid tense logic with definite descriptions, abbreviated as $\FOHTL$. The language of $\FOHTL$ consists of the sets of logical and non-logical expressions. The former is constituted by:
\begin{itemize}
	\item a countably infinite set of individual bound variables $\mathsf{BVAR} = \{x, y, z, \ldots\}$,
	\item a countably infinite set of individual free variables $\mathsf{FVAR} = \{a, b, c, \ldots\}$,
	\item a countably infinite set of tense variables $\mathsf{TVAR} = \{\bm{x},\bm{y},\bm{z},\ldots\}$,
	\item the identity predicate $=$,
	\item the (possibilistic) existential quantifier $\exists$,
	\item the abstraction operator $\lambda$,
	\item the definite description operator $\imath$,
	\item boolean connectives $\neg$, $\land$,
	\item tense operators $\mathsf{F}$ (somewhere in the future), $\mathsf{P}$ (somewhere in the past),
	\item the satisfaction operator $@$,
	\item the downarrow operator $\downarrow$.
\end{itemize}
The set of non-logical expressions of the language of $\FOHTL$ includes:
\begin{itemize}
	\item a countably infinite set of individual constants $\mathsf{CONS} = \{i, j, k, \ldots\}$,
	\item a countably infinite set of tense constants called nominals $\mathsf{NOM} = \{\bm{i}, \bm{j}, \bm{k},\ldots\}$,
	\item a countably infinite set of $n$-ary predicates $\mathsf{PRED}^n = \{P, Q, R, \ldots\}$, for each $n \in \mathbb{N}$. By $\mathsf{PRED}$ we will denote the union $\bigcup_{n=0}^\infty\mathsf{PRED}^n$.
\end{itemize}
Intuitively, nominals are introduced for naming time instances in the temporal domain of a model. Thus, on the one hand, they play a role of terms. On the other hand, however, at the level of syntax they are ordinary sentences. In particular, they can be combined by means of boolean and modal connectives. When a nominal $\bm{i}$ occurs independently in a sentence, its meaning can be read as ``the name of the current time instance is $\bm{i}$ (and thus, $\bm{i}$ holds only here)''. If it occurs in the scope of the satisfaction operator, it only serves as a name of the time instance it holds at. Tense-variables are double-faced expressions, too, which can serve both as labels of time instances and as full-fledged formulas, each being true at only one time instance. They can additionally be bound by the downarrow operator and by the iota-operator, but not by the quantifier or the lambda-operator. It is important to note that both nominals and the satisfaction operator are genuine language elements rather than an extra metalinguistic machinery. Observe that for convenience of notation we separate the sets of bound and free object variables. We do not do that for tense variables, as, with a slight violation of consistency, at the temporal level nominals often play an analogous role to free variables at the object level.

We will denote the set of well-formed terms, well-formed temporal formulas, and well-formed formulas of $\FOHTL$ by $\mathsf{TERM}$, $\mathsf{TFOR}$, and $\mathsf{FOR}$, respectively. The second set is only auxiliary and we introduce it to make the notation more uniform in the remainder of the section. All the sets are defined simultaneously by the following context-free grammars:
\begin{multline*}
	\hfill\mathsf{TERM} \ni \xi ::= a \mid i \mid \imath x \varphi[x],\hfill\\
	\hfill\mathsf{TFOR} \ni \bm{\xi} ::= \bm{x} \mid \bm{i} \mid \imath \bm{x} \varphi,\hfill\\
	\mathsf{FOR} \ni \varphi ::= \bot \mid P(\eta_1 ,\ldots, \eta_n) \mid \eta_1 = \eta_2 \mid \bm{\xi}\mid \neg \varphi \mid\hfill\\
	\hfill\varphi\land \varphi \mid\mathsf{F}\varphi \mid \mathsf{P}\varphi \mid \exists x \varphi \mid (\lambda x \varphi[x])(\xi) \mid @_{\bm{\xi}}\varphi \mid\ \downarrow_{\bm{x}}\! \varphi,
\end{multline*}
where $a \in \mathsf{FVAR}$, $i \in \mathsf{CONS}$, $x \in \mathsf{BVAR}$, $\varphi \in \mathsf{FOR}$, $\bm{a} \in \mathsf{TVAR}$, $\bm{i} \in \mathsf{NOM}$, $\bm{x} \in \mathsf{TVAR}$, $n\in\mathbb{N}$, $P \in \mathsf{PRED}^n$, $\eta_1, \ldots, \eta_n \in \mathsf{FVAR} \cup \mathsf{CONS}$, $\xi \in \mathsf{TERM}$, and $\bm{\xi} \in \mathsf{TFOR}$. We write $\varphi[\mathcal{x}]$ to indicate that $\mathcal{x}$ is free in $\varphi$. Observe that we require that in a definite description $\imath x\varphi$ a variable $x$ occurs freely in $\varphi$. Similarly, in an expression $\lambda x\varphi$ it is assumed that $x$ occurs freely in $\varphi$. On the other hand, for a temporal definite description $\imath\bm{x}\varphi$ we do not expect $\bm{x}$ to necessarily occur (freely) in $\varphi$. Note that since $\mathsf{BVAR} \cap \mathsf{FVAR} = \emptyset$, in a formula of the form $(\lambda x\varphi)(\xi)$, $x$ does not occur freely in $\xi$. Similarly, we require that in a formula of the form $\imath\bm{x}\varphi$ a tense variable $\bm{x}$ does not occur freely in $\varphi$.  For any $\eta_1,\eta_2$ in $\mathsf{FVAR}\cup\mathsf{CONS}$ or in $\mathsf{TVAR}\cup\mathsf{NOM}$, a formula
$\varphi[\eta_1/\eta_2]$ is the result of a uniform substitution of $\eta_2$ for $\eta_1$ in $\varphi$, whereas a formula $\varphi[\eta_1/\!/\eta_2]$ results from replacing some occurrences of $\eta_1$ with $\eta_2$ in $\varphi$.
Note that we can make substitutions and replacements only using variables or constants, but not definite descriptions. In practice, when constructing a tableau proof, variables are substituted only with free variables, however in the formulation of the semantics and in metalogical proofs it may happen that variables are substituted or replaced with bound variables. In such cases it is assumed that the variable substituting or replacing another variable in a formula is free after the substitution or replacement.

Let us now briefly discuss an informal reading of hybrid elements of $\mathsf{FOR}$. An expression $@_{\bm{\xi}} \varphi$, where $\bm{\xi} \in \mathsf{TFOR}$, reads ``$\varphi $ is satisfied at a time instance denoted by $\bm{\xi}$''. If $\bm{\xi}$ is of the form $\imath\bm{x}\varphi$, then $@_{\imath\bm{x}\varphi} \psi$ reads: ``$\psi$ holds at the only time instance at which $\varphi$ holds''. Expressions of the form $\imath\bm{x}\varphi$ play a double role which is similar to the one of nominals, that is, on the one hand, they unambiguously label time instances and on the other, they are formulas that hold at these time instances.  An expression $\downarrow_{\bm{x}}\varphi$ fixes the denotation of $\bm{x}$ to be the time instance the formula $\downarrow_{\bm{x}}\varphi$ is currently evaluated at.
Finally, we also use the following standard abbreviations:
\begin{gather*}
	\xi \neq \eta := \neg(\xi=\eta) \quad \top := \neg\bot \quad
	\forall x \varphi := \neg \exists x \neg\varphi \quad \mathsf{G} \varphi := \neg\mathsf{F}\neg\varphi \quad \mathsf{H} \varphi := \neg\mathsf{P}\neg\varphi\\
	\varphi \lor \psi := \neg(\neg \varphi \land \neg \psi) \quad
	\varphi \to \psi := \neg \varphi \lor \psi \quad \varphi \leftrightarrow \psi := (\varphi \to \psi) \land (\psi \to \varphi)
\end{gather*}

We define a \emph{tense first-order frame} as a tuple $\F = (\T, \prec, \D)$, where:
\begin{itemize}
	\item $\T$ is a non-empty set of \emph{time instances} (the \emph{universe} of $\F$),
	\item $\prec \subset \T \times \T$ is a \emph{relation of temporal precedence} on $\T$, and
	\item $\D$ is a non-empty set called an \emph{object domain}.
\end{itemize}
Given a frame $\F = (\T, \prec, \D)$, a \emph{tense first-order model} based on $\F$ is a pair $\M = (\F, \I)$, where $\I$ is an \emph{interpretation function} defined on $\mathsf{NOM} \cup \mathsf{CONS} \cup (\mathsf{PRED} \times \T)$ in the following way:
\begin{itemize}
	\item $\I(\bm{i}) \in \T$, for each $\bm{i} \in \mathsf{NOM}$,
	\item $\I(a) \in \D$, for each $a \in \mathsf{CONS}$,
	\item $\I(P, \bm{t}) \subseteq \overset{n}{\overbrace{\D\times\ldots\times\D}}$, for each $n \in \mathbb{N}$, and $P \in \mathsf{PRED}^n$.
\end{itemize}
Note that in our setting individual constants are rigidified, that is, they have the same interpretation at all time instances, whereas extensions of predicates may vary between different time instances. By making this choice we follow the approach of Blackburn and Marx~(\citeyear{BlMar03}).

Given a model $\M = ((\T, \prec,\D ), \I)$, an \emph{assignment} $\ass$ is a function defined on $\mathsf{TVAR}\cup\mathsf{FVAR}\cup\mathsf{BVAR}$ as follows:
\begin{itemize}
	\item $\ass(\bm{x}) \in \T$, for each $\bm{x} \in \mathsf{TVAR}$,
	\item $\ass(x) \in \D$, for each $x \in \mathsf{FVAR}\cup\mathsf{BVAR}$.
\end{itemize}
Moreover, for an assignment $\ass$, time instance $\bm{t} \in \T$, a variable $x \in \mathsf{FVAR}\cup\mathsf{BVAR}$ and an object $o \in \D$ we define an assignment $\ass[x\mapsto o]$ as:
\begin{center}
	$\ass[x\mapsto o](\mathcal{y}) = \begin{cases}
		o, &\text{if }\mathcal{y} = x,\\
		\ass(\mathcal{y}), &\text{otherwise}.
	\end{cases}$
\end{center}
Analogously, for a tense-variable $\bm{x}$ and time instance $\bm{t}$ we define the assignment $\ass[\bm{x}\mapsto\bm{t}]$ in the following way:
\begin{center}
	$\ass[\bm{x}\mapsto \bm{t}](\mathcal{y}) = \begin{cases}
		\bm{t}, &\text{if }\mathcal{y} = \bm{x}\\
		\ass(\mathcal{y}), &\text{otherwise}.
	\end{cases}$
\end{center}
Finally, for a model $\M = (\F, \I)$ and an assignment $\ass$ an \emph{interpretation $\I$ under $\ass$}, in short $\I_{\ass}$, is a function which coincides with $\I$ on $\mathsf{NOM} \cup \mathsf{CONS} \cup (\mathsf{PRED} \times \T)$ and with $\mathcal{v}$ on $\mathsf{TVAR}\cup\mathsf{FVAR}\cup\mathsf{BVAR}$. Henceforth, we will write $(\T, \prec,\D,\I)$ to denote the model $((\T, \prec,\D),\I)$.

Below, we inductively define the notion of \emph{satisfaction} of a formula $\varphi$ at a time instance $\bm{t}$ of a model $\M$ under an assignment $\ass$, in symbols $\M, \bm{t}, \ass \models \varphi$.
\noindent\begin{longtable}{rcl}
	$\M, \bm{t}, \ass \not\models \bot$ &&\\
	$\M, \bm{t}, \ass \models P(\eta_1,\ldots , \eta_n) $ & iff & $\langle \I_{\ass}(\eta_1), ..., \I_{\ass}(\eta_n) \rangle \in \I_\ass(P,\bm{t})$ \\ 
	$\M, \bm{t}, \ass  \models \eta_1 = \eta_2 $ & iff & $\I_{\ass}(\eta_1) = \I_{\ass}(\eta_2)$ \\
	$\M, \bm{t}, \ass \models \neg \varphi$ & iff & $\M, \bm{t}, \ass \not\models \varphi$ \\
	$\M, \bm{t}, \ass  \models \varphi \land \psi $ & iff & $\M, \bm{t}, \ass  \models \varphi$ and $\M, \bm{t}, \ass  \models \psi$ \\
	$\M, \bm{t}, \ass \models \exists x\varphi $ & iff & $\M, \bm{t}, \ass[x\mapsto o] \models \varphi$ for some $o\in \D$\\
	$\M, \bm{t}, \ass \models (\lambda x\psi)(\eta)$ & iff & $\M, \bm{t}, \ass[x\mapsto o] \models \psi$ and $o = \I_{\ass}(\eta)$ \\[.5ex]
	$\M, \bm{t}, \ass \models (\lambda x\psi)(\imath y\varphi)$ & iff & \begin{minipage}[t]{6.9cm}there exists $o\in \D$ such that $\M, \bm{t}, \ass[y\mapsto o] \models \varphi$ and $\M, \bm{t}, \ass[x\mapsto o] \models \psi$, and for any $o' \in \D$, if $\M, \bm{t}, \ass[y\mapsto o'] \models \varphi$, then $o' = o$\smallskip\end{minipage}\\
	$\M, \bm{t}, \ass \models \mathsf{F}\varphi $ & iff & $\M, \bm{s}, \ass \models \varphi$ for some $\bm{s}\in\T$ such that $\bm{t}\prec \bm{s}$ \\
	$\M, \bm{t}, \ass \models \mathsf{P}\varphi $ & iff & $\M, \bm{s}, \ass \models \varphi$ for some $\bm{s}\in\T$ such that $\bm{s} \prec \bm{t}$ \\
	$\M, \bm{t}, \ass \models \bm{\eta}$ & iff & $\bm{t} = \I_{\ass}(\bm{\eta})$ \\[.5ex]
	$\M, \bm{t}, \ass \models \imath\bm{x}\varphi $ & iff & \begin{minipage}[t]{6.9cm}$\M, \bm{t}, \ass[\bm{x}\mapsto\bm{t}] \models \varphi$ and for any $\bm{s} \in \T$, if $\M, \bm{s}, \ass[\bm{x}\mapsto\bm{s}] \models \varphi$, then $\bm{s} = \bm{t}$\smallskip\end{minipage}\\
	$\M, \bm{t}, \ass \models @_{\bm{\xi}}\psi $ & iff & \begin{minipage}[t]{6.9cm}there exists $\bm{s}\in \T$ such that $\M, \bm{s}, \ass \models \bm{\xi}$ and $\M, \bm{s}, \ass \models \psi$\smallskip\end{minipage}\\
	$\M, \bm{t}, \ass \models \downarrow_{\bm{x}} \varphi$ & iff & $\M, \bm{t}, \ass[\bm{x}\mapsto\bm{t}] \models \varphi$,
\end{longtable}
\noindent where $P \in \mathsf{PRED}^n$, $\eta, \eta_1,\ldots,\eta_n \in \mathsf{FVAR}\cup\mathsf{CONS}$, $\varphi, \psi \in \mathsf{FOR}$, $x,y \in \mathsf{BVAR}$, $\bm{\eta} \in \mathsf{TVAR}\cup\mathsf{NOM}$, $\bm{\xi}\in\mathsf{TFOR}$, and $\bm{x} \in \mathsf{TVAR}$. A $\FOHTL$ formula $\varphi$ is \emph{satisfiable} if there exists a tense first-order model $\M$, a time instance $\bm{t}$ in the universe of $\M$, and an assignment $\ass$ such that $\M , \bm{t}, \ass \models \varphi$; it is \emph{true} in a tense first-order model $\M$ under an assignment $\ass$, in symbols $\M , \ass \models \varphi$, if it is satisfied by $\ass$ at all time instances in the universe of $\M$; it is \emph{valid}, in symbols $\models \varphi$, if, for all tense first-order models $\M$ and assignments $\ass$, it is true in $\M$ under $\ass$; it \emph{globally entails} $\psi$ in $\FOHTL$ if, for every tense first-order model $\M$ and assignment $\ass$, if $\varphi$ is true in $\M$ under $\ass$, then $\psi$ is true in $\M$ under $\ass$; it \emph{locally entails} $\psi$ if, for every tense first-order model $\M$, time instance $\bm{t}$ in the universe of $\M$, and assignment $\ass$, if $\M , \bm{t} , \ass \models \varphi$, then $\M , \bm{t}, \ass \models \psi$. 

We can obtain different underlying temporal structures by imposing suitable restrictions on $\prec$, such as, for instance, transitivity, irreflexivity, connectedness etc.

\begin{example}\label{ex::BaldKing}
	Let us consider a simplified Russellian example of the \emph{bald king of France}, formalised as $(\lambda xB(x))(\imath yK(y))$ to see how $\FOHLLDK$ deals with several recognisable problems. Consider a model $\M=(\T,\prec,\D,\I)$, depicted in \Cref{fig::BaldKing}, with:
	\begin{itemize}
		\item $\T = \{\bm{t}_0, \bm{t}_1, \bm{t}_2, \bm{t}_3, \bm{t}_4\}$,
		\item $\bm{t}_0\prec\bm{t}_1$, $\bm{t}_0\prec\bm{t}_2$, $\bm{t}_2\prec\bm{t}_3$, $\bm{t}_3\prec\bm{t}_1$, $\bm{t}_3\prec\bm{t}_4$,
		\item $\D = \{o_1,o_2\}$,
		\item $\I(B,\bm{t}_0)=\emptyset$, $\I(K,\bm{t}_0)=\{o_1\}$, $\I(B,\bm{t}_1)=\I(K,\bm{t}_1)=\{o_1\}$, $\I(B,\bm{t}_2)=\D$, $\I(K,\bm{t}_2)=\emptyset$, $\I(B,\bm{t}_3)=\I(K,\bm{t}_3)=\D$, $\I(B,\bm{t}_4)=\I(K,\bm{t}_4)=\{o_2\}$.
	\end{itemize}
	We discard $\I$ which is unessential for our needs, but define an assignment $\ass[\circ\mapsto o_1,\bullet\mapsto\bm{t}_0]$ which maps all variables to $o_1$ and all tense-variables to $\bm{t}_0$. One may easily check that $(\lambda xB(x))(\imath yK(y))$ is satisfied at $\bm{t}_1$ and $\bm{t}_4$ but for different objects, namely for $o_1$ and $o_2$, respectively, since descriptions are non-rigid terms.
	At the remaining time instances it is false, hence $\neg (\lambda xB(x))(\imath yK(y))$ is satisfied there. Note, however, that $(\lambda x\neg B(x))(\imath yK(y))$ is satisfied at $\bm{t}_0$ since it holds of $o_1$. So there is no difference between saying that the king is not bald here or that it is not the case that he is bald. On the other hand, at $\bm{t}_2$ and $\bm{t}_3$ it is also false that $(\lambda x\neg B(x))(\imath yK(y))$ since there is no unique king there. Also $\future(\lambda x B(x))(\imath yK(y))$ is satisfied at $\bm{t}_0$ and at $\bm{t}_3$ even $\going(\lambda x B(x))(\imath yK(y))$ holds, whereas at $\bm{t}_2$ $\future(\lambda x B(x))(\imath yK(y))$ is false.
\end{example}
\begin{figure}
	\centering
	\begin{tikzpicture}
		\node[circle,fill=black,draw=black,inner sep=0,minimum size=3pt,outer sep=3pt,label=270:$\substack{\bm{t}_0\\[3pt]\{K(o_1)\}}$] (a) at (0,0) {};
		\node[circle,fill=black,draw=black,inner sep=0,minimum size=3pt,outer sep=3pt,label=90:$\substack{\{B(o_1),B(o_2)\}\\[3pt]\bm{t}_2}$] (b) at (2,2) {};
		\node[circle,fill=black,draw=black,inner sep=0,minimum size=3pt,outer sep=3pt,label=270:$\substack{\bm{t}_1\\[3pt]\{K(o_1),B(o_1)\}}$] (c) at (6,0) {};
		\node[circle,fill=black,draw=black,inner sep=0,minimum size=3pt,outer sep=3pt,label=90:$\substack{\{K(o_1),B(o_1)\\\ \ \, K(o_2),B(o_2)\}\\[3pt]\bm{t}_3}$] (d) at (4,2) {};
		\node[circle,fill=black,draw=black,inner sep=0,minimum size=3pt,outer sep=3pt,label=90:$\substack{\{K(o_2),B(o_2)\}\\[3pt]\bm{t}_4}$] (e) at (6,2) {};
		\draw[thick,->] (a) -- (b);
		\draw[thick,->] (a) -- (c);
		\draw[thick,->] (b) -- (d);
		\draw[thick,->] (d) -- (e);
		\draw[thick,->] (d) -- (c);
	\end{tikzpicture}
	\caption{Tense model from Example~\ref{ex::BaldKing}}
	\label{fig::BaldKing}
\end{figure}
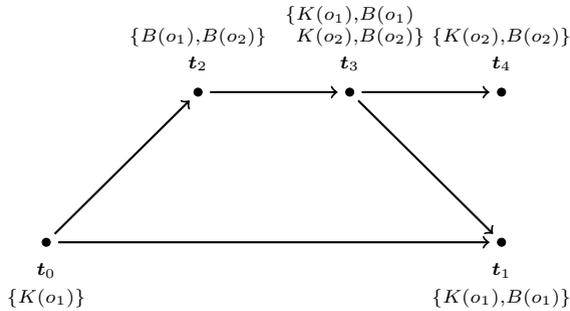

\section{Tableau Calculus}
\label{sec::Tableaux}

Several proof systems, including tableaux, sequent calculi and natural deduction, were provided for different versions of $\HL$ (see, e.g., Bra\"uner~(\citeyear{Brauner2011}), Indrzejczak~(\citeyear{Indrzejczak2010}), Zawidzki~(\citeyear{Zawidzki2014})).
Most of them represent so-called sat-calculi where each formula is preceded by the satisfaction operator.
Using sat-calculi instead of calculi working with arbitrary formulas is justified by the fact that $\varphi$ holds in (any) hybrid logic iff $@_{\bm{j}}\varphi$ holds, provided that ${\bm{j}}$ is not present in $\varphi$. And so, proving $@_{\bm{j}}\varphi$ is in essence equivalent to proving $\varphi$. In what follows we present a sat-tableau calculus for the logic $\FOHLLDK$, which we denote by $\TCHFLK$. It is in principle the calculus of Blackburn and Marx (\citeyear{BlMar03}) enriched with rules for DD and the lambda operator. Strictly speaking it is not a pure sat-calculus, since equality formulas are admitted also without satisfaction operators. Before we proceed to discussing the rules of $\TCHFLK$, let us briefly recall basic notions from the tableau methodology.

A \emph{tableau} $\Tab$ generated by a calculus $\TCHFLK$ is a \emph{derivation tree} whose nodes are assigned formulas in the language of deduction. A \emph{branch of $\Tab$} is a simple path from the root to a leaf of $\Tab$. For simplicity, we will identify each branch $\B$ with the set of formulas assigned to nodes on $\B$.

A general form of rules is as follows: $\frac{\Phi}{\Psi_1 | \ldots | \Psi_n}$, where $\Phi$ is the set of \emph{premises} and each $\Psi_i$, for $i\in\{1,\ldots,n\}$, is a set of \emph{conclusions}. If a rule has more than one set of conclusions, it is called a \emph{branching} rule. Otherwise it is \emph{non-branching}. Thus, if a rule $\frac{\Phi}{\Psi_1 | \ldots | \Psi_n}$ is applied to $\Phi$ occurring on $\B$, $\B$ splits into $n$ branches: $\B \cup \{\Psi_1\}, \ldots, \B \cup \{\Psi_n\}$. A rule $(\sf R)$ with $\Phi$ as the set of its premises is \emph{applicable} to $\Phi$ occurring on a branch $\B$ if it has not yet been applied to $\Phi$ on $\B$. A set $\Phi$ is called \emph{$(\sf R)$-expanded} on $\B$ if $(\sf R)$ has already been applied to $\Phi$ on $\B$.
A term $\xi$ is called \emph{fresh} on a branch $\B$ if it has not yet occurred on $\B$. We call a branch $\B$ \emph{closed} if the inconsistency symbol $\bot$ occurs on $\B$. If $\B$ is not closed, it is \emph{open}. A branch is \emph{fully expanded} if it is closed or no rules are applicable to (sets of) formulas occurring on $\B$. A tableau $\Tab$ is called closed if all of its branches are closed. Otherwise $\Tab$ is called open. Finally, $\Tab$ is fully expanded if all its branches are fully expanded.
A \emph{tableau proof} of a formula $\varphi$ is a closed tableau with $\neg @_{\bm{i}}\varphi$ at its root, where $\bm{i}$ is a nominal not occurring in $\varphi$. A formula $\varphi$ is tableau-valid (with respect to the calculus $\TCHFLK$) if there exists a tableau proof of~$\varphi$. The calculus $\TCHFLK$ is \emph{sound} (with respect to the semantics of $\FOHLLDK$) if, whenever a $\FOHLLDK$-formula $\varphi$ is tableau-valid, then $\varphi$ is valid. $\TCHFLK$ is \emph{complete} (with respect to the semantics of $\FOHLLDK$) if, whenever a $\FOHLLDK$-formula $\varphi$ is valid, then $\varphi$ is tableau-valid.

\subsection{Basic rules}
\label{subsec::BasicRules}
\begin{figure}
	\footnotesize
	\textbf{Closure rule}\hspace{4cm}\textbf{Propositional rules}\medskip
	
	\begin{center}
		\hspace{1mm}$(\bot)\ \dfrac{\parbox{.4cm}{\centering $ \varphi$\\$\neg\varphi$}}{\bot}$\hfill
		$(\neg)\ \dfrac{@_{\bm{j}} \neg \varphi}{\neg @_{\bm{j}} \varphi}$\qquad$(\neg\neg)\ \dfrac{\neg @_{\bm{j}} \neg\varphi}{@_{\bm{j}} \varphi}$\qquad$(\land)\ \dfrac{@_{\bm{j}}(\varphi \land \psi)}{\parbox{1cm}{\centering $@_{\bm{j}}\varphi$\\$@_{\bm{j}}\psi$}}$\qquad
		$(\neg\land)\ \dfrac{\neg @_{\bm{j}}(\varphi \land \psi)}{\neg @_{\bm{j}} \varphi \mid \neg @_{\bm{j}} \psi}$
	\end{center}\smallskip
	
	\hspace{1cm}\textbf{Quantifier rules}\hspace{5cm}\textbf{Basic modal rules}\medskip
	
	\begin{center}
		$(\exists)\ \dfrac{@_{\bm{j}} \exists x \varphi}{@_{\bm{j}}\varphi[x/a]}$\qquad
		$(\neg\exists)\ \dfrac{\neg @_{\bm{j}}\exists x \varphi}{\neg @_{\bm{j}} \varphi[x/b]}$\hfill%
		$(\future)\ \dfrac{@_{\bm{j}}\future \varphi}{\parbox{1cm}{\centering $@_{\bm{j}} \future {\bm{i}}$\\$@_{\bm{i}} \varphi$}}$\qquad
		$(\neg\future)\ \dfrac{\parbox{1.2cm}{\centering $\neg @_{\bm{j}_1} \future \varphi$\\$@_{\bm{j}_1} \future {\bm{j}_2}$}}{\neg @_{\bm{j}_2}\varphi}$\\[5pt]
		\hfill$(\past)\ \dfrac{@_{\bm{j}}\past \varphi}{\parbox{1cm}{\centering $@_{\bm{i}} \future {\bm{j}}$\\$@_{\bm{i}} \varphi$}}$\qquad
		$(\neg\past)\ \dfrac{\parbox{1.2cm}{\centering $\neg @_{\bm{j}_1} \past \varphi$\\$@_{\bm{j}_2} \future {\bm{j}_1}$}}{\neg @_{\bm{j}_2}\varphi}$
	\end{center}\smallskip
	
	{\hfill\textbf{Nominal rules}\hfill}\medskip
	
	\begin{center}
		$(\mathsf{gl})\ \dfrac{@_{\bm{j}_1} @_{\bm{j}_2} \varphi}{@_{\bm{j}_2} \varphi}$\qquad $(\neg\mathsf{gl})\ \dfrac{\neg @_{\bm{j}_1} @_{\bm{j}_2} \varphi}{\neg @_{\bm{j}_2} \varphi}$\qquad $(\mathsf{ref}_{\bm{j}})\ \dfrac{}{@_{\bm{j}} {\bm{j}}}$\qquad
		$(\mathsf{nom})\ \dfrac{\parbox{0.6cm}{\centering $@_{\bm{j}_1} {\bm{j}_2}$\\$@_{\bm{j}_1}\varphi$}}{@_{\bm{j}_2}\varphi}$\qquad
		$(\mathsf{bridge})\ \dfrac{\parbox{0.75cm}{\centering $@_{\bm{j}_1} {\bm{j}_2}$\\$@_{\bm{j}_3}\future {\bm{j}_1}$}}{@_{\bm{j}_3}\future {\bm{j}_2}}$\medskip

		$(\downarrow)\ \dfrac{@_{\bm{j}}\!\downarrow_{\bm{x}}\varphi}{@_{{\bm{j}}} \varphi[\bm{x}/\bm{j}]}$\qquad$(\neg\!\downarrow)\ \dfrac{\neg @_{\bm{j}}\!\downarrow_{\bm{x}}\varphi}{\neg @_{\bm{j}} \varphi[\bm{x}/\bm{j}]}$\qquad$(\mathsf{eq})\ \dfrac{@_{\bm{j}} b_1=b_2}{b_1=b_2}$\qquad$(\neg\mathsf{eq})\ \dfrac{\neg @_{\bm{j}} b_1=b_2}{b_1 \neq b_2}$
	\end{center}\smallskip
	
	{\hfill\textbf{$\mathbf{\imath}$-object rules}\hfill}\medskip
	
	\begin{center}
		$(\imath_1^o)\ \dfrac{@_{\bm{j}}(\lambda x \psi)(\imath y \varphi)}{\parbox[t]{1.3cm}{\centering$@_{\bm{j}}\varphi[y/a]$\\$@_{\bm{j}}\psi[x/a]$}}$ \qquad
		$(\imath_2^o)\ \dfrac{\parbox{1.8cm}{\centering $@_{\bm{j}}(\lambda x \psi)(\imath y \varphi)$\\ $@_{\bm{j}}\varphi[y/b_1]$\\ $@_{\bm{j}}\varphi[y/b_2]$}}{b_1 = b_2}$\qquad
		$(\neg \imath^o)\ \dfrac{\neg @_{\bm{j}}(\lambda x\psi)(\imath y \varphi)}{\neg @_{\bm{j}} \psi[x/b] \mid \neg @_{\bm{j}} \varphi[y/b] \mid \parbox[t]{1.15cm}{\centering$@_{\bm{j}}\varphi[y/a]$\\$a \neq b$}}$
	\end{center}\smallskip
	
	{\hfill\textbf{$\mathbf{\imath}$-temporal rules}\hfill}\medskip
	
	\begin{center}
		$(\imath_1^t)\ \dfrac{@_{\bm{j}}\imath\bm{x}\varphi}{@_{\bm{j}}\varphi[\bm{x}/\bm{j}]}$ \qquad
		$(\imath_2^t)\ \dfrac{\parbox{1.5cm}{\centering $@_{\bm{j}_1}\imath\bm{x}\varphi$\\ $@_{\bm{j}_2}\varphi[\bm{x}/\bm{j}_2]$}}{@_{\bm{j}_1}\bm{j}_2}$\qquad
		$(\neg \imath^t)\ \dfrac{\neg @_{\bm{j}}\imath\bm{x}\varphi}{\neg @_{\bm{j}}\varphi[\bm{x}/\bm{j}] \mid \parbox[t]{1.15cm}{\centering$@_{\bm{i}}\varphi[\bm{x}/\bm{i}]$\\$\neg@_{\bm{j}}\bm{i}$}}$\medskip
		
		$(@\imath^t)\ \dfrac{@_{\bm{j}}@_{\imath\bm{x}\varphi}\psi}{\parbox{1.3cm}{\centering $@_{\bm{i}}\imath\bm{x}\varphi$\\$@_{\bm{i}}\psi$}}$\qquad$(\neg@\imath^t)\ \dfrac{\neg @_{\bm{j}_1}@_{\imath\bm{x}\varphi}\psi}{\neg @_{\bm{j}_2}\imath\bm{x}\varphi\mid \neg @_{\bm{j}_2}\psi}$
	\end{center}\smallskip
	
	\hspace{2cm}\textbf{$\mathbf{\lambda}$-rules}\hfill\textbf{Other rules}\qquad\quad\phantom{a}\medskip
	
	\begin{center}
		$(\lambda)\ \dfrac{@_{\bm{j}}(\lambda x \psi)(b)}{@_{\bm{j}}\psi[x/b]}$ \qquad $(\neg\lambda)\ \dfrac{\neg @_{\bm{j}}(\lambda x \psi)(b)}{\neg @_{\bm{j}} \psi[x/b]}$\hfill$(\mathsf{ref})\ \dfrac{}{b=b}$\qquad
		$(\mathsf{RR})\ \dfrac{\parbox{1cm}{\centering $\varphi[b_1]$\\$b_1 = b_2$}}{\varphi[b_1/\!/b_2]}$\\[8pt]
		\hfill$(\mathsf{NED})\ \dfrac{}{a=a}$\qquad\quad\phantom{a}
	\end{center}
	
	\caption{Rules of the tableau calculus $\TCHFLK$}
	\label{fig::TCHFLK}
\end{figure} 
In \Cref{fig::TCHFLK} we present the rules constituting $\TCHFLK$. We transfer the notation from the previous section with the caveat that $a$ denotes an object free variable that is \emph{fresh} on the branch, whereas $b, b_1, b_2$ denote object free variables or individual constants that have already been present on the branch. Similarly, $\bm{i}$ denotes a nominal that is fresh on the branch, while, $\bm{j}, \bm{j}_1, \bm{j}_2, \bm{j}_3$ are nominals that have previously occurred on the branch. Recall that we are only considering sentences, that is, formulas without free variables (both object and tense). Consequently, even though there exist satisfaction conditions for formulas of the form $\bm{x}$ or $@_{\bm{x}}\varphi$, where $\bm{x}$ is a tense variable, the presented calculus does not comprise any rules that handle such formulas occurring independently on a branch, as such a scenario cannot materialise under the above assumption.
The closure rules and the rules handling conjunction are self-explanatory, however the remaining ones deserve a brief commentary. The rules $(\neg)$ and $(\neg\neg)$ capture self-duality of the $@$-operator. The quantifier rules are standard rules for possibilistic quantifiers ranging over the domain of a model. Bear in mind that bound variables can only be substituted with free variables or constants, but not definite descriptions, when a quantifier rule is applied. The rules $(\future)$, $(\neg\future)$, $(\past)$, and $(\neg\past)$ are standard rules for temporal modalities relying on a hybrid representation of two time instances being linked by the temporal precedence relation. More precisely, in a model $\M$ a time instance $\bm{t}_2$ represented by a nominal $\bm{j}_2$ occurs \emph{after} a time instance $\bm{t}_1$ represented by a nominal $\bm{j}_1$ if and only if a formula $@_{\bm{j}_1} \future \bm{j}_2$ holds true in $\M$. With regard to the nominal rules, $(\mathsf{gl})$ and $(\neg\mathsf{gl})$ capture a global range of $@$, that is, if a formula preceded by the $@$-operator is satisfied at one time instance in a model $\M$, it is satisfied at all time instances in $\M$. The rule $(\mathsf{ref}_{\bm{j}})$ guarantees that every nominal $\bm{j}$ is satisfied at a time instance labelled by $\bm{j}$. The \emph{bridge} rules $(\mathsf{nom})$ and $(\mathsf{bridge})$ ensure that if a nominal $\bm{j}_1$ is satisfied at a time instance labelled by $\bm{j}_2$, then $\bm{j}_1$ and $\bm{j}_2$ are interchangeable. The rules $(\downarrow)$ and $(\neg\!\downarrow)$ embody the semantics of the $\downarrow_{\bm{x}}$-operator which fixes the denotation of $\bm{x}$ to be the state the formula is currently evaluated at. More concretely, if a $\downarrow_{\bm{x}}$- (or $\neg\downarrow_{\bm{x}}$-)formula is evaluated at a time instance labelled by $\bm{j}$, that is, is preceded by $@_{\bm{j}}$, then $\bm{x}$ is substituted with $\bm{j}$ in the formula in the scope of $\downarrow_{\bm{x}}$. The rules $(\mathsf{eq})$ and $(\neg\mathsf{eq})$ reflect the fact that the object constants have the same denotations at all time instances in the model. The rule $(\imath_1^o)$ handles the scenario where, at a given time instance, an object definite description occurs in the scope of a $\lambda$-expression. Then $(\imath_1^o)$ enforces that both the formulas hold of the same fresh object constant, at the same time instance. If, moreover, a formula constituting an object definite description occurs independently on the branch, preceded with a nominal representing a given time instance, then $(\imath_2^o)$ guarantees that all the free variables or constants it holds of at this time instance denote the same object. If at a given time instance a $\lambda$-expression $\lambda x \psi$ does not hold of an object definite description $\imath y \varphi$, then for any constant $b$ present on the branch, either $\varphi$ does not hold of $b$ at this time instance or $\psi$ does not hold of $b$ at this time instance, or we can introduce a fresh constant $a$ distinct from $b$ such that $\varphi$ holds of $a$ at this time instance. The rules for temporal definite descriptions work in the following way. The rule $(\imath_1^t)$ unpacks a temporal definite description at the time instance of its evaluation. The rule $(\imath_2^t)$ guarantees that a time instance satisfying the formula which constitutes a temporal definite description is unique. According to $(\neg\imath^t)$, if a temporal definite description is not satisfied at a time instance, then either a formula constituting this description is not satisfied there or it is satisfied at a different time instance. The rule $(@\imath^t)$ reduces a formula $@_{\imath\bm{x}\varphi}\psi$ being satisfied at a time instance to the temporal definite description $\imath\bm{x}\varphi$ and a formula $\psi$ being satisfied at some (not necessarily distinct) time instance. The rule $(\neg@\imath^t)$ guarantees that if a formula $@_{\imath\bm{x}\varphi}\psi$ is not satisfied at some time instance, then $\imath\bm{x}\varphi$ and $\psi$ cannot be jointly satisfied at any time instance.
Note that these rules play an analogous role to that of $(\mathsf{gl})$ and $(\neg\mathsf{gl})$, but this time with a tense definite description in place of $\bm{j}_2$. In this case, however, the rule does not make this description the argument of the leftmost $@$-operator, as, by the construction of a proof tree, these must always be labelled by nominals.
The rules $(\lambda)$ and $(\neg\lambda)$ are tableau-counterparts of the standard $\beta$-reduction known from the $\lambda$-calculus. Their application is restricted to constants.
Finally, the $(\mathsf{ref})$ guarantees that $=$ is reflexive over all constants occurring on the branch and
$(\mathsf{RR})$ is a standard replacement rule. The rule $(\mathsf{NED})$ is a counterpart of the non-empty domain assumption. Mind, however, that it is only applied if no other rules are applicable and if no formula of the form $b=b$ is already present on the branch. Notice that also the rules $(\mathsf{ref}_{\bm{j}})$ and $(\mathsf{ref})$ do not explicitly indicate premises, however it is assumed that a nominal $\bm{j}$ or an object constant $b$ must have previously been present on the branch.

\begin{figure}
	\centering\small
	\begin{tikzpicture}
		\node[draw=none,fill=none,text width=5cm,align=center] (a) at (0,0) {$@_{\bm{j}_1}@_{\imath \bm{x}W(t,j)}M(t,j,l)$\\[3pt]
			$@_{\bm{j}_1}@_{\imath \bm{x}W(t,j)}\imath \bm{y}B$\\[3pt]				
			$\neg @_{\bm{j}_1}@_{\imath \bm{y}B}M(t,j,l)$};
		
		\node[draw=none,fill=none,text width=5cm,align=center] (b) at (0,-2) {$@_{\bm{j}_2}\imath \bm{x}W(t,j)$\\[3pt]
			$@_{\bm{j}_2}\imath \bm{y}B$};
		
		\node[draw=none,fill=none,text width=5cm,align=center] (c) at (0,-3.7) {$@_{\bm{j}_3}\imath \bm{x}W(t,j)$\\[3pt]			
			$@_{\bm{j}_3}M(t,j,l)$};
		
		\node[draw=none,fill=none] (d) at (-1,-5.4) {$\neg@_{\bm{j}_3}\imath \bm{y}B$};
		
		\node[draw=none,fill=none] (e) at (-1,-6.4) {$@_{\bm{j}_3}W(t,j)$};
		
		\node[draw=none,fill=none] (f) at (-1,-7.4) {$@_{\bm{j}_2}\bm{j}_3$};
		
		\node[draw=none,fill=none] (g) at (-1,-8.4) {$@_{\bm{j}_3}\imath \bm{y}B$};
		
		\node[draw=none,fill=none] (h) at (-1,-9.4) {$\bot$};

		\node[draw=none,fill=none] (i) at (1,-5.4) {$\neg@_{\bm{j}_3}M(t,j,l)$};
		
		\node[draw=none,fill=none] (j) at (1,-6.4) {$\bot$};
		
		\node[draw=none,fill=none] (k) at (0,-10.5) {\normalsize(a)};
		
		\draw (a) -- (b) node[midway,right] {\footnotesize$(@\imath^t)$};
		\draw (b) -- (c) node[midway,right] {\footnotesize$(@\imath^t)$};
		\draw (c) -- (d) node[pos=0.2,right,xshift=-3pt,yshift=-5pt] {\footnotesize$(\neg@\imath^t)$};
		\draw (c) -- (i);
		\draw (i) -- (j) node[midway,right] {\footnotesize$(\bot)$};
		\draw (d) -- (e) node[midway,left] {\footnotesize$(\imath_1^t)$};
		\draw (e) -- (f) node[midway,left] {\footnotesize$(\imath_2^t)$};
		\draw (f) -- (g) node[midway,left] {\footnotesize$(\mathsf{nom})$};
		\draw (g) -- (h) node[midway,left] {\footnotesize$(\bot)$};
	\end{tikzpicture}\hspace{-6em}
	\begin{tikzpicture}
		\node[draw=none,fill=none,text width=5cm,align=center] (a) at (0,0) {$@_{\bm{j}_1}@_{\imath \bm{x}\varphi}\imath \bm{y}\psi$\\[3pt]
			$@_{\bm{j}_1}@_{\imath \bm{x}\varphi}\chi$\\[3pt]
			$\neg@_{\bm{j}_1} @_{\imath \bm{y}\psi}\chi$};
		
		\node[draw=none,fill=none,text width=5cm,align=center] (b) at (0,-2) {$@_{\bm{j}_2}\imath \bm{x}\varphi$\\[3pt]
			$@_{\bm{j}_2}\imath \bm{y}\psi$};
		
		\node[draw=none,fill=none,text width=5cm,align=center] (c) at (0,-3.7) {$@_{\bm{j}_3}\imath \bm{x}\varphi$\\[3pt]
			$@_{\bm{j}_3}\chi$};
		
		\node[draw=none,fill=none] (d) at (0,-5) {$@_{\bm{j}_2}\varphi[\bm{x}/\bm{j}_2]$};
		
		\node[draw=none,fill=none] (e) at (0,-6) {$@_{\bm{j}_2}\bm{j}_3$};
		
		\node[draw=none,fill=none] (f) at (0,-7) {$@_{\bm{j}_2}\chi$};
		
		\node[draw=none,fill=none] (g) at (-1,-8.4) {$\neg @_{\bm{j}_2}\imath \bm{y}\psi$};
		\node[draw=none,fill=none] (i) at (1,-8.4) {$\neg @_{\bm{j}_2}\chi$};
		
		\node[draw=none,fill=none] (h) at (-1,-9.4) {$\bot$};
		\node[draw=none,fill=none] (j) at (1,-9.4) {$\bot$};
		
		\node[draw=none,fill=none] (k) at (0,-10.5) {\normalsize(b)};

		\draw (a) -- (b) node[midway,right] {\footnotesize$(@\imath^t)$};
		\draw (b) -- (c) node[midway,right] {\footnotesize$(@\imath^t)$};
		\draw (c) -- (d) node[midway,right] {\footnotesize$(\imath_1^t)$};
		\draw (d) -- (e) node[midway,right] {\footnotesize$(\imath_2^t)$};
		\draw (e) -- (f) node[midway,right] {\footnotesize$(\mathsf{nom})$};
		\draw (f) -- (g) node[pos=0.3,right,xshift=-5pt,yshift=-5pt] {\footnotesize$(\neg@\imath^t)$};
		\draw (f) -- (i);
		\draw (i) -- (j) node[midway,right] {\footnotesize$(\bot)$};
		\draw (g) -- (h) node[midway,left] {\footnotesize$(\bot)$};
	\end{tikzpicture}\hspace{-8em}
	\begin{tikzpicture}
		\node[draw=none,fill=none,text width=5cm,align=center] (a) at (0,0.25) {$@_{\bm{j}_1}\future(\exists x\varphi)$\\[3pt]
			$\neg@_{\bm{j}_1}\exists x\future\varphi$};
		
		\node[draw=none,fill=none,text width=5cm,align=center] (b) at (0,-1.7) {$@_{\bm{j}_1}\future \bm{j}_2$\\[3pt]
			$@_{\bm{j}_2}\exists x\varphi$};
		
		\node[draw=none,fill=none] (c) at (0,-3.2) {$@_{\bm{j}_2}\varphi[x/a]$};
		
		\node[draw=none,fill=none] (d) at (0,-4.4) {$\neg@_{\bm{j}_1}\future\varphi[x/a]$};
		
		\node[draw=none,fill=none] (e) at (0,-5.4) {$\neg@_{\bm{j}_2}\varphi[x/a]$};
		
		\node[draw=none,fill=none] (f) at (0,-6.4) {$\bot$};
		
		\node[draw=none,fill=none] (g) at (0,-10.5) {\normalsize(c)};

		\draw (a) -- (b) node[midway,right] {\footnotesize$(\future)$};
		\draw (b) -- (c) node[midway,right] {\footnotesize$(\exists)$};
		\draw (c) -- (d) node[midway,right] {\footnotesize$(\neg\exists)$};
		\draw (d) -- (e) node[midway,right] {\footnotesize$(\neg\future)$};
		\draw (e) -- (f) node[midway,right] {\footnotesize$(\bot)$};
		
	\end{tikzpicture}
	\caption{Example proofs conducted in $\TCHFLK$; \Cref{fig::Examples}(a) shows a proof tree for the derivation $@_{\imath \bm{x}W(t,j)}M(t,j,l),\ @_{\imath \bm{x}W(t,j)}\imath \bm{y}B\ \vdash\ @_{\imath \bm{y}B}M(t,j,l)$ from \Cref{ex::Wedding}; \Cref{fig::Examples}(b) presents a proof of the derivability of the rule $(\mathsf{DD})$~$@_{\bm{j}} @_{\imath \bm{x}\varphi}\imath \bm{y}\psi\ ,\ @_{\bm{j}}@_{\imath \bm{x}\varphi}\chi\ / \ @_{\bm{j}} @_{\imath \bm{y}\psi}\chi$ in $\TCHFLK$; \Cref{fig::Examples}(c) displays a proof of the validity of the Barcan formula in $\HFLK$.}
	\label{fig::Examples}
\end{figure}

\begin{example}\label{ex::Wedding}
	We provide a simple example to illustrate the application of our rules for tense definite descriptions\footnote{Examples illustrating the use of ordinary definite descriptions were provided by Indrzejczak and Zawidzki in their previous work~(\citeyear{IndZaw2023}).}. Consider the following valid argument:
	\begin{quote}
		At the year of their wedding Tricia and John moved to London. The wedding day of Tricia and John and the Brexit happened at the same year. Hence they moved to London at the year of Brexit.
	\end{quote}
	It may be formalised in a simplified form (avoiding details not relevant for the validity of this example) in the following way:
	$$@_{\imath \bm{x}W(t,j)}M(t,j,l),\ @_{\imath \bm{x}W(t,j)}\imath \bm{y}B\ \vdash\ @_{\imath \bm{y}B}M(t,j,l).$$
	As shown by the tableau proof displayed in Figure~\ref{fig::Examples}(a), the above reasoning is indeed valid in $\FOHTL$.
\end{example}

\begin{example}
	Let us consider a definite description-counterpart of the rule $(\mathsf{nom})$ defined on nominals, namely the rule:
	$$(\mathsf{DD})\  \dfrac{\parbox{1.9cm}{\centering $@_{\bm{j}} @_{\imath \bm{x}\varphi}\imath \bm{y}\psi$\\$@_{\bm{j}}@_{\imath \bm{x}\varphi}\chi$}}{@_{\bm{j}} @_{\imath \bm{y}\psi}\chi}.$$
	In Figure~\ref{fig::Examples}(b) we show, using $\TCHFLK$, that $(\mathsf{DD})$ is derivable in $\TCHFLK$.	
\end{example}

\begin{example}
	Since in $\FOHLLDK$ we assume that the object domain is common for all time instances, that is, we make the constant domain assumption, the Barcan formula should be valid in this logic. This is indeed the case, which is proved in Figure~\ref{fig::Examples}(c) with the following instance of the Barcan formula:
	$$\future(\exists x\varphi)\rightarrow\exists x\future\varphi.$$
\end{example}

\section{Soundness and Completeness}
\label{sec::Completeness}

In what follows, we will be using two auxiliary results (whose standard proofs by induction on the complexity of $\varphi$ are omitted).

\begin{lemma}[Coincidence Lemma]\label{lem::Coincidence}
	Let $\varphi \in \mathsf{FOR}$, let $\M = (\T, \prec,\D,\I)$ be a tense model, let $\bm{t} \in \T$, and let $\ass_1,\ass_2$ be assignments. If $\ass_1(\mathcal{x})=\ass_2(\mathcal{x})$ for each $\mathcal{x} \in \mathsf{FVAR} \cup \mathsf{TVAR}$ occurring in $\varphi$, then $\M, \bm{t}, \ass_1 \models \varphi$ iff 
	$\M, \bm{t}, \ass_2 \models \varphi$.
\end{lemma}

\begin{lemma}[Substitution Lemma]\label{lem::Substitution}
	Let $\varphi \in \mathsf{FOR}$, $a\in\mathsf{FVAR}$, $\bm{i} \in \mathsf{NOM}$, let $\M = (\T, \prec,\D,\I)$ be a tense model, and let $\bm{t} \in \T$. Then
	$\M, \bm{t}, \ass \models \varphi[x/a]$ iff $\M, \bm{t}, \ass[x\mapsto \ass(a)]\models \varphi$, where $x\in\mathsf{BVAR}$. Similarly, $\M, \bm{t}, \ass \models \varphi[\bm{x}/\bm{i}]$ iff $\M, \bm{t}, \ass[\bm{x}\mapsto\I(\bm{i})]\models \varphi$, where $\bm{x}\in\mathsf{TVAR}$.
\end{lemma}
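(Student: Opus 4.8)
The plan is to establish both biconditionals by a structural induction on $\varphi$, carrying out the object-variable case in full and noting that the tense-variable case follows by a verbatim parallel argument. Write $\ass' := \ass[x\mapsto\ass(a)]$. The whole induction is driven by the following argument-level fact: for every $\eta\in\mathsf{FVAR}\cup\mathsf{CONS}\cup\mathsf{BVAR}$ (recall that, in the metatheory, a bound variable may occur as an argument of an atom) we have $\I_\ass(\eta[x/a]) = \I_{\ass'}(\eta)$. Indeed, if $\eta=x$ then $\eta[x/a]=a$ and both sides equal $\ass(a)$, while if $\eta\neq x$ then $\eta[x/a]=\eta$ and $\ass,\ass'$ agree on $\eta$.

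For the base cases, $\bot$ is trivial, the atomic cases $P(\eta_1,\dots,\eta_n)$ and $\eta_1=\eta_2$ follow immediately from the argument-level fact and the corresponding satisfaction clauses, and for a tense formula $\bm\eta\in\mathsf{TVAR}\cup\mathsf{NOM}$ the substitution $[x/a]$ is vacuous while $\ass,\ass'$ agree on every tense symbol. In the inductive step the Boolean connectives and the modalities $\future,\past$ are routine: $[x/a]$ commutes with the constructor, the assignment is untouched as $\bm t$ ranges over time instances, and the induction hypothesis applies to the immediate subformula(s). The tense constructors $@_{\bm\xi}\psi$, $\imath\bm z\chi$ and $\downarrow_{\bm z}\psi$ are equally smooth: the object substitution commutes past them (acting on $\bm\xi,\chi$ or $\psi$), any tense update $\ass[\bm z\mapsto\bm s]$ commutes with the object update $[x\mapsto\ass(a)]$ because object and tense assignments are independent, and the hypothesis applies to the subformulas — recalling that $\bm\xi\in\mathsf{TFOR}\subseteq\mathsf{FOR}$, so it is itself covered by the hypothesis.

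The delicate cases are the object binders $\exists y\psi$, $(\lambda y\psi)(\eta)$ and $(\lambda y\psi)(\imath z\chi)$, and here the main obstacle is avoiding variable capture; I would split each into two subcases. If the bound variable equals $x$ (e.g.\ $\exists x\psi$), then $x$ has no free occurrence, so $[x/a]$ acts as the identity, whereas $\ass$ and $\ass'$ differ only at $x$, which is not free; the Coincidence Lemma (\Cref{lem::Coincidence}) then equates the two sides. If the bound variable $y$ is distinct from $x$, then — crucially using $\mathsf{BVAR}\cap\mathsf{FVAR}=\emptyset$, so $a\neq y$ and no capture arises — the substitution distributes as $(\exists y\psi)[x/a]=\exists y(\psi[x/a])$; I would apply the hypothesis to $\psi$ under $\ass[y\mapsto o]$, use $(\ass[y\mapsto o])(a)=\ass(a)$, and commute the two independent updates via $\ass[y\mapsto o][x\mapsto\ass(a)]=\ass'[y\mapsto o]$ to recover the clause for $\exists y\psi$ under $\ass'$. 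The two $\lambda$-clauses follow the same template, the only extra bookkeeping being that the designation $o=\I_\ass(\eta)$, and in the description case the uniqueness condition witnessed by $\chi$, are preserved because $\ass,\ass'$ agree on $\eta$ and, by the hypothesis applied to $\chi$, on the satisfaction of $\chi$.

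Finally, the tense-variable statement is obtained by the identical induction, now reading off the argument-level fact $\I_\ass(\bm\eta[\bm x/\bm i]) = \I_{\ass[\bm x\mapsto\I(\bm i)]}(\bm\eta)$ for $\bm\eta\in\mathsf{TVAR}\cup\mathsf{NOM}$. The only binders to watch are $\downarrow_{\bm z}$ and $\imath\bm z$; capture is again impossible since $\bm i\in\mathsf{NOM}$ while the bound symbols lie in $\mathsf{TVAR}$, and the subcase $\bm z=\bm x$ is settled by the Coincidence Lemma exactly as above. I expect the binder step, and specifically the commutation of assignment updates together with the Coincidence-Lemma reconciliation when the substituted variable is itself bound, to be the only point requiring genuine care; everything else is bookkeeping.
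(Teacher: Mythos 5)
Your proof is correct and takes exactly the approach the paper intends: the paper omits the proof of this lemma, describing it only as a ``standard proof by induction on the complexity of $\varphi$'', and your structural induction is precisely that argument carried out in full. The details you supply -- the argument-level fact $\I_\ass(\eta[x/a]) = \I_{\ass'}(\eta)$ for atoms, capture-avoidance via $\mathsf{BVAR}\cap\mathsf{FVAR}=\emptyset$ (and $\mathsf{NOM}$ versus $\mathsf{TVAR}$ in the tense case), commutation of independent assignment updates, and the appeal to the Coincidence Lemma when the binder variable coincides with the substituted variable -- are all sound and are the standard bookkeeping the paper elides.
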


\subsection{Soundness}\label{subsect::Soundndess}

Let $(\mathsf{R})$ $\frac{\Phi}{\Psi_1\mid\ldots\mid\Psi_n}$ be a rule from $\TCHFLK$. We say that $(\mathsf{R})$ is \emph{sound} if whenever $\Phi$ is satisfiable, then $\Phi\cup\Psi_i$ is satisfiable, for some $i \in \{1,\ldots,n\}$. It holds that:

\begin{lemma}\label{lem::SoundRules}
	All rules of $\TCHFLK$ are sound.
\end{lemma}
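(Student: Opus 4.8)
The plan is to establish soundness rule by rule, according to the definition just given: for each rule $(\mathsf{R})$ with premise set $\Phi$ and conclusion sets $\Psi_1\mid\ldots\mid\Psi_n$, I must show that if $\Phi$ is satisfiable then $\Phi\cup\Psi_i$ is satisfiable for at least one $i$. Here ``satisfiable'' means there is a model $\M$, a time instance $\bm{t}$, and an assignment $\ass$ satisfying every formula in the set simultaneously; but since every formula manipulated by the rules is prefixed by a satisfaction operator $@_{\bm{j}}$ (or is a bare equality, which by $(\mathsf{gl})$-style reasoning holds globally), the choice of evaluation point $\bm{t}$ is immaterial and I can speak of a model-plus-assignment pair $(\M,\ass)$ satisfying $\Phi$ at an arbitrary point. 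I would state this normalisation once at the outset so that the per-rule arguments become clean semantic verifications against the satisfaction clauses from \Cref{sec:Preliminaries}.

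First I would group the rules and dispatch the routine ones quickly. The propositional rules $(\neg),(\neg\neg),(\land),(\neg\land)$, the duality rules, and the nominal/bridge rules $(\mathsf{gl}),(\neg\mathsf{gl}),(\mathsf{ref}_{\bm{j}}),(\mathsf{nom}),(\mathsf{bridge})$ follow directly by unwinding the satisfaction clause for $@$ together with the clause $\M,\bm{t},\ass\models\bm{\eta}$ iff $\bm{t}=\I_{\ass}(\bm{\eta})$; these are all non-branching except $(\neg\land)$, whose soundness is just the observation that falsifying a conjunction forces falsifying one conjunct. The temporal rules $(\future),(\neg\future),(\past),(\neg\past)$ require introducing a witnessing time instance: if $@_{\bm{j}}\future\varphi$ holds there is $\bm{s}$ with $\I_{\ass}(\bm{j})\prec\bm{s}$ and $\varphi$ true at $\bm{s}$, and I extend the model's naming by setting a fresh nominal $\bm{i}$ to denote $\bm{s}$ (legitimate precisely because $\bm{i}$ is fresh, so reinterpreting it changes no previously satisfied formula, by the Coincidence Lemma). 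The quantifier rules $(\exists),(\neg\exists)$ and the existential witness rules $(\imath_1^o)$ are handled the same way using the freshness of $a$ together with \Cref{lem::Substitution} to move between the substituted formula $\varphi[x/a]$ and the assignment-shift $\ass[x\mapsto\ass(a)]$. The $\downarrow$-rules, the $\lambda$-rules $(\lambda),(\neg\lambda)$, and the equality rules $(\mathsf{eq}),(\neg\mathsf{eq}),(\mathsf{ref}),(\mathsf{NED}),(\mathsf{RR})$ are likewise immediate from the corresponding clauses, the constant-domain/rigid-constant conventions, and the standard substitutivity of identity.

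The genuinely delicate cases are the description rules, and these are where I expect the real work to lie. For the object rules I would read off the compound satisfaction clause for $(\lambda x\psi)(\imath y\varphi)$: its truth asserts existence of a unique witness $o$ with $\varphi$ and $\psi$ both holding of $o$. Soundness of $(\imath_1^o)$ then uses that witness (naming it by the fresh $a$), $(\imath_2^o)$ uses the uniqueness half of the clause to force $b_1=b_2$ when $\varphi$ holds of both, and the branching rule $(\neg\imath^o)$ requires a genuine case analysis: if $(\lambda x\psi)(\imath y\varphi)$ fails, then either the purported unique object fails $\psi$, or fails $\varphi$, or uniqueness fails (a second witness $a\neq b$ exists) — matching the three conclusion sets, so at least one $\Psi_i$ is satisfiable. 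The temporal description rules $(\imath_1^t),(\imath_2^t),(\neg\imath^t),(@\imath^t),(\neg@\imath^t)$ are the analogues for the clause for $\imath\bm{x}\varphi$ and $@_{\imath\bm{x}\varphi}\psi$; their soundness combines the existence-and-uniqueness structure of the $\imath$-clause with the fresh-nominal reinterpretation technique, and $(\neg\imath^t),(\neg@\imath^t)$ again demand the branching case split. The main obstacle is precisely the bookkeeping in these description clauses: one must verify that the fresh witness (object constant or nominal) can be added to the interpretation without disturbing the satisfaction of the premises $\Phi$ — which is exactly what the Coincidence Lemma guarantees — and that the uniqueness conjunct of each clause is deployed in the right direction. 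Once these semantic verifications are assembled the lemma follows, so I would close by remarking that soundness of the whole calculus is then immediate, since a closed tableau with satisfiable root would force $\bot$ to be satisfiable on some branch, a contradiction.
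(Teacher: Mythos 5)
Your proposal is correct and follows essentially the same route as the paper's proof: a rule-by-rule semantic verification that dispatches the standard propositional, quantifier, modal, and nominal rules as routine, and concentrates the real work on the $\imath$-object, $\imath$-temporal, and $\lambda$-rules, using the Substitution and Coincidence Lemmas together with the fresh-witness (WLOG) reinterpretation and the three-way case split for the branching negative description rules. The only difference is presentational --- the paper cites Bra\"uner for the routine rules rather than sketching them, and keeps the evaluation point $\bm{t}$ explicit instead of your up-front normalisation --- but the substance of the argument is the same.
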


\begin{proof}
	Since closure, propositional, quantifier rules, basic modal, and most nominal rules are standard and proved to be sound elsewhere (see, e.g., Bra\"uner (\citeyear{Brauner2011})), below we only present proofs of soundness of	$\imath$-object rules, $\imath$-temporal rules and $\lambda$-rules.\smallskip
	
	\noindent $(\imath_1^o)$\quad Assume that $@_{\bm{j}}(\lambda x\psi)(\imath y \varphi)$ is satisfiable. It means that there exists a model $\M = (\T, \prec,\D,\I)$, a time instance $\bm{t} \in \T$, and an assignment $\ass$ such that $\M, \bm{t}, \ass \models @_{\bm{j}}(\lambda x \psi)(\imath y \varphi)$. Hence, by the satisfaction condition for $@$-formulas, there exists a time instance $\bm{t}' \in \T$ such that $\I(\bm{j}) = \bm{t}'$ and $\M,\bm{t}',\ass \models (\lambda x \psi)(\imath y \varphi)$. Thus, there is an object $o \in \D$ such that $\M, \bm{t}, \ass[y\mapsto o] \models \varphi$ and $\M, \bm{t}, \ass[x\mapsto o] \models \psi$, and for any $o' \in \D$, if $\M, \bm{t}, \ass[y\mapsto o'] \models \varphi$, then $o' = o$. Without loss of generality let's assume that $a$ is a fresh free variable such that $\ass(a) = o$. By the \hyperref[lem::Substitution]{Substitution Lemma}, we get that $\M, \bm{t}', \ass \models \varphi[y/a],\psi[x/a]$. Finally, it means that $\M, \bm{t}, \ass \models @_{\bm{j}}\varphi[y/a],@_{\bm{j}}\psi[x/a]$, as expected.\smallskip
	
	\noindent $(\imath_2^o)$\quad Assume that $@_{\bm{j}}(\lambda x\psi)(\imath y \varphi)$, $@_{\bm{j}}\varphi[y/b_1]$, and $@_{\bm{j}}\varphi[y/b_2]$ are joint\-ly satisfiable. It means that there exists a model $\M =(\T, \prec,\D,\I)$, a time instance $\bm{t} \in \T$, and an assignment $\ass$ such that $\M, \bm{t}, \ass \models @_{\bm{j}}(\lambda x \psi)(\imath y \varphi), @_{\bm{j}}\varphi[y/b_1], @_{\bm{j}}\varphi[y/b_2]$. Thus, by the satisfaction condition for $@$-formulas we imply that there is a time instance $\bm{t}' \in \T$ such that $\I(\bm{j}) = \bm{t}'$ and $\M , \bm{t}' , \ass \models (\lambda x\psi)(\imath y \varphi), \varphi[y/b_1], \varphi[y/b_2]$. And so, there exists an object $o \in \D$ such that $\M, \bm{t}' ,\ass[x,y\mapsto o] \models \varphi, \psi$ and, for any $o' \in \D$ if $\M, \bm{t}', \ass[y\mapsto o'] \models \varphi$, then $o'=o$. Let $\ass(b_1) = o'$ and $\ass(b_2) = o''$. By the \hyperref[lem::Substitution]{Substitution Lemma}, $\M, \bm{t}', \ass[y\mapsto o' \models \varphi$ and $\M, \bm{t}', \ass[y\mapsto o''] \models \varphi$. Since $x$ does not occur freely in $\varphi$, by the \hyperref[lem::Coincidence]{Coincidence Lemma} we get $\M , \bm{t}', \ass[x\mapsto o,y\mapsto o'] \models \varphi$ and $\M , \bm{t}' , \ass[x\mapsto o,y\mapsto o''] \models \varphi$. By the relevant satisfaction condition we obtain $o' = o$ and $o'' = o$, and so, $o = o' = o''$. As $o = \ass(b_1) = \ass(b_2)$, the respective satisfaction conditions yield $\M , \bm{t}'', \ass \models b_1 = b_2$ for any $\bm{t}'' \in \T$, so in particular, $\M, \bm{t}, \ass \models  b_1=b_2$.\smallskip
	
	\noindent $(\neg \imath^o)$\quad Assume that $\neg @_{\bm{j}}(\lambda x \psi)(\imath y \varphi)$ is satisfiable. Then there exists a model $\M = (\T, \prec,\D,\I)$, a time instance $\bm{t} \in \T$, and an assignment $\ass$ such that $\M , \bm{t}, \ass \models \neg @_{\bm{j}}(\lambda x \psi)(\imath y \varphi)$, and so, $\M , \bm{t}, \ass \not\models @_{\bm{j}}(\lambda x \psi)(\imath y \varphi)$. By the relevant satisfaction condition for $@$-formulas there exists a time instance $\bm{t}' \in \T$ such that $\I(\bm{j}) = \bm{t}'$ and $\M , \bm{t}' , \ass \not\models (\lambda x \psi)(\imath y \varphi)$. Consequently, it means that for all objects $o \in \D$ (at least) one of the following three conditions holds: \begin{enumerate*}[label=(\arabic*)]\item $\M, \bm{t}', \ass[x\mapsto o] \not\models \psi$; \item $\M, \bm{t}', \ass[y\mapsto o] \not\models \varphi$; \item there exists $o'\in\D$ such that $\M , \bm{t}', \ass[y\mapsto o'] \models \varphi$ and $o'\neq o$.\end{enumerate*} Let $b$ be a free variable present on the branch and $\ass(b) = o'$. If (1) holds for $o'$, that is, $\M, \bm{t}', v[x\mapsto o'] \not\models \psi$, then, by the \hyperref[lem::Substitution]{Substitution Lemma}, $\M, \bm{t}', \ass \not\models \psi[x/b]$, whence, by the respective satisfaction condition, we get $\M, \bm{t}', \ass \models \neg \psi[x/b]$ and, subsequently, $\M, \bm{t}, \ass \models @_{\bm{j}}\neg \psi[x/b]$ and $\M, \bm{t}, \ass \models \neg @_{\bm{j}} \psi[x/b]$. Let (2) hold for $o'$, that is, $\M , \bm{t}', v[y\mapsto o'] \not\models \varphi$. By the \hyperref[lem::Substitution]{Substitution Lemma} we get $\M , \bm{t}', \ass \not\models \varphi[y/b]$. By the satisfaction conditions for negation and $@$-formulas we obtain, subsequently, $\M , \bm{t}', \ass \models \neg\varphi[y/b]$, $\M , \bm{t}, \ass \models @_{\bm{j}}\neg\varphi[y/b]$, and $\M , \bm{t}, \ass \models \neg @_{\bm{j}}\varphi[y/b]$. Assume that (3) holds for $o'$, that is, there exists $o''\in\D$ such that $\M, \bm{t}',  \ass[y\mapsto o''] \models \varphi$ and $o'' \neq o'$. Without loss of generality we may assume that there exists $a \in \mathsf{FVAR}$ such that $a$ does not occur freely in $\varphi$ and $\ass(a) = o''$. Since $x$ does not occur freely in $\varphi$, we can apply the \hyperref[lem::Substitution]{Substitution Lemma} twice and from $\M , \bm{t}', \ass[y\mapsto o''] \models \varphi$ obtain $\M, \bm{t}', \ass \models \varphi[y/a]$ and further $\M, \bm{t}, \ass \models @_{\bm{j}}\varphi[y/a]$.\smallskip
	
	\noindent $(\imath_1^t)$\quad Assume that $@_{\bm{j}}\imath\bm{x}\varphi$ is satisfiable. It means that there exists a model $\M = (\T, \prec,\D,\I)$, a time instance $\bm{t} \in \T$, and an assignment $\ass$ such that $\M, \bm{t}, \ass \models @_{\bm{j}}\imath\bm{x}\varphi$. Hence, by the relevant satisfaction conditions for $@$-formulas, there exists a time instance $\bm{t}' \in \T$ such that $\I(\bm{j}) = \bm{t}'$ and $\M,\bm{t}',\ass \models \imath\bm{x}\varphi$, and further, $\M, \bm{t}', \ass[\bm{x}\mapsto\bm{t}'] \models \varphi$. Without loss of generality let's assume that $\bm{i}$ is a fresh nominal such that $\ass(\bm{i}) = \bm{t}'$. By the \hyperref[lem::Substitution]{Substitution Lemma}, we get that $\M, \bm{t}', \ass \models \varphi[\bm{x}/\bm{i}]$. Finally, it means that $\M, \bm{t}, \ass \models @_{\bm{j}} \varphi[\bm{x}/\bm{i}]$, as required.\smallskip
	
	\noindent $(\imath_2^t)$\quad Assume that $@_{\bm{j}_1}\imath\bm{x}\varphi$ and $@_{\bm{j}_2}\varphi[\bm{x}/\bm{j}_2]$ are joint\-ly satisfiable. It means that there exists a model $\M =(\T, \prec,\D,\I)$, a time instance $\bm{t} \in \T$, and an assignment $\ass$ such that $\M, \bm{t}, \ass \models @_{\bm{j}_1}\imath\bm{x}\varphi, @_{\bm{j}_2}\varphi[\bm{x}/\bm{j}_2]$. Thus, by the relevant satisfaction condition for $@$-formulas we imply that there are time instances $\bm{t}',\bm{t}'' \in \T$ such that $\I(\bm{j}_1) = \bm{t}'$, $\I(\bm{j}_2) = \bm{t}''$, $\M , \bm{t}' , \ass \models \imath\bm{x}\varphi$, and $\M , \bm{t}'' , \ass \models \varphi[\bm{x}/\bm{j}_2]$. Further, by the satisfaction condition for $\imath\bm{x}\varphi$, we get that $\M , \bm{t}' , \ass[\bm{x}\mapsto\bm{t}'] \models \varphi$. By the \hyperref[lem::Substitution]{Substitution Lemma} we obtain $\M , \bm{t}'' , \ass[\bm{x}\mapsto\bm{t}''] \models \varphi$, and so, again by the same satisfaction condition as before, it follows that $\bm{t}'=\bm{t}''$. Since we have that $\M, \bm{t}',\ass \models \bm{j}_1$ and $\M, \bm{t}'',\ass \models \bm{j}_2$, by the respective satisfaction conditions we get $\M, \bm{t}', \ass \models \bm{j}_2$ and subsequently, $\M, \bm{t}, \ass \models @_{\bm{j_1}}\bm{j}_2$.\smallskip
	
	\noindent $(\neg \imath^t)$\quad Assume that $\neg @_{\bm{j}}\imath\bm{x}\varphi$ is satisfiable. Then there exists a model $\M = (\T, \prec,\D,\I)$, a time instance $\bm{t} \in \T$, and an assignment $\ass$ such that $\M , \bm{t}, \ass \models \neg @_{\bm{j}}\imath\bm{x}\varphi$, and so, $\M , \bm{t}, \ass \not\models @_{\bm{j}}\imath\bm{x}\varphi$. By the relevant satisfaction condition for $@$-formulas there exists a time instance $\bm{t}' \in \T$ such that $\I(\bm{j}) = \bm{t}'$ and $\M , \bm{t}' , \ass \not\models \imath\bm{x}\varphi$. By the satisfaction condition for $\imath\bm{x}\varphi$ it means either $\M , \bm{t}' , \ass[\bm{x}\mapsto\bm{t}'] \not\models \varphi$ or there exists a time instance $\bm{t}''\in\T$ such that $\M,\bm{t}'',\ass[\bm{x}\mapsto\bm{t}'']\models\varphi$ and $\bm{t}'\neq\bm{t}''$. In the former case, by the \hyperref[lem::Substitution]{Substitution Lemma} we get $\M , \bm{t}' , \ass \not\models \varphi$ and further, by the relevant satisfaction conditions, $\M , \bm{t}' , \ass \models \neg\varphi$, $\M , \bm{t} , \ass \models @_{\bm{j}}\neg\varphi$, and finally, $\M , \bm{t} , \ass \models \neg@_{\bm{j}}\varphi$. In the latter case assume, without loss of generality, that $\bm{i}\in\mathsf{NOM}$ is such that $\I(\bm{i}) = \bm{t}''$. Since $\bm{t}'\neq\bm{t}''$, by the respective satisfaction conditions we get, subsequently, $\M,\bm{t}'\ass\not\models\bm{i}$, $\M,\bm{t}',\ass\models\neg\bm{i}$, $\M,\bm{t},\ass\models@_{\bm{j}}\neg\bm{i}$, and $\M,\bm{t},\ass\models\neg @_{\bm{j}}\bm{i}$. Moreover, by the 
	\hyperref[lem::Substitution]{Substitution Lemma} we obtain $\M,\bm{t}''\ass\models\varphi[\bm{x}/\bm{i}]$, whence, by the relevant satisfaction condition for $@$-formulas, we derive $\M,\bm{t}\ass\models@_{\bm{i}}\varphi[\bm{x}/\bm{i}]$.\smallskip
	
	\noindent $(@\imath^t)$\quad Assume that $@_{\bm{j}}@_{\imath\bm{x}\varphi}\psi$ is satisfiable. It means that there exists a model $\M = (\T, \prec,\D,\I)$, a time instance $\bm{t} \in \T$, and an assignment $\ass$ such that $\M, \bm{t}, \ass \models @_{\bm{j}}@_{\imath\bm{x}\varphi}\psi$. Hence, by the relevant satisfaction conditions for $@$-formulas, there exist time instances $\bm{t}', \bm{t}'' \in \T$ such that $\I(\bm{j}) = \bm{t}'$ and $\M,\bm{t}',\ass \models @_{\imath\bm{x}\varphi}\psi$, and further, $\M, \bm{t}'', \ass \models \imath\bm{x}\varphi,\psi$. Without loss of generality let's assume that $\bm{i}$ is a fresh nominal such that $\ass(\bm{i}) = \bm{t}''$. Then we obtain $\M,\bm{t},\ass\models @_{\bm{i}}\imath\bm{x}\varphi, @_{\bm{i}}\psi$.\smallskip
	
	\noindent $(\neg@\imath^t)$\quad Assume that $\neg@_{\bm{j}_1}@_{\imath\bm{x}\varphi}\psi$ is satisfiable. It means that there exists a model $\M = (\T, \prec,\D,\I)$, a time instance $\bm{t} \in \T$, and an assignment $\ass$ such that $\M, \bm{t}, \ass \models \neg@_{\bm{j}_1}@_{\imath\bm{x}\varphi}\psi$. By the satisfaction condition for $\neg$ we get $\M, \bm{t}, \ass \not\models \neg@_{\bm{j}_1}@_{\imath\bm{x}\varphi}\psi$. Next, by the relevant satisfaction conditions for $@$-formulas, we know that there exists a time instance $\bm{t}' \in \T$ such that $\I(\bm{j}) = \bm{t}'$ and $\M,\bm{t}',\ass \not\models @_{\imath\bm{x}\varphi}\psi$. Let $\bm{t}'' \in \T$ and $\bm{j}_2 \in \mathsf{NOM}$ be such that $\I(\bm{j}_2) = \bm{t}''$. From the satisfaction condition for $@_{\imath\bm{x}\varphi}$ we derive that
	either $\M,\bm{t}'',\ass \not\models \imath\bm{x}\varphi$ or $\M,\bm{t}'',\ass\not\models\psi$. In the former case, by applying the relevant satisfaction conditions we obtain $\M,\bm{t},\ass\not\models @_{\bm{j}_2}\imath\bm{x}\varphi$, and finally, $\M,\bm{t},\ass\models \neg@_{\bm{j}_2}\imath\bm{x}\varphi$. In the latter case, by applying the same satisfaction conditions we derive $\M,\bm{t},\ass\not\models @_{\bm{j}_2}\psi$, and finally, $\M,\bm{t},\ass\models \neg@_{\bm{j}_2}\psi$, as expected.\smallskip
	
	\noindent $(\lambda)$\quad Let $b$ be a free variable present on the branch. Assume that $@_{\bm{j}}(\lambda x \psi)(b)$ is satisfiable. Then there exists a model $\M = (\T, \prec,\D,\I)$, a time instance $\bm{t} \in \T$, and an assignment $\ass$ such that $\M, \bm{t}, \ass \models @_{\bm{j}}(\lambda x \psi)(b)$. By the relevant satisfaction condition for $@$-formulas it holds that there exists a state $\bm{t}' \in \T$ such that $\I(i) = \bm{t}'$ and $\M, \bm{t}', \ass \models (\lambda x \psi)(b)$. By the respective satisfaction condition it means that $\ass(b) = o$, for some $o \in \D$, and $\M, \bm{t}', \ass[x\mapsto o] \models \psi$. By the \hyperref[lem::Substitution]{Substitution Lemma} it holds that $\M, \bm{t}', \ass \models \psi[x/b]$, hence $\psi[x/b]$, and thus $@_{\bm{j}} \psi[x/b]$, are satisfiable.\smallskip
	
	\noindent $(\neg\lambda)$\quad Let $b$ be a parameter present on the branch. Assume that $\neg @_{\bm{j}}(\lambda x \psi)(b)$ is satisfiable. Then there exists a model $\M = (\T, \prec,\D,\I)$, a time instance $\bm{t} \in \T$, and an assignment $\ass$ such that $\M, \bm{t}, \ass \models \neg @_{\bm{j}}(\lambda x \psi)(b)$. By the relevant satisfaction condition for $@$-formulas it means that there is a time instance $\bm{t}' \in \T$ such that $\I(i) = \bm{t}'$ and $\M ,\bm{t}', \ass \not\models (\lambda x \psi)(b)$. Assume that $\ass(b) = o$ for some $o \in \D$. Then by the respective satisfaction condition $\M, \bm{t}', \ass[x\mapsto o] \not\models \varphi$. By the \hyperref[lem::Substitution]{Substitution Lemma} we get that $\M, \bm{t}', \ass \not\models \psi[x/b]$. Again, by the satisfaction condition for negation it follows that $\M, \bm{t}', \ass \models \neg\psi[x/b]$, and finally, $\M,\bm{t}, \ass \models @_{\bm{j}}\neg\psi[x/b]$.
\end{proof}

Now we are ready to prove the following theorem:

\begin{theorem}[Soundness]\label{thm::Soundness}
	The tableau calculus $\TCHFLK$ is sound.
\end{theorem}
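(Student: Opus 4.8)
The plan is to prove soundness by contraposition, reducing the whole argument to the rule-by-rule analysis already carried out in \Cref{lem::SoundRules}. Call a branch $\B$ of a tableau \emph{satisfiable} if the set of formulas occurring on $\B$ is jointly satisfiable, that is, there are a model $\M$, a time instance $\bm{t}$, and an assignment $\ass$ with $\M, \bm{t}, \ass \models \chi$ for every $\chi \in \B$; call a tableau satisfiable if at least one of its branches is. First I would show that if $\varphi$ is \emph{not} valid, then the initial tableau consisting of the single root $\neg @_{\bm{i}}\varphi$ is satisfiable. Indeed, non-validity of $\varphi$ gives $\M, \bm{t}, \ass \not\models \varphi$ for some $\M, \bm{t}, \ass$. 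Since $\bm{i}$ does not occur in $\varphi$, reinterpreting $\bm{i}$ so that $\I(\bm{i}) = \bm{t}$ leaves the value of $\varphi$ unchanged (a routine induction, entirely analogous to the \hyperref[lem::Coincidence]{Coincidence Lemma} but for the interpretation of nominals); the satisfaction clause for $@$ then yields $\M, \bm{s}, \ass \models \neg @_{\bm{i}}\varphi$ at every time instance $\bm{s}$, because $@_{\bm{i}}\varphi$ holds exactly when $\varphi$ holds at $\I(\bm{i}) = \bm{t}$.

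The core of the argument is the invariant, proved by induction on the construction of the tableau, that \emph{if the root is satisfiable, then every tableau obtained from it contains at least one satisfiable branch}. The base case is the observation of the previous paragraph. For the inductive step, suppose a rule $(\mathsf{R})$ $\frac{\Phi}{\Psi_1\mid\ldots\mid\Psi_n}$ is applied to the tableau. If it is applied to a branch other than the satisfiable one, that satisfiable branch simply persists. If it is applied to premises $\Phi$ lying on the satisfiable branch $\B$, witnessed by $(\M, \bm{t}, \ass)$, then since $\Phi \subseteq \B$ the same triple satisfies $\Phi$, so by \Cref{lem::SoundRules} some $\Phi \cup \Psi_i$ is satisfiable. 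The point requiring care is that I need the \emph{same} extended branch $\B \cup \Psi_i$ to remain satisfiable, not merely $\Phi \cup \Psi_i$. This is exactly what the proofs in \Cref{lem::SoundRules} deliver: the witness they construct modifies $(\M, \ass)$ only on the fresh symbol introduced by the rule — a fresh free variable $a$ in $(\exists)$, $(\neg\exists)$, $(\neg\imath^o)$, or a fresh nominal $\bm{i}$ in $(\future)$, $(\past)$, $(\imath_1^t)$, $(\neg\imath^t)$, $(@\imath^t)$ — leaving the interpretation of every symbol already on $\B$ untouched. Hence, by the \hyperref[lem::Coincidence]{Coincidence Lemma} (and its evident analogue for nominals), all formulas already present on $\B$ remain satisfied, so $\B \cup \Psi_i$ survives as a satisfiable branch of the expanded tableau.

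It remains to connect the invariant with closure. A satisfiable branch can never be closed, since a closed branch contains $\bot$, which is unsatisfiable by the very first satisfaction clause ($\M, \bm{t}, \ass \not\models \bot$), so no $(\M, \bm{t}, \ass)$ can satisfy all its formulas. Consequently, if the root is satisfiable, then no tableau built from it is closed, because it always retains an open — indeed satisfiable — branch. Taking the contrapositive of the entire chain: if there exists a closed tableau with root $\neg @_{\bm{i}}\varphi$, i.e.\ if $\varphi$ is tableau-valid, then $\neg @_{\bm{i}}\varphi$ is unsatisfiable, and therefore, by the first paragraph, $\varphi$ is valid. This is precisely the assertion that $\TCHFLK$ is sound.

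The main obstacle, and the only place where more than bookkeeping is needed, is the passage from the rule-local statement of \Cref{lem::SoundRules} — phrased only about $\Phi$ and $\Psi_i$ — to the branch-global statement required by the induction, about $\B$ and $\B \cup \Psi_i$. Making this rigorous amounts to verifying, rule by rule, that the freshness of the newly introduced variable or nominal guarantees that assigning it the witnessing object or time instance supplied by \Cref{lem::SoundRules} cannot clash with any commitment already recorded on $\B$, after which the \hyperref[lem::Coincidence]{Coincidence Lemma} does the rest. For the rules carrying freshness side conditions this is where that requirement earns its keep, so I would check explicitly that the fresh symbol can indeed be given the value demanded by the witness without disturbing the rest of the branch.
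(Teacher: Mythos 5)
Your proposal is correct and takes essentially the same approach as the paper: both reduce soundness to the rule-by-rule satisfiability preservation of \Cref{lem::SoundRules} and propagate (un)satisfiability along the tableau --- yours in contrapositive form via a satisfiable-branch invariant, the paper's by marking unsatisfiable formula sets from the closed leaves up to the root $\neg @_{\bm{j}}\varphi$. The extra care you take over the rule-local versus branch-global reading of \Cref{lem::SoundRules} (resolved by freshness plus the \hyperref[lem::Coincidence]{Coincidence Lemma}) and over reinterpreting the fresh root nominal makes explicit two details the paper's brief proof leaves implicit, but the underlying argument is the same.
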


\begin{proof} Let $\varphi$ be a $\FOHLLDK$-formula. Let $\Tab$ be a $\TCHFLK$-proof of $\varphi$. Each branch of $\Tab$ is closed. By Lemma~\ref{lem::SoundRules} all the rules of $\TCHFLK$ preserve satisfiability, and so, going from the bottom to the top of $\Tab$, we start from unsatisfiable leafs and mark sets of formulas labelling subsequent preceding nodes as unsatisfiable, eventually reaching the root, where we have $\neg @_{\bm{j}} \varphi$. Since it is unsatisfiable, too, we obtain that $\varphi$ is valid.
\end{proof}

\subsection{Completeness}
\label{subsect::Completeness}

In this section, we show that $\TCHFLK$ is complete with respect to the semantics of $\FOHLLDK$, that is, that all valid $\FOHLLDK$ formulas have $\TCHFLK$-proofs. To this end, we show the contrapositive, i.e., that if applying $\TCHFLK$ to a formula $\neg@_{\bm{j}}\varphi$ generates a tableau with an open and fully expanded branch $\B$, i.e., a tableau which is not a proof of $\varphi$, then there exists a model satisfying $\neg\varphi$ and this model can be constructed by using information stored on $\B$.


Now, for the remainder of this section assume that $\Tab$ is a $\TCHFLK$-tableau with $\neg@_{\bm{j}}\varphi$ at the root and $\B$ is an open and fully expanded branch of $\Tab$. Let $\mathsf{FVAR}_\B$, $\mathsf{CONS}_\B$, and $\mathsf{NOM}_\B$ be, respectively, the sets of all free variables, individual constants, and nominals occurring on $\B$. Below we define relations $\sim_\B \subseteq (\mathsf{FVAR}_\B\cup\mathsf{CONS}_\B)^2$ and $\approx_\B \subseteq \mathsf{NOM}_\B^2$. Let $b_1,b_2 \in \mathsf{PAR}_\B\cup\mathsf{CONS}_\B$ and $\bm{j}_1,\bm{j}_2 \in \mathsf{NOM}_\B$. Then:
\noindent\begin{align}
	\label{rel::P}	b_1 \sim_\B b_2 &\quad\text{iff}\quad b_1=b_2 \in \B\tag{$\mathsf{P}$}\\
	\label{rel::N}	\bm{j}_1 \approx_\B \bm{j}_2 &\quad\text{iff}\quad @_{\bm{j}_1} \bm{j}_2 \in \B.\tag{$\mathsf{N}$}
\end{align}
The proposition below points to a useful property of $\sim_\B$ and $\approx_\B$ which we will rely on in the further construction of a model.

\begin{proposition}\label{prop::Equivalence}
	\begin{enumerate}
		\item $\sim_\B$ is an equivalence relation on $\mathsf{PAR}_\B\cup\mathsf{CONS}_\B$.
		\item $\approx_\B$ is an equivalence relation on $\mathsf{NOM}_\B$.
	\end{enumerate}
\end{proposition}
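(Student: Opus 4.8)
The plan is to verify reflexivity, symmetry, and transitivity separately for each of the two relations, in each case reading off the required membership facts from the closure of $\B$ under the rules of $\TCHFLK$. The whole argument rests on the hypothesis that $\B$ is \emph{fully expanded}: every applicable rule instance has already been fired, so that its conclusions already lie on $\B$. Thus the task reduces to exhibiting, for each defining property, a rule instance whose premises are available on $\B$ and whose conclusion is exactly the formula we need.

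For $\sim_\B$, reflexivity is immediate. Every $b \in \mathsf{FVAR}_\B \cup \mathsf{CONS}_\B$ has occurred on $\B$, so $(\mathsf{ref})$ (together with the domain-nonemptiness rule $(\mathsf{NED})$ supplying a free variable when needed) is applicable, and hence by full expansion $b = b \in \B$, i.e.\ $b \sim_\B b$. Symmetry and transitivity I would obtain entirely from the replacement rule $(\mathsf{RR})$. Given $b_1 \sim_\B b_2$, that is $b_1 = b_2 \in \B$, take the host formula to be $b_1 = b_1$ (available by reflexivity) and apply $(\mathsf{RR})$ with the equation $b_1 = b_2$, replacing the first occurrence of $b_1$, to obtain $b_2 = b_1 \in \B$. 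For transitivity, from $b_1 = b_2 \in \B$ and $b_2 = b_3 \in \B$, apply $(\mathsf{RR})$ to the host formula $b_1 = b_2$ using the equation $b_2 = b_3$, replacing the right-hand occurrence of $b_2$, to obtain $b_1 = b_3 \in \B$.

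For $\approx_\B$ the structure is parallel, but the rules invoked are the nominal rules. Reflexivity follows from $(\mathsf{ref}_{\bm{j}})$: any $\bm{j} \in \mathsf{NOM}_\B$ has appeared on $\B$, hence $@_{\bm{j}} \bm{j} \in \B$. Symmetry uses $(\mathsf{nom})$: from $@_{\bm{j}_1} \bm{j}_2 \in \B$ together with the reflexivity fact $@_{\bm{j}_1} \bm{j}_1 \in \B$, the instance of $(\mathsf{nom})$ with $\varphi := \bm{j}_1$ yields $@_{\bm{j}_2} \bm{j}_1 \in \B$. Transitivity chains these: from $@_{\bm{j}_1} \bm{j}_2 \in \B$ and $@_{\bm{j}_2} \bm{j}_3 \in \B$, first apply symmetry to get $@_{\bm{j}_2} \bm{j}_1 \in \B$, then use $(\mathsf{nom})$ with premises $@_{\bm{j}_2} \bm{j}_1$ and $@_{\bm{j}_2} \bm{j}_3$ and $\varphi := \bm{j}_3$ to obtain $@_{\bm{j}_1} \bm{j}_3 \in \B$.

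The only genuinely delicate point is the handling of $(\mathsf{RR})$. Because that rule replaces only \emph{some} occurrences of the variable, I must check that the particular targeted replacement (the first argument for symmetry, the right-hand argument for transitivity) is indeed among the conclusions secured by full expansion, and that $(\mathsf{RR})$ is genuinely applicable — i.e.\ the replaced symbol actually occurs in the host formula, which here it trivially does. I would also note the side conditions on the zero-premise rules $(\mathsf{ref})$, $(\mathsf{ref}_{\bm{j}})$, and $(\mathsf{NED})$, namely that the relevant constant or nominal must already occur on $\B$; this is exactly what membership in $\mathsf{FVAR}_\B \cup \mathsf{CONS}_\B$ or $\mathsf{NOM}_\B$ guarantees. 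Everything else is routine bookkeeping.
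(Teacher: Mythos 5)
Your proof is correct and follows essentially the same route as the paper's: reflexivity via $(\mathsf{ref})$ and $(\mathsf{ref}_{\bm{j}})$, symmetry and transitivity of $\sim_\B$ via $(\mathsf{RR})$ applied to the hosts $b_1=b_1$ and $b_1=b_2$ respectively, and symmetry and transitivity of $\approx_\B$ via the very same instances of $(\mathsf{nom})$ (with $\varphi:=\bm{j}_1$, then $\varphi:=\bm{j}_3$ after deriving $@_{\bm{j}_2}\bm{j}_1$). The only cosmetic differences are your unnecessary mention of $(\mathsf{NED})$ for reflexivity and the paper's explicit (but dispensable) case split in the transitivity of $\sim_\B$ when two of the terms coincide syntactically.
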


\begin{proof}
	\emph{Reflexivity} of $\sim_\B$ follows from the expandedness of $\B$ and the presence of $(\mathsf{ref})$ in $\TCHFLK$. \emph{Relexivity} of $\approx_\B^\mathsf{NOM}$ is a consequence of the expandedness of $\B$, which results in $@_{\bm{j}} \bm{j}$ being present on $\B$ for each $\bm{j} \in \mathsf{NOM}_\B$ (thanks to $(\mathsf{ref}_{\bm{j}})$). For the \emph{symmetry} of $\sim_\B$ assume that $b_1 = b_2 \in \B$. By the expandedness of $\B$ we know that $(\mathsf{ref})$ has been applied to $b_1$ and $b_2$, yielding $b_1=b_1,b_2=b_2 \in \B$. A single application of $(\mathsf{RR})$ to $b_1=b_1,b_1=b_2 \in \B$ results in $b_2=b_1 \in \B$. To prove the \emph{symmetry} of $\approx_\B$ assume that $\bm{j}_1 \approx_\B \bm{j}_2$. Then $@_{\bm{j}_1}\bm{j}_2 \in \B$. Since $\B$ is fully expanded, $@_{\bm{j}_1} \bm{j}_1 \in \B$ by $(\mathsf{ref}_{\bm{j}})$. Then $(\mathsf{nom})$ must have been applied to $@_{\bm{j}_1}\bm{j}_2, @_{\bm{j}_1}\bm{j}_1 \in \B$, thus yielding $@_{\bm{j}_2}\bm{j}_1 \in \B$. For the \emph{transitivity} of $\sim_\B$ assume that $b_1,b_2,b_3 \in \mathsf{PAR}_\mathsf{B}\cup\mathsf{CONS}_\B$, $b_1 \sim_\B b_2$, and $b_2 \sim_\B b_3$. If $b_1$ is identical to $b_2$ or $b_2$ is identical to $b_3$, we immediately obtain $b_1\sim_\B b_3$. If $b_1, b_2, b_3$ are pairwise distinct, the identities $b_1 = b_2$ and $b_2 = b_3$ must have occurred on $\B$. A single application of $(\mathsf{RR})$ to both of them yields $b_1 = b_3$ and since $\B$ is fully expanded, $b_1 = b_3 \in \B$. Hence, $b_1 \sim_\B b_3$. For the \emph{transitivity} of $\approx_\B$ assume that $\bm{j}_1 \approx_\B \bm{j}_2$ and $\bm{j}_2 \approx_\B \bm{j}_3$. Then $@_{\bm{j}_1}\bm{j}_2, @_{\bm{j}_2}\bm{j}_3 \in \B$. By the argument used in the proof of symmetry of $\approx_\B$, we know that $@_{\bm{j}_2}\bm{j}_1 \in \B$. Applying $(\mathsf{nom})$ to $@_{\bm{j}_2}\bm{j}_1, @_{\bm{j}_2}\bm{j}_3 \in \B$ gives us $@_{\bm{j}_1}\bm{j}_3 \in \B$, and so, $\bm{j}_1 \approx_\B \bm{j}_3$.
\end{proof}

We will now show how to use the data stored on $\B$ to construct the \emph{branch structure} $\M_\B = (\T_\B, \prec_\B, \D_\B, \I_\B)$ and \emph{branch assignment} $\ass_\B$. Let $\mathsf{PRED}_\B$, $\mathsf{BVAR}_\B$, and $\mathsf{TVAR}_\B$ denote the sets of all predicate symbols occurring on $\B$, all (bound) variables occuring on $\B$, and all tense variables occurring on $\B$, respectively.  We define $\M_\B$ in the following way:
\begin{itemize}
	\item $\T_\B$ is the set of all equivalence classes of $\approx_\B$ over $\mathsf{NOM}_\B$;
	\item For any $\bm{t}_1,\bm{t}_2 \in \T_\B$, $\bm{t}_1\prec_\B\bm{t}_2$ if and only if there exist $\bm{j}_1,\bm{j}_2 \in \mathsf{NOM}_\B$ such that $\bm{j}_1 \in \bm{t}_1, \bm{j}_2\in\bm{t}_2$ and $@_{\bm{j}_1}\future \bm{j}_2 \in \B$;
	\item $\D_\B$ is the set of all equivalence classes of $\sim_\B$ over $\mathsf{FVAR}_\B\cup\mathsf{CONS}_\B$;
	\item For any $\bm{j} \in \mathsf{NOM}_\B$, $\I_\B(\bm{j}) = \bm{t}$, for $\bm{t} \in \T_\B$ such that $\bm{j} \in \bm{t}$;
	\item For any $i \in \mathsf{CONS}_\B$, $\I_\B(i) = o$, for $o \in \D_\B$ such that $i \in o$;
	\item For any $n \in \mathbb{N}^+$, $n$-ary predicate symbol $P \in \mathsf{PRED}_\B$ and $\bm{t} \in \T_\B$, $I_\B(P,\bm{t}) = \{\langle o_1,\ldots,o_n\rangle \in (\D_{\B})^n\mid @_{\bm{j}} P(b_1,\ldots,b_n) \in \B \text{ and } \bm{j} \in \bm{t} \text{ and } b_1\in o_1,\ldots,b_n\in o_n\}$;
\end{itemize}
Let $\bm{j_0}$ be an arbitrarily chosen element of $\T_\B$ and let $o_0$ be an arbitrarily chosen element of $\D_\B$. Since $\B$ is an open branch for a formula $@_{\bm{j}}\varphi$, we are guaranteed that $\T_\B$ is non-empty, and therefore, such $\bm{j}_0$ can be picked. Moreover, thanks to the rule $(\mathsf{NED})$ $\D_\B$ is also non-empty and the existence of $o_0$ is secured too. By the branch assignment $\ass_\B$ we will understand a function $\ass_\B: \mathsf{FVAR}_\B \cup \mathsf{BVAR}_\B \cup \mathsf{TVAR}_\B \longrightarrow \D_\B \cup \T_\B$ defined as follows:
\begin{itemize}
	\item For any $x \in \mathsf{BVAR}_\B$, $\ass_\B(x) = o_0$;
	\item For any $b \in \mathsf{FVAR}_\B$, $\ass_\B(b) = o \in \D_\B$ if and only if $b \in o$;
	\item For any $\bm{x} \in \mathsf{TVAR}_\B$, $\ass(\bm{x}) = \bm{j}_0$.
\end{itemize}
Note that it is not possible that, for some $b_1,b_2\in\mathsf{FVAR}_\B\cup\mathsf{CONS}_\B$, $b_1=b_2, b_2\neq b_1 \in \B$. For assume the contrary. Then, after a single application of $(\mathsf{RR})$ to the above-mentioned pair of formulas, we would obtain $b_2\neq b_2 \in \B$, which, together with $b_2=b_2 \in \B$ (thanks to $(\mathsf{ref})$ and the expandedness of $\B$) would close $\B$. Consequently, for any $o \in \D_\B$, any $b_1,b_2 \in o$ and any $\varphi$, $\varphi[b_1] \in \B$ if and only if $\varphi[b_1/\!/b_2] \in \B$. Moreover, it cannot be the case that there exist $\bm{j}_1, \bm{j}_2 \in \mathsf{NOM}_\B$ such that $\bm{j}_1 \approx_\B \bm{j}_2$ and $\neg @_{\bm{j}_1} \bm{j}_2 \in \B$. Indeed, if $\bm{j}_1 \approx_\B \bm{j}_2$, then $@_{\bm{j}_1}\bm{j}_2 \in \B$ (we use the argument from the symmetry proof of $\approx_\B$), and so, the branch would immediately close. It is also impossible that there exist $\bm{j}_1,\bm{j}_2,\bm{j}_3 \in \mathsf{NOM}_\B$ such that $\bm{j}_2 \approx_\B \bm{j}_3$, $@_{\bm{j}_1} \future \bm{j}_2$, and $\neg@_{\bm{j}_1} \future \bm{j}_3$. If it were the case, then $@_{\bm{j}_2} \bm{j}_3$ would have to be present on $\B$. Since $\B$ is fully expanded, $(\mathsf{bridge})$ would have been applied to $@_{\bm{j}_2}\bm{j}_3,@_{\bm{j}_1} \future \bm{j}_2 \in \B$ resulting in $@_{\bm{j}_1}\future \bm{j}_3 \in \B$ and closing $\B$. Finally, by the definition of $\T_\B$ and $\D_\B$, $\ass_\B$ is defined on the whole domain.  Thus, $\M_\B$ is a well-defined model and $\ass_\B$ is a well-defined assignment.

Let $\mathsf{FOR}_\B$ be the set of all formulas $\psi$ such that $@_{\bm{j}}\psi \in \B$ for some $\bm{j} \in \mathsf{NOM}_\B$. Below we make an observation that will be of use in the remainder of the section.

\begin{fact}
	Let $b_1,b_2\in\mathsf{FVAR}_\B\cup\mathsf{CONS}_\B$ be such that $b_1\sim_\B b_2$, let $\bm{j},\bm{j}_1,\bm{j}_2\in\mathsf{NOM}_\B$ be such that $\bm{j}_1 \approx_\B \bm{j}_2$, and let $\psi \in \mathsf{FOR}_\B$. Then:
	\begin{enumerate}
		\item $@_{\bm{j}}\psi \in \B$ if and only if $@_{\bm{j}}\psi[b_1/\!/b_2] \in \B$;
		\item $@_{\bm{j}_1}\psi \in \B$ if and only if $@_{\bm{j}_2}\psi \in \B$.
	\end{enumerate}
\end{fact}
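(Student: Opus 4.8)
The plan is to derive each biconditional directly from full expandedness of $\B$ together with the appropriate structural rule and the symmetry already established in \Cref{prop::Equivalence}. Both items are symmetric in their two sides, so in each case I would prove one direction by a single rule application and then recover the converse by first invoking symmetry of the relevant equivalence relation.

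For item~(1) the tool is the replacement rule $(\mathsf{RR})$. Since $b_1 \sim_\B b_2$ means $b_1 = b_2 \in \B$, and $b_1, b_2$ are object-level symbols that cannot occur as the nominal subscript $\bm{j}$, any replacement of occurrences of $b_1$ by $b_2$ inside $@_{\bm{j}}\psi$ affects only $\psi$; hence $(@_{\bm{j}}\psi)[b_1/\!/b_2] = @_{\bm{j}}(\psi[b_1/\!/b_2])$. Assuming $@_{\bm{j}}\psi \in \B$, the rule $(\mathsf{RR})$ is applicable to $@_{\bm{j}}\psi$ and $b_1 = b_2$, and by full expandedness of $\B$ its conclusion $@_{\bm{j}}\psi[b_1/\!/b_2]$ lies on $\B$ (if $b_1$ does not occur in $\psi$, then $\psi[b_1/\!/b_2]$ is literally $\psi$ and the claim is immediate). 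For the converse I would use symmetry of $\sim_\B$ (\Cref{prop::Equivalence}, item~1) to obtain $b_2 = b_1 \in \B$, and then apply $(\mathsf{RR})$ to $@_{\bm{j}}\psi[b_1/\!/b_2]$, replacing back precisely those occurrences of $b_2$ introduced by the forward step, so as to recover $@_{\bm{j}}\psi$.

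For item~(2) the tool is the rule $(\mathsf{nom})$. Here $\bm{j}_1 \approx_\B \bm{j}_2$ gives $@_{\bm{j}_1}\bm{j}_2 \in \B$. Assuming $@_{\bm{j}_1}\psi \in \B$, a single application of $(\mathsf{nom})$ to $@_{\bm{j}_1}\bm{j}_2$ and $@_{\bm{j}_1}\psi$ produces $@_{\bm{j}_2}\psi$, which is on $\B$ by full expandedness. For the converse I would invoke symmetry of $\approx_\B$ (\Cref{prop::Equivalence}, item~2), established there via $(\mathsf{ref}_{\bm{j}})$ and $(\mathsf{nom})$, to get $@_{\bm{j}_2}\bm{j}_1 \in \B$, and then apply $(\mathsf{nom})$ to $@_{\bm{j}_2}\bm{j}_1$ and $@_{\bm{j}_2}\psi$ to obtain $@_{\bm{j}_1}\psi$.

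The one delicate point, and the step I expect to require the most care, is the converse of item~(1). Because $\psi[b_1/\!/b_2]$ denotes replacing \emph{some} (not necessarily all) occurrences, the formula $\psi[b_1/\!/b_2]$ may already contain occurrences of $b_2$ that were present in $\psi$ independently of the replacement; to invert the replacement one must apply $(\mathsf{RR})$ targeting exactly the occurrences of $b_2$ created by the forward step. This is licensed precisely because $[\cdot/\!/\cdot]$ permits replacing an arbitrary subset of occurrences. A clean way to phrase the argument is to fix a witnessing set of positions for the forward replacement and reason about those positions explicitly; the remaining rule applications are then routine consequences of $\B$ being fully expanded.
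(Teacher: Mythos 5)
Your proof is correct; the paper actually states this Fact \emph{without} proof (it is presented as an observation, the object-level replacement property having effectively been asserted already in the paragraph preceding it), and your argument via $(\mathsf{RR})$, $(\mathsf{nom})$, the symmetry established in \Cref{prop::Equivalence}, and full expandedness of $\B$ is precisely the intended one. Your care on the converse of item (1)---applying $(\mathsf{RR})$ with $b_2=b_1$ to invert exactly those occurrences of $b_2$ introduced by the forward replacement, which is licensed because $[\cdot/\!/\cdot]$ permits replacing an arbitrary subset of occurrences---is the right way to make the omitted argument rigorous.
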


The next lemma is the pillar of the completeness theorem concluding this section

\begin{lemma}
	\label{lem::BranchModel}
	Let $\M_\B$ and $\ass_\B$ be defined as above. Then for any $\psi \in \mathsf{FOR}_\B$, $\bm{j}\in\mathsf{NOM}_\B$, and $\bm{t} \in \T_\B$ such that $\bm{j} \in \bm{t}$:
	\begin{enumerate}
		\item if $@_{\bm{j}}\psi \in \B$, then $\M_\B, \bm{t}, \ass_\B \models \psi$;\\
		\item if $\neg@_{\bm{j}}\psi \in \B$, then $\M_\B, \bm{\bm{t}}, \ass_\B \not\models\psi$.
	\end{enumerate}
\end{lemma}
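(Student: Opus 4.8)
The plan is to prove both claims simultaneously by induction on a complexity measure $c(\psi)$ counting the logical operators ($\neg,\land,\exists,\future,\past,@,\downarrow,\lambda,\imath$) occurring in $\psi$, where uniform substitution of a free variable, constant, or nominal for a variable is stipulated to leave $c$ unchanged. This is the right measure because in every rule of $\TCHFLK$ each conclusion arises from a premise either by stripping one operator or by such a substitution, so every conclusion has strictly smaller $c$-value than the premise it is read off, and this is exactly what licenses appeals to the induction hypothesis. The base cases are the atomic members of $\mathsf{FOR}_\B$: predicate atoms $P(b_1,\dots,b_n)$, identities $b_1=b_2$, and nominals $\bm{j}_2$. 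For these the two claims follow directly from the definitions of $\I_\B$, $\prec_\B$, and $\ass_\B$, together with \Cref{prop::Equivalence} (so that the relevant equivalence classes are well defined), the preceding Fact (invariance of membership on $\B$ under $\sim_\B$ and $\approx_\B$), and the openness of $\B$, which rules out the simultaneous presence of $@_{\bm{j}}P(\bar b)$ and $\neg@_{\bm{j}}P(\bar b)$, or of $b_1=b_2$ and $b_1\neq b_2$.

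For the non-description operators I would run the usual saturation argument. Since $\B$ is fully expanded, whenever a non-atomic $@_{\bm{j}}\psi$ (resp.\ $\neg@_{\bm{j}}\psi$) lies on $\B$, the rule matching its principal operator has already been applied, so its conclusions are present on $\B$; for a branching rule at least one conclusion set is present on the open branch. One then rewrites the resulting hybrid formulas using \Cref{lem::Substitution} and \Cref{lem::Coincidence} and invokes the induction hypothesis on the strictly simpler conclusions. Concretely, $(\neg)$/$(\neg\neg)$ handle negation, $(\land)$/$(\neg\land)$ conjunction, $(\exists)$/$(\neg\exists)$ the quantifier (the fresh witness $a$ supplying the object $\ass_\B(a)$), $(\future)$/$(\neg\future)$ and $(\past)$/$(\neg\past)$ the tense operators via $\prec_\B$ (here the closure observations recorded before the Fact, in particular the $(\mathsf{bridge})$-argument, guarantee that $\prec_\B$ respects $\approx_\B$), $(\mathsf{gl})$/$(\neg\mathsf{gl})$ the satisfaction operator applied to a nominal, $(\downarrow)$/$(\neg\!\downarrow)$ the binder, and $(\lambda)$/$(\neg\lambda)$ a $\lambda$-atom whose argument is a variable or constant.

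The genuinely new cases are the definite descriptions. For the existence part of a positive object description, $@_{\bm{j}}(\lambda x\psi)(\imath y\varphi)\in\B$, the rule $(\imath_1^o)$ places $@_{\bm{j}}\varphi[y/a]$ and $@_{\bm{j}}\psi[x/a]$ on $\B$ for a fresh $a$, and the induction hypothesis with \Cref{lem::Substitution} makes $o:=\ass_\B(a)$ a witness satisfying both $\varphi$ and $\psi$; the temporal analogue uses $(\imath_1^t)$ and the instance $\bm{t}$ itself. The negative descriptions are comparatively smooth: $(\neg\imath^o)$ is applied for every term $b$ occurring on $\B$, so for each candidate object one of its three disjuncts sits on $\B$, and a short case analysis (reading $a\neq b\in\B$ through claim~(2) for identities) shows that no object can be a $\psi$-satisfying unique $\varphi$-witness. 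Symmetrically, $(\neg\imath^t)$ yields either the failure of $\varphi$ at the current instance or a distinct instance at which $\varphi$ holds, so $\imath\bm{x}\varphi$ has no unique realiser, while $(\neg@\imath^t)$, applied for every nominal, defeats every candidate time instance for $@_{\imath\bm{x}\varphi}\psi$; the positive $@_{\imath\bm{x}\varphi}\psi$ case is discharged by $(@\imath^t)$, which produces a fresh nominal naming a witnessing instance.

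The main obstacle, as expected, is the \emph{uniqueness} clause in the positive description cases. To conclude that the witness $o=\ass_\B(a)$ (resp.\ the instance $\bm{t}$) is the \emph{only} $\varphi$-satisfier in $\D_\B$ (resp.\ $\T_\B$), I would feed $(\imath_2^o)$ (resp.\ $(\imath_2^t)$) a second $\varphi$-instance, whereupon the rule forces the two witnesses to be identified on $\B$ and hence to collapse into one equivalence class. The delicate step is that invoking $(\imath_2^o)$ requires turning a purely semantic $\varphi$-satisfier $o'=[b']\in\D_\B$ back into the syntactic fact $@_{\bm{j}}\varphi[y/b']\in\B$, that is, a converse of claim~(1). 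For atomic $\varphi$ this converse holds by the very definition of $\I_\B$, since satisfaction of an atom in $\M_\B$ is defined to mean the presence of a corresponding atom on $\B$; the careful part of the argument is to ensure, using the saturation of $\B$ and the fact that every element of $\D_\B$ and $\T_\B$ is the class of a term already occurring on $\B$, that the witnesses needed for $(\imath_2^o)$ and $(\imath_2^t)$ are genuinely available, so that the uniqueness built into the descriptions' satisfaction conditions is matched on the branch. This interplay between the Russellian uniqueness rules and the term-generated domain is the one place where the induction must be pushed with particular care.
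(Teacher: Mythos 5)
Your proposal follows the paper's proof essentially case for case: the same induction on formula complexity, the same use of full expandedness of $\B$ together with the \hyperref[lem::Substitution]{Substitution} and \hyperref[lem::Coincidence]{Coincidence} Lemmas for the standard connectives, $(\imath_1^o)$ and $(\imath_1^t)$ for the existence half of positive descriptions, $(\imath_2^o)$ and $(\imath_2^t)$ for uniqueness, and the same treatment of $(\neg\imath^o)$, $(\neg\imath^t)$, $(@\imath^t)$, and $(\neg@\imath^t)$.

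The one substantive remark concerns the ``delicate step'' you flag. The conversion of a purely semantic $\varphi$-satisfier $o'=[b']\in\D_\B$ into the syntactic fact $@_{\bm{j}}\varphi[y/b']\in\B$, needed to fire $(\imath_2^o)$, is not carried out in the paper at all: in the case $\psi:=(\lambda x\chi)(\imath y\theta)$ the paper verifies the uniqueness clause only for those $b$ with $@_{\bm{j}}\theta[y/b]\in\B$ and then concludes the full satisfaction condition ``since $b$ is arbitrary''; the temporal case $\psi:=\imath\bm{x}\chi$ is treated the same way, ranging only over nominals $\bm{j}'$ with $@_{\bm{j}'}\chi[\bm{x}/\bm{j}']\in\B$. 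That is, the paper tacitly identifies semantic satisfiers with branch-syntactic witnesses --- precisely the converse of claim~(1) that you worry about, and which, as you note, is immediate only for atomic $\theta$ by the definition of $\I_\B$. So on this point your sketch is more scrupulous than the published argument; but be aware that neither your plan nor the paper actually closes that step, so if you write the proof out in full you will have to either supply the missing converse argument or restrict the uniqueness check to syntactic witnesses exactly as the paper does.
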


\begin{proof}
	We prove the lemma by induction on the complexity of $\psi$ skipping the boolean and quantifier cases which are well known. We first show that the first implication holds.\smallskip
	
	\noindent $\psi :=P(b_1,\ldots,b_n)$\quad Assume that $@_{\bm{j}}P(b_1,\ldots,b_n) \in \B$. Assume, moreover, that $\bm{t} \in \T_\B$ is such that $\bm{j} \in \bm{t}$ and $o_1, \ldots, o_n \in \D_\B$ are such that $b_1 \in o_1, \ldots, b_n \in o_n$. By the definition of $\M_\B$, $\langle o_1,\ldots,o_n\rangle \in {\I_\B}(P,\bm{t})$, and so, $\M_\B, \bm{t}, \ass_\B \models P(b_1,\ldots,b_n)$.\smallskip
	
	\noindent $\psi :=  b_1\!=\!b_2$\quad Assume that $@_{\bm{j}}b_1=b_2 \in \B$. Assume, moreover, that $\bm{t} \in \T_\B$ is such that $\bm{j} \in \bm{t}$ and $o_1, o_2\in\D_\B$ are such that $b_1 \in o_1, b_2 \in o_2$. Since $\B$ is fully expanded, $(\mathsf{eq})$ must have been applied to $@_{\bm{j}} b_1=b_2$, thus yielding $b_1 = b_2 \in \B$. By the definition of $\sim_\B$, $o_1 = o_2$, and so, by the definition of $\M_\B$ and $\ass_\B$, ${\I_\B}_{\ass_\B}(b_1) = {\I_\B}_{\ass_\B}(b_2)$. Thus, by the satisfaction condition for $=$-formulas, $\M_\B, \bm{t}, \ass_\B \models b_1 = b_2$.\smallskip
	
	\noindent $\psi :=  \bm{j}'$\quad Assume that $@_{\bm{j}}\bm{j}' \in \B$. Assume, moreover, that $\bm{t}, \bm{t}' \in \T_\B$ are such that $\bm{j} \in \bm{t}$ and $\bm{j}' \in \bm{t}'$. By the definition of $\approx_\B$, $\bm{t} = \bm{t}'$, and so, by the definition of $\M_\B$ and the satisfaction condition for nominals, $\M_\B, \bm{t}, v_\B \models \bm{j}'$.\smallskip
	
	\noindent $\psi := (\lambda x\chi)(b)$\quad Assume that $@_{\bm{j}}(\lambda x\chi)(b)\in \B$. Assume, moreover, that $\bm{t} \in \T_\B$ is such that $\bm{j} \in \bm{t}$ and $o\in\D_\B$ is such that $b \in o$. Since $\B$ is fully expanded, the rule $(\lambda)$ must have been applied to $@_{\bm{j}}\psi$, yielding $@_{\bm{j}}\chi[x/b] \in \B$. By the inductive hypothesis, $\M_\B, \bm{t}, \ass_\B \models \chi[x/b]$. By the \hyperref[lem::Substitution]{Substitution Lemma}, $\M_\B, \bm{t}, \ass_\B[x\mapsto o] \models \chi$, which, together with the fact that ${\I_\B}_{\ass_\B}(b) = o$, gives $\M_\B, \bm{t}, \ass_\B \models (\lambda x \chi)(b)$.\smallskip
	
	\noindent $\psi := (\lambda x\chi)(\imath y \theta)$\quad Assume that $@_{\bm{j}}(\lambda x\chi)(\imath y \theta)\in \B$. Assume, moreover, that $\bm{t} \in \T_\B$ is such that $\bm{j} \in \bm{t}$. Due to the expandedness of $\B$, the rule $(\imath_1^o)$ must have been applied to $@_{\bm{j}}\psi$, yielding $@_{\bm{j}}\chi[x/a], @_{\bm{j}}\theta[y/a] \in \B$. By the inductive hypothesis, $\M_\B, \bm{t}, \ass_\B \models \chi[x/a],\theta[y/a]$. Let $o \in \D_\B$ be such that $a \in o$. By the \hyperref[lem::Substitution]{Substitution Lemma}, $\M_\B, \bm{t}, \ass_\B[x\mapsto o] \models \chi, \theta[y/a]$. Now, let $b\in\mathsf{FVAR}_\B\cup\mathsf{CONS}_\B$ be such that $@_{\bm{j}}\theta[y/b] \in \B$. Then $(\imath_2)$ was applied to $@_{\bm{j}}\theta[y/a]$ and $@_{\bm{j}}\theta[y/b]$ yielding $@_{\bm{j}}a=b \in \B$. By the inductive hypothesis, ${\I_\B}_{\ass_\B}(b)={\I_\B}_{\ass_\B}(a)=o$ and $\M_\B, \bm{t}, \ass_\B \models \theta[y/b]$, and so, by the \hyperref[lem::Substitution]{Substitution Lemma}, $\M_\B, \bm{t}, \ass_\B[y\mapsto o] \models \theta$. Since $b$ is arbitrary, we get $\M_\B, \bm{t}, \ass_\B \models (\lambda x \chi)(\imath y \theta)$.\smallskip
	
	
	
	
	\noindent $\psi := \future \chi$\quad Assume that $@_{\bm{j}}\future \chi \in \B$. Assume, moreover, that $\bm{t} \in \T_\B$ is such that $\bm{j} \in \bm{t}$. By the expandedness of $\B$ it follows that $(\future)$ was applied to $@_{\bm{j}}\future \chi$ yielding $@_{\bm{j}}\future \bm{j}', @_{\bm{j}'}\chi \in \B$. Let $\bm{t}'$ be such that $\bm{j}' \in \bm{t}'$. By the inductive hypothesis we obtain $\M_\B, \bm{t}', \ass_\B \models \chi$. By the construction of $\prec_\B$ and the fact that $@_{\bm{j}}\future\bm{j}' \in \B$ we get $\bm{t}\prec_\B\bm{t}'$. Thus, by the satisfaction condition for $\future$, we arrive at $\M_\B, \bm{t}, \ass_\B \models \future \chi$.\smallskip
	
	\noindent $\psi := \past \chi$\quad We proceed similarly to the previous case.\smallskip
	
	\noindent $\psi := \imath\bm{x}\chi$\quad Assume that $@_{\bm{j}} \imath\bm{x}\chi \in \B$. Assume, moreover, that $\bm{t} \in \T_\B$ is such that $\bm{j} \in \bm{t}$. Since $\B$ is fully expanded, $(\imath_1^t)$ must have been applied to $@_{\bm{j}}\imath\bm{x}\chi$, which resulted in $@_{\bm{j}} \chi[\bm{x}/\bm{j}] \in \B$. By the inductive hypothesis, $\M_\B, \bm{t}, \ass_\B \models \chi[\bm{x}/\bm{j}]$. Now, let $\bm{j}' \in \mathsf{NOM}_\B$ be such that $@_{\bm{j}'}\chi[\bm{x}/\bm{j}'] \in \B$ and let $\bm{t}'\in\T_\D$ be such that $\bm{j}' \in \bm{t}'$. Then, by the expandedness of $\B$, $(\imath_2^t)$ was applied to $@_{\bm{j}} \imath\bm{x}\chi$ and $@_{\bm{j}'} \chi$ resulting in $@_{\bm{j}}\bm{j}' \in \B$. By the definition of $\approx_\B$ and $\T_\B$, $\bm{t} = \bm{t'}$. Since $\bm{j}'$ (and therefore, $\bm{t}'$) was arbitrary, the respective satisfaction condition is satisfied, and so, $\M_\B, \bm{t}, \ass_\B \models \imath\bm{x} \chi$.\smallskip
	
	\noindent $\psi := @_{\bm{j}'} \chi$\quad Assume that $@_{\bm{j}} @_{\bm{j}'} \chi \in \B$. Assume, moreover, that $\bm{t},\bm{t}' \in \T_\B$ are such that $\bm{j} \in \bm{t}$ and $\bm{j}' \in \bm{t}'$. Since $\B$ is fully expanded, $(\mathsf{gl})$ must have been applied to $@_{\bm{j}}@_{\bm{j}'} \chi$, which resulted in $@_{\bm{j}'} \chi \in \B$. By the inductive hypothesis, $\M_\B, \bm{t}, \ass_\B \models \chi$. By the satisfaction condition for $@_{\bm{j}}$-formulas we obtain $\M_\B, \bm{t}, \ass_\B \models @_{\bm{j}'} \chi$.\smallskip
	
	\noindent $\psi := @_{\imath\bm{x}\chi} \theta$\quad Assume that $@_{\bm{j}} @_{\imath\bm{x}\chi} \theta \in \B$. Assume, moreover, that $\bm{t} \in \T_\B$ is such that $\bm{j} \in \bm{t}$. Since $\B$ is fully expanded, $(@\imath^t)$ must have been applied to $@_{\bm{j}} @_{\imath\bm{x}\chi} \theta$, which resulted in $@_{\bm{i}} \imath\bm{x}\chi, @_{\bm{i}}\theta \in \B$. By the inductive hypothesis, $\M_\B, \bm{t}, \ass_\B \models \imath\bm{x}\chi$ and $\M_\B, \bm{t}, \ass_\B \models\theta$. By the satisfaction condition for $@_{\imath\bm{x}\varphi}$-formulas we obtain $\M_\B, \bm{t}, \ass_\B \models @_{\imath\bm{x}\chi} \theta$.\smallskip
	
	\noindent $\psi := \downarrow_{\bm{x}}\! \chi$\quad Assume that $@_{\bm{j}}\!\downarrow_{\bm{x}}\! \chi \in \B$. Assume, moreover, that $\bm{t}\in \T_\B$ is such that $\bm{j} \in \bm{t}$. Since $\B$ is fully expanded, $(\downarrow)$ was applied to $@_{\bm{j}}\!\downarrow_{\bm{x}}\! \chi \in \B$, which resulted in $@_{\bm{j}} \chi[\bm{x}/\bm{j}] \in \B$. By the inductive hypothesis, $\M_\B, \bm{t}, \ass_\B \models \chi[\bm{x}/\bm{j}]$. By the \hyperref[lem::Substitution]{Substitution Lemma} we obtain $\M_\B, \bm{t}, \ass_\B[\bm{x}\mapsto\bm{t}] \models @_{i'} \chi$. By the satisfaction condition for $@$-formulas we obtain $\M_\B, \bm{t}, \ass_\B \models \downarrow_{\bm{x}}\! \chi$.\smallskip
	
	We now proceed to a proof of the second implication.\smallskip
	
	\noindent $\psi :=P(b_1,\ldots,b_n)$\quad Assume that $\neg@_{\bm{j}}P(b_1,\ldots,b_n) \in \B$. Assume, moreover, that $\bm{t} \in \T_\B$ is such that $\bm{j} \in \bm{t}$ and $o_1, \ldots, o_n\in\D_\T$ are such that $b_1 \in o_1, \ldots, b_n \in o_n$. Since $\B$ is open, we know that $@_{\bm{j}}P(b_1,\ldots,b_n) \notin \B$. By the definition of $\M_\B$, $\langle o_1,\ldots,o_n\rangle \notin {\I_\B}((P,\bm{t})$, and so, $\M_\B, \bm{t}, \ass_\B \not\models P(b_1,\ldots,b_n)$.\smallskip
	
	\noindent $\psi :=  b_1=b_2$\quad Assume that $\neg@_{\bm{j}}b_1 = b_2 \in \B$. Assume, moreover, that $\bm{t} \in \T_\B$ is such that $\bm{j} \in \bm{t}$ and $o_1, o_2\in\D_\T$ are such that $b_1 \in o_1, b_2 \in o_2$. Since $\B$ is fully expanded, $(\neg\mathsf{eq})$ was applied to $@_{\bm{j}} b_1=b_2$, thus yielding $b_1 \neq b_2 \in \B$. By the openness of $\B$ and the definition of $\sim_\B$, $o_1 \neq o_2$, and so, by the definition of $\ass_\B$, $\ass_\B(b_1) \neq \ass_\B(b_2)$. Thus, by the satisfaction condition for $=$-formulas, $\M_\B, \bm{t}, \ass_\B \not\models b_1 = b_2$.\smallskip
	
	\noindent $\psi :=  \bm{j}'$\quad Assume that $\neg @_{\bm{j}}\bm{j}' \in \B$. Assume, moreover, that $\bm{t}, \bm{t}' \in \T_\B$ are such that $\bm{j} \in \bm{t}$ and $\bm{j}' \in \bm{t}'$. Since $\B$ is open, then $@_{\bm{j}}\bm{j}' \notin \B$, and so, by the definition of $\approx_\B$, $\bm{t} \neq \bm{t}'$. Thus, by the definition of $\ass_\B$ and the satisfaction conditions for nominals and $\neg$-formulas, we get $\M_\B, \bm{t}, \ass_\B \not\models \bm{j}'$.\smallskip
	
	\noindent $\psi := (\lambda x\chi)(b)$\quad Assume that $\neg @_{\bm{j}}(\lambda x\chi)(b)\in \B$. Assume, moreover, that $\bm{t} \in \T_\B$ is such that $\bm{j} \in \bm{t}$ and $o\in\D_\T$ is such that $b \in o$. Since $\B$ is fully expanded, the rule $(\neg\lambda)$ must have been applied to $@_{\bm{j}}\psi$, yielding $\neg @_{\bm{j}}\chi[x/b] \in \B$. By the inductive hypothesis, $\M_\B, \bm{t}, \ass_\B \not\models \chi[x/b]$. By the \hyperref[lem::Substitution]{Substitution Lemma}, $\M_\B, \bm{t}, \ass_\B[x\mapsto o] \not\models \chi$, which, together with the fact that $\ass_\B(b) = o$, gives $\M_\B, \bm{t}, \ass_\B \not\models (\lambda x \chi)(b)$.\smallskip
	
	\noindent $\psi := (\lambda x\chi)(\imath y \theta)$\quad Assume that $\neg @_{\bm{j}}(\lambda x\chi)(\imath y \theta)\in \B$. Moreover assume that $\bm{t} \in \T_\B$ is such that $\bm{j} \in \bm{t}$. Since $\B$ is fully expanded, the rule $(\neg\imath)$ was applied to $\neg @_{\bm{j}}\psi$, making, for any free variable $b$ present on the branch, one of the following three hold:
	\begin{enumerate*}[label=(\arabic*)] \item $\neg@_{\bm{j}}\chi[x/b] \in \B$, \item $\neg@_{\bm{j}}\theta[y/b] \in \B$, \item there is a fresh free variable $a$ such that $@_{\bm{j}}\theta[y/a], a\neq b \in \B$.
	\end{enumerate*} Let $o\in\D_\T$ be such that $b \in o$. Assume (1) is the case. By the inductive hypothesis we get $\M_\B, \bm{t}, \ass_\B \not\models \neg\chi[x/b]$. By the \hyperref[lem::Substitution]{Substitution Lemma} we obtain $\M_\B, \bm{t}, \ass_\B[x\mapsto o] \not\models \chi$.
	If (2) holds, then by, the inductive hypothesis, $\M_\B, \bm{t} , \ass_\B \not\models \theta[y/b]$. By the \hyperref[lem::Substitution]{Substitution Lemma} we obtain $\M_\B, \bm{t} , \ass_\B[y\mapsto o] \models \neg \theta$, and so, $\M_\B, \bm{t}, \ass_\B[y\mapsto o] \not\models \theta$.
	Finally, let (3) hold. Then, by the inductive hypothesis, $\M_\B, \bm{t}, \ass_\B \models \theta[y/a], a\neq b$. Let $o'\in\D_\T$ be such that $a \in o'$. By the openness of $\B$ and the definition of $\sim_\B$ we have $o'\neq o$. Since $x$ does not occur freely in $\theta[y/a]$, it holds that $\M_\B , \bm{t}, \ass_\B[x\mapsto o] \models \theta[y/a]$ By the \hyperref[lem::Substitution]{Substitution Lemma} we obtain $\M_\B , \bm{t}, \ass_\B[x\mapsto o,y\mapsto o'] \models \theta$. As previously noted, $o'\neq o$, which means, by the respective satisfaction condition, that taking these three possibilities together, we obtain $\M_\B , \bm{t}, \ass_\B[x\mapsto o,y\mapsto o'] \not\models (\lambda x\chi)(\imath y \theta)$. Neither $x$ nor $y$ occurs freely in $(\lambda x\chi)(\imath y \theta)$, so after applying the \hyperref[lem::Substitution]{Substitution Lemma} twice we obtain $\M_\B , \bm{t}, \ass_\B \not\models (\lambda x\chi)(\imath y \theta)$.\smallskip

	\noindent $\psi := \future \chi$\quad Assume that $\neg @_{\bm{j}}\future \chi \in \B$. Assume, moreover, that $\bm{t}, \bm{t}' \in \T_\B$ are such that $\bm{j} \in \bm{t}$ and $\bm{t}\prec_\B\bm{t}'$. Assume that $\bm{j}' \in \mathsf{NOM}_\B$ is such that $\bm{j}'\in\bm{t}'$. By the definition of $\prec_\B$, it must be the case that $@_{\bm{j}}\future\bm{j}' \in \B$. Since $\B$ is fully expanded,  $(\neg\future)$ was applied to $\neg@_{\bm{j}}\future \chi$ and $@_{\bm{j}}\future \bm{j}'$ yielding $\neg @_{\bm{j}'}\chi \in \B$. By the inductive hypothesis we obtain $\M_\B, \bm{t}, v_\B \not\models \chi$. By the construction of $\prec_\B$ and the satisfaction condition for $\future$, we get $\M_\B, \bm{t}, \ass_\B \not\models \future \chi$.\smallskip
	
	\noindent $\psi := \past \chi$\quad We proceed similarly to the previous case.\smallskip
	
	\noindent $\psi := \imath\bm{x}\chi$\quad Assume that $\neg@_{\bm{j}} \imath\bm{x}\chi \in \B$. Assume, moreover, that $\bm{t} \in \T_\B$ is such that $\bm{j} \in \bm{t}$. Since $\B$ is fully expanded, $(\neg\imath^t)$ must have been applied to $\neg @_{\bm{j}}\imath\bm{x}\chi$, which resulted either in $\neg @_{\bm{j}} \chi[\bm{x}/\bm{j}] \in \B$ or in $@_{\bm{i}}\chi[\bm{x}/\bm{i}], \neg@_{\bm{j}}\bm{i} \in \B$, for some fresh $\bm{i}\in\mathsf{NOM}_\B$. In the former case, by the inductive hypothesis we obtain $\M_\B, \bm{t}, \ass_\B \not\models \chi$. For the latter case, let $\bm{t}'\in\T_\B$ be such that $\bm{i}\in\bm{t}'$. Then, by applying the inductive hypothesis, we get $\M_\B, \bm{t}', \ass_\B \models \chi[\bm{x}/\bm{i}]$, with $\bm{t}\neq\bm{t}'$, which we know from the definition of $\T_\B$. In both cases, by the satisfaction condition for $\imath\bm{x}\varphi$, we get that $\M, \bm{t}, \ass_\B \not\models \imath\bm{x}\chi$, as required.\smallskip
	
	\noindent $\psi := @_{\bm{j}'} \chi$\quad Assume that $\neg@_{\bm{j}}@_{\bm{j}'} \chi \in \B$. Assume, moreover, that $\bm{t},\bm{t}' \in \T_\B$ are such that $\bm{j} \in \bm{t}$ and $\bm{j}' \in \bm{t}'$. Since $\B$ is fully expanded, $(\neg\mathsf{gl})$ must have been applied to $\neg@_{\bm{j}}@_{\bm{j}'} \chi$, which resulted in $\neg @_{\bm{j}'} \chi \in \B$. By the inductive hypothesis we get $\M_\B, \bm{t}', \ass_\B \not\models \chi$. By the satisfaction condition for $@$-formulas we obtain $\M_\B, \bm{t}, v_\B \not\models @_{\bm{j}'} \chi$.\smallskip
	
	\noindent $\psi := @_{\imath\bm{x}\chi} \theta$\quad Assume that $\neg@_{\bm{j}} @_{\imath\bm{x}\chi} \theta \in \B$. Assume, moreover, that $\bm{t} \in \T_\B$ is such that $\bm{j} \in \bm{t}$. Let $\bm{j}' \in \mathsf{NOM}_\B$ and $\bm{t}'\in\T_\B$ be such that $\bm{j}'\in\bm{t}'$. Since $\B$ is fully expanded, $(\neg@\imath^t)$ must have been applied to $\neg@_{\bm{j}} @_{\imath\bm{x}\chi} \theta$ and $\bm{j}'$, yielding either $\neg@_{\bm{j}'} \imath\bm{x}\chi\in\B$ or $\neg@_{\bm{j}'}\theta \in \B$. In the former case, by the inductive hypothesis, $\M_\B, \bm{t}', \ass_\B \not\models \imath\bm{x}\chi$, and in the second case, after applying the inductive hypothesis, we obtain $\M_\B, \bm{t}', \ass_\B \not\models\theta$. Since $\bm{t}'$ was an arbitrary element of $\T_\B$, by the satisfaction condition for $@_{\imath\bm{x}\varphi}$-formulas we obtain $\M_\B, \bm{t}, \ass_\B \not\models @_{\imath\bm{x}\chi} \theta$.\smallskip
	
	\noindent $\psi := \downarrow_{\bm{x}}\! \chi$\quad Assume that $\neg @_{\bm{j}}\!\downarrow_{\bm{x}}\! \chi \in \B$. Assume, moreover, that $\bm{t}\in \T_\B$ is such that $\bm{j} \in \bm{t}$. Since $\B$ is fully expanded, $(\neg \downarrow)$ was applied to $\neg @_{\bm{j}}\!\downarrow_{\bm{x}}\! \chi$, which yielded $\neg @_{\bm{j}} \chi[\bm{x}/\bm{j}] \in \B$. By the inductive hypothesis, $\M_\B, \bm{t}, \ass_\B \not\models \chi[\bm{x}/\bm{j}]$. By the satisfaction condition for $\downarrow$ we get $\M_\B, \bm{t}, \ass_\B \not\models \downarrow_{\bm{x}}\! \chi$.
\end{proof}

Lemma~\ref{lem::BranchModel} implies:

\begin{theorem}[Completeness]\label{thm::Completeness}
	The calculus $\TCHFLK$ is complete.
\end{theorem}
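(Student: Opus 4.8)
The plan is to establish the contrapositive of the statement. By definition, $\TCHFLK$ is complete if every valid $\FOHLLDK$-formula is tableau-valid; equivalently, it suffices to show that whenever a formula $\varphi$ has no $\TCHFLK$-proof, it is not valid, i.e. that $\neg\varphi$ is satisfiable. So I would fix a $\varphi$ that is not tableau-valid and exhibit a tense model together with a time instance at which $\varphi$ fails.

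First I would run the tableau procedure on $\neg@_{\bm{i}}\varphi$ (with $\bm{i}$ a nominal fresh for $\varphi$) under a \emph{fair} expansion strategy, that is, one that eventually applies every applicable rule to every set of premises occurring on every branch, keeping the terminal rule $(\mathsf{NED})$ last, exactly as stipulated in \Cref{sec::Tableaux}. Since $\varphi$ has no proof, the resulting (possibly infinite) tableau is not closed, so it contains at least one branch that never closes; fairness guarantees that such a branch $\B$ is open and fully expanded. This is precisely the situation assumed throughout the construction preceding \Cref{lem::BranchModel}, so the branch structure $\M_\B=(\T_\B,\prec_\B,\D_\B,\I_\B)$ and the branch assignment $\ass_\B$ are well defined, with $\sim_\B$ and $\approx_\B$ genuine equivalences by \Cref{prop::Equivalence}.

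The decisive step is then a single application of \Cref{lem::BranchModel}. The root formula $\neg@_{\bm{i}}\varphi$ lies on $\B$, and $\bm{i}\in\mathsf{NOM}_\B$ belongs to some equivalence class $\bm{t}\in\T_\B$. Invoking clause~(2) of \Cref{lem::BranchModel} with $\psi:=\varphi$ and $\bm{j}:=\bm{i}$, we obtain $\M_\B,\bm{t},\ass_\B\not\models\varphi$. Hence $\varphi$ is not satisfied at $\bm{t}$ in $\M_\B$ under $\ass_\B$, so by the definition of validity it is not valid, and equivalently $\neg\varphi$ is satisfiable. Contraposing, every valid formula is tableau-valid, which is completeness.

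The main obstacle is not this final deduction---which is immediate from \Cref{lem::BranchModel}---but the construction and justification of the open, fully expanded branch. Because the rules $(\exists)$, $(\neg\exists)$, $(\future)$, $(\past)$ and the $\imath$-rules repeatedly introduce fresh free variables and nominals, branches need not terminate, so I would have to phrase fairness carefully (e.g. via a systematic round-robin scheduling of premises) and take the limit branch as the union of an increasing chain of finite branches, arguing by the standard analogue of König's lemma that an unclosed tableau possesses such an infinite open path. I would then check that the limit branch is genuinely fully expanded---every formula that ever appears eventually has every applicable rule applied to it---and that openness is preserved in the limit, since $\bot$, once absent at every finite stage, is absent from the union. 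A minor technical point to confirm is that the root is covered by \Cref{lem::BranchModel}: although $\mathsf{FOR}_\B$ is introduced via occurrences under $@$, clause~(2) of the lemma is stated and proved for formulas occurring under $\neg@_{\bm{j}}$ as well, so the subformula $\varphi$ of the root $\neg@_{\bm{i}}\varphi$ is exactly of the form to which it applies and no extension of the lemma is required.
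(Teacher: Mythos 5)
Your proposal is correct and follows essentially the same route as the paper: prove the contrapositive, extract an open and fully expanded branch $\B$ from the failed tableau for $\neg@_{\bm{i}}\varphi$, build the branch model, and apply clause~(2) of \Cref{lem::BranchModel} to the root to obtain $\M_\B,\bm{t},\ass_\B\not\models\varphi$. Your extra material on fair scheduling and limit branches, and your remark that clause~(2) genuinely covers formulas occurring under $\neg@_{\bm{j}}$ (despite $\mathsf{FOR}_\B$ being defined via positive $@$-occurrences), only makes explicit two points that the paper's own, much terser proof leaves implicit.
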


\begin{proof}
	Let $\varphi$ be a $\FOHLLDK$-formula. We prove the contrapositive of the completeness condition given in \Cref{sec::Tableaux}. Assume that $\varphi$ has no $\TCHFLK$-proof, that is, an application of $\TCHFLK$ to $\neg @_{\bm{j}}\varphi$ results in an open tableau. Let $\B$ be an open and fully expanded branch of such a tableau. By Lemma~\ref{lem::BranchModel} we get that $\M_\B, \bm{t}, \ass_\B \not\models \varphi$. Since $\M_\B$ and $\ass_\B$ are well defined, by the satisfaction condition for $\neg$ we get that $\M_\B, \bm{t}, \ass_\B \models \neg\varphi$
	Therefore, $\neg\varphi$ is $\FOHLLDK$-satisfiable, hence $\varphi$ is not $\FOHLLDK$-valid.
\end{proof}\medskip

This result can be extended to all logics complete with respect to any elementary class of frames $\mathscr{C}$ closed under point-generated subframes (see Blackburn and Marx~(\citeyear{BlMar03})). Every class of frames satisfying such a condition is definable by a set of nominal-free pure hybrid sentences. It follows that for every such sentence $\varphi$ it is enough to add a zero-premise rule $\frac{}{@_{\bm{j}}\varphi}$, where $\bm{j}$ is a nominal present on the branch, to retain the calculus' completeness with respect to the class of frames under consideration.

\section{Interpolation}
\label{sec::Interpolation}

In this section, we will show that $\FOHTL$ has the Craig interpolation property, that is, for any $\FOHTL$-formulas $\varphi$ and $\psi$ such that $\models\varphi\to\psi$, there exists a $\FOHTL$-formula $\chi$ such that all predicates and constants occurring in $\chi$ occur in both $\varphi$ and $\psi$, and moreover $\models\varphi\to\chi$ and $\models\chi\to\psi$. Similarly to Blackburn and Marx~(\citeyear{BlMar03}) we exploit a technique introduced by Smullyan~(\citeyear{Smullyan1968}) and further adjusted to the tableaux setting by Fitting~(\citeyear{Fitting1996}). This allows us to refer to many details of their work but to make the proof comprehensible we must recall how this strategy works. Let us consider a closed tableau for a valid implication $\varphi\to\psi$ in $\TCHFLK$. It can be mechanically transformed into a \emph{biased} tableau in the following way. We delete the root: $\neg @_{\bm{j}}(\varphi\to\psi)$, replace $@_{\bm{j}}\varphi$ with  $\texttt{L}\ @_{\bm{j}}\varphi$ and $\neg @_{\bm{j}}\psi$ with $\texttt{R}\ \neg @_{\bm{j}}\psi$, and continue the process of assigning prefixes $\texttt{L}, \texttt{R}$: for each application of a rule we precede with $\texttt{L}$ all conclusions of the premise prefixed with $\texttt{L}$ and with $\texttt{R}$ all conclusions of the $\texttt{R}$-premise. This way all formulas, save $\bot$ at the end of each branch, are signed in a way that makes explicit their ancestry: they follow either from the antecedent $\texttt{L}\ @_{\bm{j}}\varphi$ or from the succedent $\texttt{R}\ \neg @_{\bm{j}}\psi$ of the original implication. Thus in the case of rules with one premise we must always consider two variants: the $\texttt{L}$-variant and the $\texttt{R}$-variant. In the case of rules with two premises the situation is slightly more complicated since we must additionally consider the variants that have premises with opposite signs. This is the way the proof is carried out by Blackburn and Marx. To save space we refer to their work when calculating interpolants for all cases except the new ones. 
However, in contrast to their solution, in the case of the rules for definite descriptions we modify their technique in a way which guarantees that we always have to make only two calculations for each applied rule. This is reasonable since in the case of $(\imath_2$) there are three premises and, accordingly, eight variants of the rule for computing the interpolant are needed, which complicates things considerably. Instead, we can replace each rule with multiple premises with a rule having only one premise, which enables us to consider only two variants. This can also be done for two-premise rules from Blackburn and Marx's calculus, but we confine ourselves to changing only the new multi-premise rules: $(\imath^o_2)$ and $(\imath^t_2)$. Consider the following transformed rules $({\imath^o_2}')$ and $({\imath^t_2}')$:

\begin{center}
	$( {\imath^o_2}') \ \dfrac{@_{\bm{j}}(\lambda x\psi)(\imath y \varphi)}{\neg@_{\bm{j}} \varphi[y/b_1] \mid \neg @_{\bm{j}}\varphi[y/b_2] \mid b_1 = b_2}$
\end{center}\medskip

\begin{center}
	$({\imath^t_2}') \ \dfrac{@_{\bm{j}_1}\imath \bm{x} \varphi}{\neg @_{\bm{j}_2}\varphi[\bm{x}/\bm{j}_2] \mid @_{\bm{j}_1}{\bm{j}_2}}$
\end{center}\medskip

Let $\TCHFLK'$ be the calculus $\TCHFLK$ with $(\imath^o_2)$ and $(\imath^t_2)$ replaced with $({\imath^o_2}')$ and $({\imath^t_2}')$. 
We need to show that $\TCHFLK$ and $\TCHFLK'$ are \emph{equivalent}, that is, that, given a set of premises $\Phi$, the sets of formulas derivable from $\Phi$ using $\TCHFLK$ and $\TCHFLK'$ are identical. To that end we will exploit the cut rule:
\begin{center}
	$(\mathsf{cut})\ \dfrac{\phantom{\top}}{\varphi \mid \neg\varphi}$
\end{center}
Recall that a rule $(\mathsf{R})$ is \emph{admissible} for a calculus $\mathscr{C}$ if the set of theorems provable in $\mathscr{C}\cup\{(\mathsf{R})\}$ is the same as the set of theorems provable in $\mathscr{C}$. Then the following holds:
\begin{proposition}\label{prop::cut}
	$(\mathsf{cut})$ is admissible in $\TCHFLK$.
\end{proposition}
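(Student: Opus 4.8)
The plan is to establish admissibility not by a syntactic cut-elimination argument---which would force us to track how a cut formula interacts with each of the many rules of $\TCHFLK$---but by a short semantic detour through the soundness and completeness results already in hand. Write $\TCHFLK^+$ for $\TCHFLK$ augmented with $(\mathsf{cut})$. Since every $\TCHFLK$-proof is trivially also a $\TCHFLK^+$-proof, one inclusion between the two sets of theorems is immediate, and the whole content of the proposition is the converse inclusion: every formula provable with cut is already provable without it.

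First I would check that $(\mathsf{cut})$ is sound in the sense of Section~\ref{subsect::Soundndess}. Its premise set is empty, hence vacuously satisfiable, and for any model $\M$, time instance $\bm{t}$, and assignment $\ass$ exactly one of $\M,\bm{t},\ass\models\varphi$ and $\M,\bm{t},\ass\models\neg\varphi$ holds; consequently, whenever a branch $\B$ is satisfiable by some $\M,\bm{t},\ass$, at least one of its two cut-extensions $\B\cup\{\varphi\}$, $\B\cup\{\neg\varphi\}$ is satisfiable by the very same $\M,\bm{t},\ass$. Thus $(\mathsf{cut})$ preserves satisfiability, so the argument of Lemma~\ref{lem::SoundRules} and Theorem~\ref{thm::Soundness} applies verbatim to $\TCHFLK^+$: marking leaves as unsatisfiable and propagating upward shows that any $\TCHFLK^+$-provable formula is valid.

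It then remains only to close the loop with completeness. By Theorem~\ref{thm::Completeness} the cut-free calculus $\TCHFLK$ is complete, so every valid formula already has a $\TCHFLK$-proof. Chaining the two observations gives: if $\varphi$ is provable in $\TCHFLK^+$, then $\varphi$ is valid (by soundness of $\TCHFLK^+$), hence $\varphi$ is provable in $\TCHFLK$ (by completeness of $\TCHFLK$). This is exactly the required equality of the two theorem-sets, so $(\mathsf{cut})$ is admissible.

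I do not expect a genuine obstacle here: the usual difficulty of cut admissibility---the combinatorial interaction of the cut formula with every rule---is entirely bypassed, since completeness has already been proved for the cut-free system and the only new verification is the essentially trivial soundness of $(\mathsf{cut})$. The one point worth stating carefully is that this route is legitimate precisely because Theorem~\ref{thm::Completeness} establishes completeness of $\TCHFLK$ \emph{without} $(\mathsf{cut})$; were completeness only available for the cut-augmented system, the argument would be circular.
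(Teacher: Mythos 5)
Your proof is correct and follows exactly the route the paper takes: the paper justifies Proposition~\ref{prop::cut} as ``a straightforward consequence of $\TCHFLK$'s completeness and the fact that $(\mathsf{cut})$ is a sound rule,'' which is precisely your soundness-plus-completeness detour. Your write-up merely spells out the same argument in more detail, including the (correct) observation that the argument is non-circular only because Theorem~\ref{thm::Completeness} is proved for the cut-free calculus.
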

\noindent It is a straightforward consequence of $\TCHFLK$'s completeness and the fact that $(\mathsf{cut})$ is a sound rule.  Thus, we can apply $(\mathsf{cut})$ safely in $\TCHFLK$ to show the derivability of other rules and obtain:

\begin{lemma}\label{lem::equivalence}
	$\TCHFLK$ and $\TCHFLK'$ are equivalent.
\end{lemma}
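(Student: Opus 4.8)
The plan is to prove the two calculi equivalent by showing that each of the two replaced rules is \emph{derivable} in the other system, and then lifting this rule-level fact to a correspondence between closed tableaux. Since $\TCHFLK$ and $\TCHFLK'$ differ only in that the multi-premise rules $(\imath^o_2)$ and $(\imath^t_2)$ are swapped for the single-premise branching rules $({\imath^o_2}')$ and $({\imath^t_2}')$, it suffices to establish two things: (i) $({\imath^o_2}')$ and $({\imath^t_2}')$ are derivable in $\TCHFLK$, and (ii) $(\imath^o_2)$ and $(\imath^t_2)$ are derivable in $\TCHFLK'$. By \Cref{prop::cut} the rule $(\mathsf{cut})$ is admissible in $\TCHFLK$, so I may use it freely in establishing (i), knowing that any closed tableau obtained with its help can be turned back into a genuine $\TCHFLK$ one.

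For (i), I would derive $({\imath^o_2}')$ from the premise $@_{\bm{j}}(\lambda x\psi)(\imath y \varphi)$ by cutting successively on $@_{\bm{j}}\varphi[y/b_1]$ and then on $@_{\bm{j}}\varphi[y/b_2]$. The branch carrying $\neg@_{\bm{j}}\varphi[y/b_1]$ already exhibits the first disjunct, the branch carrying $\neg@_{\bm{j}}\varphi[y/b_2]$ the second; on the remaining branch both positive formulas are present together with $@_{\bm{j}}(\lambda x\psi)(\imath y \varphi)$, so one application of $(\imath^o_2)$ yields $b_1=b_2$, the third disjunct. The derivation of $({\imath^t_2}')$ follows the same pattern with a single cut on $@_{\bm{j}_2}\varphi[\bm{x}/\bm{j}_2]$, closed off by $(\imath^t_2)$ on the positive branch to obtain $@_{\bm{j}_1}\bm{j}_2$. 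The fresh/old status of the terms involved is unchanged by these manipulations, so no side condition of any rule is violated.

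For (ii) no cut is required. To simulate $(\imath^o_2)$ in $\TCHFLK'$, I would apply $({\imath^o_2}')$ to the premise $@_{\bm{j}}(\lambda x\psi)(\imath y \varphi)$; since the other two premises $@_{\bm{j}}\varphi[y/b_1]$ and $@_{\bm{j}}\varphi[y/b_2]$ are by hypothesis already on the branch, the first two resulting branches close immediately by $(\bot)$, leaving only the branch with $b_1=b_2$, which is precisely the conclusion of $(\imath^o_2)$. The rule $(\imath^t_2)$ is simulated analogously via $({\imath^t_2}')$: the branch carrying $\neg@_{\bm{j}_2}\varphi[\bm{x}/\bm{j}_2]$ closes against the present premise $@_{\bm{j}_2}\varphi[\bm{x}/\bm{j}_2]$, and the surviving branch delivers $@_{\bm{j}_1}\bm{j}_2$.

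Finally, I would conclude by a routine induction on the height of a closed tableau, substituting proof blocks. Replacing every application of $({\imath^o_2}')$ or $({\imath^t_2}')$ by the block from (i) converts a closed $\TCHFLK'$-tableau into a closed $\TCHFLK+(\mathsf{cut})$-tableau, which by \Cref{prop::cut} yields a closed $\TCHFLK$-tableau; dually, replacing each application of $(\imath^o_2)$ or $(\imath^t_2)$ by the block from (ii) converts a closed $\TCHFLK$-tableau into a closed $\TCHFLK'$-tableau (the extra formulas carried along in each block are harmless for closure). Hence the same sets are derivable in the two systems. The only genuinely delicate point is the bookkeeping around cut: because cut is admissible rather than primitive, the equivalence must be phrased at the level of closed tableaux (provability), where admissibility can legitimately be invoked, rather than as cut-free rule-derivability; once that is accepted, the remainder is a mechanical substitution of proof blocks.
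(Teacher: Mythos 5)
Your proposal is correct and follows essentially the same route as the paper: the paper likewise derives $({\imath^o_2}')$ in $\TCHFLK$ via two applications of the admissible $(\mathsf{cut})$ followed by $(\imath^o_2)$ on the doubly-positive branch, derives $(\imath^o_2)$ in $\TCHFLK'$ by applying $({\imath^o_2}')$ and closing the two negative branches with $(\bot)$, and handles the temporal rules by the same (simpler) pattern. The only difference is that you spell out the final lifting step (block substitution by induction on tableau height, with admissibility of cut invoked at the level of closed tableaux), which the paper treats as immediate once rule interderivability is established.
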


\begin{proof}
	It is enough to prove that $({\imath^o_2}')$ and $({\imath^t_2}')$ are derivable in $\TCHFLK$ (with admissible $(\mathsf{cut})$), whereas $(\imath^o_2)$ and $(\imath^t_2)$ are derivable in $\TCHFLK'$. For the interderivability of $({\imath^o_2})$ and $({\imath^o_2}')$ suitable derivations are displayed in \Cref{fig::Derivability}(a) and \Cref{fig::Derivability}(b), respectively.
	\begin{figure}[th]
		\centering
		\small
		\begin{tikzpicture}
			\node[fill=none,draw=none] (a) at (-1,0) {$@_{\bm{j}}(\lambda x\psi)(\imath y \varphi)$};
			
			\node[fill=none,draw=none] (b) at (-2,-1) {$@_{\bm{j}}\varphi[y/b_1]$};
			
			\node[fill=none,draw=none] (b') at (0,-1) {$\neg@_{\bm{j}}\varphi[y/b_1]$};
			
			\node[fill=none,draw=none] (c) at (-3,-2) {$@_{\bm{j}}\varphi[y/b_2]$};
			
			\node[fill=none,draw=none] (c1) at (-1,-2) {$\neg@_{\bm{j}}\varphi[y/b_2] $};
			
			\node[fill=none,draw=none] (d1) at (-3,-3) {$b_1 = b_2$};
			
			\node[fill=none,draw=none] (l) at (-1,-3.9) {\normalsize(a)};
			
			\draw[thick,->] (a) -- (b) node[midway,right,xshift=2.5pt] {\footnotesize$(\mathsf{cut})$};
			\draw[thick,->] (a) -- (b');
			\draw[thick,->] (b) -- (c) node[midway,right,xshift=2.5pt] {\footnotesize$(\mathsf{cut})$};
			\draw[thick,->] (b) -- (c1);
			\draw[thick,->] (c) -- (d1) node[midway,left] {\footnotesize$(\imath^o_2)$};
		\end{tikzpicture}
		\begin{tikzpicture}
			\node[fill=none,draw=none,text width=.5\linewidth,align=center] (a) at (0,0) {$@_{\bm{j}}\varphi[y/b_1]$\\
				$@_{\bm{j}}\varphi[y/b_2]$\\
				$@_{\bm{j}}(\lambda x\psi)(\imath y \varphi)$};

			\node[fill=none,draw=none] (b1) at (-2,-1.6) {$\neg@_{\bm{j}}\varphi[y/b_1]$};

			\node[fill=none,draw=none] (b2) at (0,-1.6) {$\neg@_{\bm{j}}\varphi[y/b_2]$};
			
			\node[fill=none,draw=none] (b3) at (2,-1.6) {$b_1=b_2$};
			
			\node[fill=none,draw=none] (c1) at (-2,-2.6) {$\bot$};
			
			\node[fill=none,draw=none] (c2) at (0,-2.6) {$\bot$};
			
			\node[fill=none,draw=none] (l) at (0,-3.5) {\normalsize(b)};
			
			\draw[thick,->] (a) -- (b1);
			\draw[thick,->] (a) -- (b2) node[pos=0.2,fill=white] {\footnotesize$({\imath^o_2}')$};
			\draw[thick,->] (a) -- (b3);
			\draw[thick,->] (b1) -- (c1) node[midway,fill=white,left] {\footnotesize$(\bot)$};
			\draw[thick,->] (b2) -- (c2) node[midway,fill=white,left] {\footnotesize$(\bot)$};
		\end{tikzpicture}
		\caption{The interderivability of the rules $(\iota_2^o)$ and $({\iota_2^o}')$; \Cref{fig::Derivability}(a) displays a proof of the derivability of $({\iota_2^o}')$ in $\TCHFLK$  with $(\mathsf{cut})$; \Cref{fig::Derivability}(b) shows a proof of the derivability of $({\iota_2^o})$ in $\TCHFLK'$}
		\label{fig::Derivability}
	\end{figure}

	The derivability of $(\imath^t_2)$ and $({\imath^t_2}')$ in $\TCHFLK'$ and $\TCHFLK$, respectively, follows from similar (but simpler) derivations, so we skip them.
	Hence the two calculi are equivalent. Moreover, both of them are cut-free, that is, they do not comprise any cut rules, and analytic, i.e., every formula $\psi$ such that $@_{\bm{i}}\psi$ occurs in a tableau with $@_{\bm{j}}\varphi$ at the root is a subformula of $\varphi$ or the negation of such a subformula, modulo variable replacement.
\end{proof}

\begin{theorem}[Craig interpolation]\label{thm::Interpolation}
	$\TCHFLK'$ enjoys the Craig interpolation property.
\end{theorem}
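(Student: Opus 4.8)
The plan is to run the Smullyan--Fitting biased-tableau argument in exactly the form used by Blackburn and Marx~\citep{BlMar03}, supplying interpolant computations only for the rules that are new here. Starting from a closed $\TCHFLK'$-tableau for $\varphi\to\psi$, I would bias it as described above, so that every formula on the tree carries a sign $\texttt{L}$ or $\texttt{R}$ recording whether it descends from $\texttt{L}\ @_{\bm j}\varphi$ or from $\texttt{R}\ \neg @_{\bm j}\psi$. For a biased branch $\B$, writing $\B_\texttt{L}$ and $\B_\texttt{R}$ for the collections of its $\texttt{L}$- and $\texttt{R}$-signed formulas, I call a formula $\chi$ an \emph{interpolant} for $\B$ if every predicate and constant of $\chi$ occurs both in $\B_\texttt{L}$ and in $\B_\texttt{R}$, and moreover $\B_\texttt{L}\models\chi$ while $\B_\texttt{R}\cup\{\chi\}$ is unsatisfiable. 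The proof is then an induction on the height of the closed biased subtableau rooted at a node. At a closed leaf the interpolant is read off from the clashing pair: a clash internal to one side gives $\bot$ or $\top$, while a clash between an $\texttt{L}$-formula $\theta$ and an $\texttt{R}$-formula $\neg\theta$ gives $\chi=\theta$, whose vocabulary is automatically shared.

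The inductive step is carried out rule by rule, and for every rule already belonging to Blackburn and Marx's calculus I would simply invoke their computation. The work is thus confined to the new rules, namely the $\lambda$-rules $(\lambda),(\neg\lambda)$, the $\imath$-object rules $(\imath_1^o),({\imath^o_2}'),(\neg\imath^o)$, and the $\imath$-temporal rules $(\imath_1^t),({\imath^t_2}'),(\neg\imath^t),(@\imath^t),(\neg@\imath^t)$. Here the reduction established in Lemma~\ref{lem::equivalence} pays off: since the three-premise rules have been replaced by the single-premise primed rules ${\imath^o_2}'$ and ${\imath^t_2}'$, every rule has exactly one premise, so each requires only two interpolant computations --- an $\texttt{L}$-variant and an $\texttt{R}$-variant --- rather than the eight variants a genuine three-premise rule would demand. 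For a non-branching rule that preserves the sign and introduces no fresh symbol (such as $(\lambda)$ or $(\imath_1^t)$) the child's interpolant is inherited unchanged, since the added conclusion is entailed by the premise. For a branching rule the premise interpolant is obtained by disjoining the children's interpolants in the $\texttt{L}$-variant and conjoining them in the $\texttt{R}$-variant, exactly as for the propositional $\beta$-rules.

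The delicate point, and the step I expect to be the main obstacle, is maintaining the common-vocabulary and validity conditions in the presence of the fresh symbols these rules introduce. The rules $(\imath_1^o)$ and the witness branch of $(\neg\imath^o)$ introduce a fresh object variable $a$, while $(@\imath^t)$ and the witness branch of $(\neg\imath^t)$ introduce a fresh nominal $\bm i$; such a symbol occurs on only one side and so may not survive into the interpolant. For a fresh object variable the remedy is the classical Smullyan device: when the interpolant is passed below the introducing node it is closed over a bound variable, existentially in the $\texttt{L}$-variant and universally in the $\texttt{R}$-variant, which is sound precisely because $a$ did not occur elsewhere on the branch. For a fresh nominal the corresponding device is the hybrid one of Areces, Blackburn and Marx~\citep{ArBlMar03}: the one-sided nominal is bound by $\downarrow$, turning it into a bound tense variable. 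This is exactly the manoeuvre that rescues interpolation for $\FOHTL$ where it fails for ordinary first-order modal logic, and verifying that it interacts correctly with the $@$- and $\imath$-rules is the crux of the argument. One also has to check that no description $\imath y\theta$ or $\imath\bm x\theta$ ever leaks into an interpolant; this follows from the analyticity noted in Lemma~\ref{lem::equivalence}, since every formula occurring on the tableau is a subformula of $\varphi$ or $\psi$ modulo replacement, so the interpolants built by the recursion inhabit the shared predicate/constant vocabulary by construction.

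Finally I would apply the induction to the root, whose left part is $\{@_{\bm j}\varphi\}$ and right part is $\{\neg @_{\bm j}\psi\}$. Because the calculus is a sat-calculus, the resulting interpolant has the form $@_{\bm j}\chi$ with $\bm j$ absent from $\chi$ and with every one-sided nominal already discharged by a $\downarrow$-binder during the recursion. From $@_{\bm j}\varphi\models @_{\bm j}\chi$ and the unsatisfiability of $\{\neg @_{\bm j}\psi,@_{\bm j}\chi\}$, together with the fact that $@_{\bm j}\theta$ is valid iff $\theta$ is valid for a nominal $\bm j$ foreign to $\theta$, I read off $\models\varphi\to\chi$ and $\models\chi\to\psi$. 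Since all predicates and constants of $\chi$ are common to $\varphi$ and $\psi$, the formula $\chi$ is the desired Craig interpolant.
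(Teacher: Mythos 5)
Your overall scaffolding does match the paper's: bias the closed $\TCHFLK'$-tableau in the Smullyan--Fitting--Blackburn--Marx style, exploit the single-premise reduction of Lemma~\ref{lem::equivalence} so that every rule needs only an \texttt{L}- and an \texttt{R}-computation, pass interpolants unchanged through sign-preserving non-branching rules, and disjoin (\texttt{L}) or conjoin (\texttt{R}) across branching rules. The gap lies exactly where new work is required: you locate the vocabulary problem at the wrong rules, and the device you propose points in the wrong direction. The fresh witnesses $a$ and $\bm{i}$ introduced by $(\imath_1^o)$, $(@\imath^t)$ and the witness branches of $(\neg\imath^o)$, $(\neg\imath^t)$ are harmless: such a symbol is one-sided already on the \emph{child} branch, so by the vocabulary half of the inductive invariant it can never occur in the child's interpolant, and that interpolant passes down unchanged --- this is precisely why the paper's principles $(\texttt{X}\,\imath_1^o)$, $(\texttt{X}\,{@\imath}^t)$ and $(\texttt{X}\,\neg\imath^t)$ involve no quantifier or binder at all. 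What genuinely threatens the vocabulary condition are the rules that instantiate with symbols \emph{already on the branch}: $({\imath^o_2}')$ and $(\neg\imath^o)$ use branch constants $b_1,b_2,b$, and $({\imath^t_2}')$ and $(\neg@\imath^t)$ use a branch nominal $\bm{j}_2$. In the \texttt{L}-variant these symbols may occur above the premise only in \texttt{R}-formulas; they then legitimately enter the children's interpolants but must be bound away when passing to the premise. This case, to which the paper's proof is almost entirely devoted, is absent from your proposal.

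Moreover, your stated binding direction is backwards for exactly this case. The paper quantifies universally (and binds $\downarrow$ over a disjunction) in \texttt{L}-variants, existentially (and $\downarrow$ over a conjunction) in \texttt{R}-variants; you propose existential for \texttt{L} and universal for \texttt{R}. Consider the \texttt{L}-variant of $({\imath^o_2}')$ with $b_1,b_2$ occurring in the $\delta$'s but not the $\gamma$'s: the formula $\exists x\exists y(\chi_1\vee\chi_2\vee\chi_3)[b_1/x,b_2/y]$ fails the required condition $\models I\to\neg\delta_1\vee\ldots\vee\neg\delta_m$, because the existential witnesses need not be the actual denotations of $b_1,b_2$, which occur free in the $\delta$'s; only the universal closure allows one to re-instantiate at $b_1,b_2$ and invoke $\models\chi_i\to\neg\delta_1\vee\ldots\vee\neg\delta_m$. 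Dually, a universal closure in the \texttt{R}-variant breaks $\models\gamma_1\wedge\ldots\wedge\gamma_n\to I$, since there the constants do occur in the $\gamma$'s and cannot be generalised away. (In the fresh-symbol situation you discuss, your direction happens to be sound, but only vacuously, since the fresh symbol cannot appear in the interpolant in the first place.) As written, then, your recipe produces formulas that are not interpolants precisely for the rules that need nontrivial treatment, and repairing it amounts to reproducing the paper's principles $(\texttt{L}\,{\imath_2^o}')$, $(\texttt{R}\,{\imath_2^o}')$, $(\texttt{L}\,\neg\imath^o)$, $(\texttt{R}\,\neg\imath^o)$, $(\texttt{L}\,{\imath^t_2}')$, $(\texttt{R}\,{\imath^t_2}')$, $(\texttt{L}\,\neg{@\imath}^t)$ and $(\texttt{R}\,\neg{@\imath}^t)$ with the quantifiers and $\downarrow$-binders oriented the other way.
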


\begin{proof}
	Assume that $\varphi\to\psi$ is $\FOHLLDK$-valid. Clearly, we exclude the cases where $\varphi\equiv\bot$ or $\psi\equiv\top$ since in these cases an interpolant is trivially $\bot$ or $\top$, respectively.  For the remaining cases we build up an interpolant constructively, starting from each occurrence of $\bot$ at the end of a branch, and going up the tree. In general, at each stage we consider the last applied rule and having already established interpolants for conclusions of the applied rule, we extract an interpolant for the premise(s) with respect to all the formulas which are above on the branch. Thus, a general scheme for one-premise rules is:
	\begin{quotation}
		\noindent If $\chi_1, ..., \chi_k$ are interpolants for $\Gamma\cup \{\Psi_1\}, \ldots, \Gamma\cup\{\Psi_k\}$, then $\texttt{I}(\chi_1, \ldots, \chi_k)$ is an interpolant for $\Gamma\cup\{\varphi\}$, where $\varphi$ is the premise of the applied rule and $\Psi_1, \ldots, \Psi_k$ are all the (sets of) conclusions, and $\Gamma$ is the set of all formulas on the branch above the premise.
	\end{quotation}
	Clearly, specific principles for calculating interpolants are in two versions for each rule: the \texttt{L}-variant with $\texttt{L}$, or the \texttt{R}-variant with $\texttt{R}$ assigned to the premise and conclusions. In the case of two-premise rules considered by Blackburn and Marx there are four combinations and the scheme is more involved~(\citeyear{BlMar03}). We present the interpolants only for the new rules. Let $\varphi$ be a formula to which a rule has been applied on the branch, let $\{\Psi_1,\ldots,\Psi_k\}$ be the set of (sets of) conclusions of $\varphi$ after applying the rule, and let $\Gamma$ be the set of all formulas occurring on the branch above $\varphi$. Moreover, let $\{\gamma_1, ..., \gamma_n\}$ be the set of all formulas such that $\texttt{L}\ \gamma_i\in \Gamma$, and let $\{\delta_1, ..., \delta_m\}$ be the set of all formulas such that $\texttt{R}\ \delta_i\in \Gamma$. For each rule we are showing that for its \texttt{L}-variant:
	\begin{quotation}
		\noindent If, for every $i\leq k$, $\models \bigwedge\Psi_i\wedge\gamma_1\wedge \ldots\wedge\gamma_n\rightarrow\chi_i$ and
		$\models\chi_i\rightarrow\neg\delta_1\vee \ldots\vee\neg\delta_m$, then
		$\models\varphi\wedge\gamma_1\wedge \ldots\wedge\gamma_n\rightarrow \texttt{I}(\chi_1, \ldots, \chi_k)$;
	\end{quotation}
	and for the \texttt{R}-variant:
	\begin{quotation}
		\noindent If, for every $i\leq k$, $\models \gamma_1\wedge \ldots\wedge\gamma_n\rightarrow\chi_i$ and
		$\models\chi_i\to\neg\delta_1\vee \ldots\vee\neg\delta_m\vee\bigvee\neg\Psi_i$, then
		$\models \texttt{I}(\chi_1, \ldots, \chi_k)\to\neg\delta_1\vee \ldots\vee\neg\delta_m\vee\neg\varphi$.
	\end{quotation}
	Note that although $\varphi, \chi_j, \gamma_l$ are sat-formulas, the interpolants do not need to; we only expect them to satisfy the conditions for being interpolants.
	
	Below we state the principles for calculating interpolants for the specific rules of $\TCHFLK'$. For the remaining rules similar principles have been proposed by Fitting~(\citeyear{Fitting1996}) and Blackburn and Marx~(\citeyear{BlMar03}).\smallskip
	\begin{description}
		\item[\normalfont$(\texttt{X} \lambda)$] If $\chi$ is an interpolant for $\Gamma\cup\{\texttt{X}\ @_{\bm{j}}\psi[x/b]\}$, then $\chi$ is an interpolant for $\Gamma\cup\{\texttt{X}\ (@_{\bm{j}}\lambda x\psi)(b)\}$, for $\texttt{X} \in \{\texttt{L},\texttt{R}\}$.
		
		
		\item[\normalfont$(\texttt{X} \neg\lambda)$] If $\chi$ is an interpolant for $\Gamma\cup\{\texttt{X}\ \neg @_{\bm{j}}\psi[x/b]\}$, then $\chi$ is an interpolant for $\Gamma\cup\{\texttt{X}\ \neg @_{\bm{j}}(\lambda x\psi)(b)\}$, for $\texttt{X} \in \{\texttt{L},\texttt{R}\}$.
		
		
		\item[\normalfont$(\texttt{X} \imath_1^o)$] If $\chi$ is an interpolant for $\Gamma\cup\{\texttt{X}\ @_{\bm{j}}\psi[x/a], \texttt{X}\ @_{\bm{j}}\varphi[x/a]\}$, then $\chi$ is an interpolant for $\Gamma\cup\{\texttt{X}\ @_{\bm{j}}(\lambda x\psi)(\imath y\varphi)\}$, for $\texttt{X} \in \{\texttt{L},\texttt{R}\}$.
		
		
		\item[\normalfont$(\texttt{L} {\imath_2^o}')$] If $\chi_1$ is an interpolant for $\Gamma\cup\{\texttt{L}\ \neg@_{\bm{j}}\varphi[y/b_1]\}$, $\chi_2$ is an interpolant for $\Gamma\cup\{\texttt{L}\ \neg@_{\bm{j}}\varphi[y/b_2]\}$ and
		$\chi_3$ is an interpolant for $\Gamma\cup\{\texttt{L}\ b_1 = b_2\}$, then $\forall x\forall y(\chi_1\vee\chi_2\vee\chi_3)[b_1/x,b_2/y]$ is an interpolant for $\Gamma\cup\{\texttt{L}\ @_{\bm{j}}(\lambda x\psi)(\imath y\varphi)\}$.
		
		\item[\normalfont$(\texttt{R} {\imath_2^o}')$] If $\chi_1$ is an interpolant for $\Gamma\cup\{\texttt{R}\ \neg@_{\bm{j}}\varphi[y/b_1]\}$, $\chi_2$ is an interpolant for $\Gamma\cup\{\texttt{R}\ \neg@_{\bm{j}}\varphi[y/b_2]\}$ and
		$\chi_3$ is an interpolant for $\Gamma\cup\{\texttt{R}\ b_1 = b_2\}$, then $\exists x\exists y(\chi_1\wedge\chi_2\wedge\chi_3)[b_1/x,b_2/y]$ is an interpolant for $\Gamma\cup\{\texttt{R}\ @_{\bm{j}}(\lambda x\psi)(\imath y\varphi)\}$.
		
		\item[\normalfont$(\texttt{L} \neg\imath^o)$] If $\chi_1$ is an interpolant for $\Gamma\cup\{\texttt{L}\ \neg@_i\psi[y/b]\}$, $\chi_2$ is an interpolant for $\Gamma\cup\{\texttt{L}\ \neg@_i\varphi[y/b]\}$ and
		$\chi_3$ is an interpolant for $\Gamma\cup\{\texttt{L}\ @_i\varphi[y/a], \texttt{L}\ a\neq b\}$, then $\forall x(\chi_1\vee\chi_2\vee\chi_3)[b/x]$ is an interpolant for $\Gamma\cup\{\texttt{L}\ \neg@_i(\lambda x\psi)(\imath y\varphi)\}$.
		
		\item[\normalfont$(\texttt{R} \neg\imath^o)$] If $\chi_1$ is an interpolant for $\Gamma\cup\{\texttt{R}\ \neg@_i\psi[y/b]\}$, $\chi_2$ is an interpolant for $\Gamma\cup\{\texttt{R}\ \neg@_i\varphi[y/b]\}$ and $\chi_3$ is an interpolant for $\Gamma\cup\{\texttt{R}\ @_i\varphi[y/a], \texttt{R}\  a\neq b\}$, then $\exists x(\chi_1\wedge\chi_2\wedge\chi_3)[b/x]$ is an interpolant for $\Gamma\cup\{\texttt{R}\ \neg@_i(\lambda x\psi)(\imath y\varphi)\}$.

		\item[\normalfont$(\texttt{X} \imath^t_1)$] If $\chi$ is an interpolant for $\Gamma\cup\{ \texttt{X}\ @_{\bm{j}}\varphi[\bm{x}/\bm{j}]\}$, then $\chi$ is an interpolant for $\Gamma\cup\{\texttt{X}\ @_{\bm{j}}\imath \bm{x}\varphi)\}$, for $\texttt{X} \in \{\texttt{L},\texttt{R}\}$.
		
		\item[\normalfont$(\texttt{L} {\imath^t_2}')$] If $\chi_1$ is an interpolant for $\Gamma\cup\{\texttt{L}\ \neg@_{\bm{j}_2}\varphi[\bm{x}/\bm{j}_2]\}$ and $\chi_2$ is an interpolant for $\Gamma\cup\{\texttt{L}\ @_{\bm{j}_1}\bm{j}_2\}$, then $\downarrow_{\bm{x}}(\chi_1\vee\chi_2)[\bm{j}_2/\bm{x}]$ is an interpolant for $\Gamma\cup\{\texttt{L}\ @_{\bm{j}_1}\imath \bm{x}\varphi\}$.

		\item[\normalfont$(\texttt{R} {\imath^t_2}')$] If $\chi_1$ is an interpolant for $\Gamma\cup\{\texttt{R}\ \neg@_{\bm{j}_2}\varphi[\bm{x}/\bm{j}_2]\}$ and $\chi_2$ is an interpolant for $\Gamma\cup\{\texttt{R}\ @_{\bm{j}_1}\bm{j}_2\}$, then $\downarrow_{\bm{x}}(\chi_1\wedge\chi_2)[\bm{j}_2/\bm{x}]$ is an interpolant for $\Gamma\cup\{\texttt{R}\ @_{\bm{j}_1}\imath \bm{x}\varphi\}$.
		
		\item[\normalfont$(\texttt{L} \neg\imath^t)$] If $\chi_1$ is an interpolant for $\Gamma\cup\{\texttt{L}\ \neg@_{\bm{j}}\varphi[\bm{x}/\bm{j}]\}$ and $\chi_2$ is an interpolant for $\Gamma\cup\{\texttt{L} @_{\bm{i}}\varphi[\bm{x}/\bm{i}], \texttt{L}\neg @_{\bm{j}}\bm{i}\}$, then $\chi_1\vee\chi_2$ is an interpolant for $\Gamma\cup\{\texttt{L}\neg @_{\bm{j}}\imath \bm{x}\varphi\}$.
		
		\item[\normalfont$(\texttt{R} \neg\imath^t)$] If $\chi_1$ is an interpolant for $\Gamma\cup\{\texttt{R}\ \neg@_{\bm{j}}\varphi[\bm{x}/\bm{j}]\}$ and $\chi_2$ is an interpolant for $\Gamma\cup\{\texttt{R} @_{\bm{i}}\varphi[\bm{x}/\bm{i}], \texttt{R}\neg @_{\bm{j}}\bm{i}\}$, then $\chi_1\wedge\chi_2$ is an interpolant for $\Gamma\cup\{\texttt{R}\neg @_{\bm{j}}\imath \bm{x}\varphi\}$.

		\item[\normalfont$(\texttt{X} {@\imath}^t)$] If $\chi$ is an interpolant for $\Gamma\cup\{\texttt{X}\ @_{\bm{i}}\imath \bm{x}\varphi, \texttt{X}\ @_{\bm{i}}\psi\}$, then $\chi$ is an interpolant for $\Gamma\cup\{\texttt{X}\ @_{\bm{j}}@_{\imath \bm{x}\varphi}\psi\}$, for $\texttt{X} \in \{\texttt{L},\texttt{R}\}$.
		
		\item[\normalfont$(\texttt{L} \neg {@\imath}^t)$] If $\chi_1$ is an interpolant for $\Gamma\cup\{\texttt{L}\ \neg @_{\bm{j}_2}\imath \bm{x}\varphi\}$ and $\chi_2$ is an interpolant for $\Gamma\cup\{\texttt{L}\neg @_{\bm{j}_2}\psi\}$, then $\downarrow_{\bm{x}}(\chi_1\vee\chi_2)[\bm{j}_2/\bm{x}]$ is an interpolant for $\Gamma\cup\{\texttt{L}\ \neg @_{\bm{j}_1}@_{\imath \bm{x}\varphi}\psi\}$.
		
		\item[\normalfont$(\texttt{R} \neg {@\imath}^t)$] If $\chi_1$ is an interpolant for $\Gamma\cup\{\texttt{R}\ \neg @_{\bm{j}_2}\imath \bm{x}\varphi\}$ and $\chi_2$ is an interpolant for $\Gamma\cup\{\texttt{R}\neg @_{\bm{j}_2}\psi\}$, then $\downarrow x(\chi_1\wedge\chi_2)[\bm{j}_2/\bm{x}]$ is an interpolant for $\Gamma\cup\{\texttt{R}\ \neg @_{\bm{j}_1}@_{\imath \bm{x}\varphi}\psi\}$.
	\end{description}
	
	\noindent 
	Let us skip the straightforward cases of $(\texttt{X} \lambda)$, $(\texttt{X} \neg\lambda)$, $(\texttt{X} \imath_1^o)$ and check two harder cases: $(\texttt{L}{\imath_2^o}')$ and $(\texttt{R} {\imath^o_2}')$. 
	For the first rule we assume that:
	\begin{enumerate}
		\item \label{cond::1}$\models \neg@_{\bm{j}}\varphi[y/b_1]\wedge\gamma_1\wedge \ldots\wedge\gamma_n\rightarrow\chi_1$
		\item \label{cond::2}$\models\chi_1\rightarrow\neg\delta_1\vee ...\vee\neg\delta_m$
		\item \label{cond::3}$\models \neg@_{\bm{j}}\varphi[y/b_2]\wedge\gamma_1\wedge \ldots\wedge\gamma_n\rightarrow\chi_2$
		\item \label{cond::4}$\models\chi_2\rightarrow\neg\delta_1\vee \ldots\vee\neg\delta_m$
		\item \label{cond::5}$\models b_1 = b_2\wedge\gamma_1\wedge \ldots\wedge\gamma_n\rightarrow\chi_3$
		\item \label{cond::6}$\models\chi_3\rightarrow\neg\delta_1\vee \ldots\vee\neg\delta_m$
	\end{enumerate}
	On this basis we can show that $\forall x\forall y(\chi_1\vee \chi_2\vee\chi_3)[b_1/x, b_2/y]$ is the required interpolant. 
	Let us assume that $b_1$ or $b_2$ occur in any of $\chi_1, \chi_2, \chi_3$, but not in $\gamma_1, \ldots, \gamma_n$. It follows that they must also occur in $\neg\delta_1\vee \ldots\vee\neg\delta_m$. If the assumption fails, then the universal quantification of $\chi_1\vee \chi_2\vee\chi_3$ is either unnecessary (if $b_1$ or $b_2$ also occur in $\gamma_1, \ldots, \gamma_n$) or void (if they are not in $\chi_1, \chi_2, \chi_3$), and hence, also unnecessary. 
	From \eqref{cond::2}, \eqref{cond::4}, \eqref{cond::6} it obviously follows that $\models\forall x \forall y(\chi_1\vee \chi_2\vee\chi_3)[b_1/x, b_2/y]\to\neg\delta_1\vee \ldots\vee\neg\delta_m$. We show that from \eqref{cond::1}, \eqref{cond::3}, \eqref{cond::5} it follows that  $\models @_{\bm{j}}(\lambda x\psi)(\imath y\varphi)\wedge\gamma_1\wedge \ldots\wedge\gamma_n\rightarrow \forall x \forall y(\chi_1\vee \chi_2\vee\chi_3)[b_1/x, b_2/y]$. Assume, towards a contradiction, that this implication is not valid, that is, that in some model we have $\M, \bm{t}, \ass \models @_{\bm{j}}(\lambda x\psi)(\imath y\varphi)$, $\M, \bm{t}, \ass \models \gamma_1\wedge \ldots\wedge\gamma_n$ but
	$\M, \bm{t}, \ass \not\models \forall x\forall y(\chi_1\vee \chi_2\vee\chi_3)[b_1/x, b_2/y]$.
	Hence, 	$\M, \bm{t}, \ass[x\mapsto o_1,y\mapsto o_2] \not\models \chi_1\vee \chi_2\vee\chi_3[b_1/x, b_2/y]$. Since $b_1, b_2$ do not occur in $\gamma_1, \ldots, \gamma_n, \varphi$, by the \hyperref[lem::Coincidence]{Coincidence Lemma} $\ass[x\mapsto o_1,y\mapsto o_2]$ can be identified with $\ass$ itself where $o_1=\ass(x), o_2=\ass(y)$. Hence, by the \hyperref[lem::Substitution]{Substitution Lemma}, $\M, \bm{t}, \ass \not\models \chi_1\vee \chi_2\vee\chi_3$. 
	Since each of \eqref{cond::1}, \eqref{cond::3}, \eqref{cond::5} is valid and $\M, \bm{t}, \ass \models \gamma_1\wedge \ldots\wedge\gamma_n$, we know that $\M, \bm{t}, \ass \models @_{\bm{j}}\varphi[y/b_1]$, $\M, \bm{t}, \ass \models @_{\bm{j}}\varphi[y/b_2]$, $\M, \bm{t}, \ass \not\models b_1=b_2$. But $\M, \bm{t}, \ass \models @_{\bm{j}}(\lambda x\psi)(\imath y\varphi)$ and all formulas are satisfied at the same state named by $\bm{j}$, which leads to a contradiction.
	
	For  the \texttt{R}-variant $(\texttt{R} {\imath^o_2}')$
	we assume that:
	\begin{enumerate}[label=\arabic*'., ref=\arabic*']
		\item\label{cond::1'} $\models \gamma_1\wedge \ldots\wedge\gamma_n\to\chi_1$; 
		\item\label{cond::2'} $\models\chi_1\to\neg\delta_1\vee ...\vee\neg\delta_m\vee @_{\bm{j}}\varphi[y/b_1]$;
		\item\label{cond::3'} $\models \gamma_1\wedge \ldots\wedge\gamma_n\rightarrow\chi_2$; 
		\item\label{cond::4'} $\models\chi_2\rightarrow\neg\delta_1\vee ...\vee\neg\delta_m\vee @_{\bm{j}}\varphi[y/b_2]$;
		\item\label{cond::5'} $\models\gamma_1\wedge \ldots\wedge\gamma_n\rightarrow\chi_3$; 
		\item\label{cond::6'} $\models\chi_3\rightarrow\neg\delta_1\vee \ldots\vee\neg\delta_m\vee b_1\neq b_2$;
	\end{enumerate}
	We can show that $\exists x\exists y(\chi_1\wedge \chi_2\wedge\chi_3[b_1/x, b_2/y])$ is the required interpolant. 
	Again, let us assume that one or both of $b_1, b_2$ occur in any of $\chi_1, \chi_2, \chi_3$, but not in $\delta_1, \ldots, \delta_m$. It follows that they must also occur in $\gamma_1\wedge \ldots\wedge\gamma_n$. If, on the other hand, our assumption is false, then the existential quantification of $\chi_1\wedge \chi_2\wedge\chi_3$ is either unnecessary (if one or both of $b_1, b_2$ occur also in $\delta_1, \ldots, \delta_m$) or void (if they are not in $\chi_1, \chi_2, \chi_3$), and hence, also unnecessary. 
	From \eqref{cond::1'}, \eqref{cond::3'}, \eqref{cond::5'} it straightforwardly follows that  $\models\gamma_1\wedge \ldots\wedge\gamma_n\rightarrow \exists x \exists y(\chi_1\wedge \chi_2\wedge\chi_3[b_1/x, b_2/y])$. We show that
	\eqref{cond::2'}, \eqref{cond::4'}, \eqref{cond::6'} imply $\models\exists x \exists y(\chi_1\wedge \chi_2\wedge\chi_3[b_1/x, b_2/y])\rightarrow\neg\delta_1\vee\ldots\vee\neg\delta_m\vee\neg @_{\bm{j}}(\lambda x\psi)(\imath y\varphi)$. 
	For the sake of contradiction assume that the implication is not valid, hence in some model we have $\M, \bm{t}, \ass \models \exists x \exists y(\chi_1\wedge \chi_2\wedge\chi_3[b_1/x, b_2/y])$ but $\M, \bm{t}, \ass \not\models \neg\delta_1\vee\ldots\vee\neg\delta_m\vee\neg @_{\bm{j}}(\lambda x\psi)(\imath y\varphi)$. Thus, $\M, \bm{t}, \ass[x\mapsto o_1,y\mapsto o_2] \models \chi_1\wedge \chi_2\wedge\chi_3[b_1/x, b_2/y]$.
	Since $b_1, b_2$ do not occur in $\delta_1, \ldots, \delta_m, \varphi$, again, without loss of generality, we can assume that $o_1=\ass(x), o_2=\ass(y)$ and by the \hyperref[lem::Coincidence]{Coincidence Lemma} and \hyperref[lem::Substitution]{Substitution Lemma} we obtain $\M, \bm{t}, \ass \models \chi_1\wedge \chi_2\wedge\chi_3$. Since each of \eqref{cond::2'}, \eqref{cond::4'}, \eqref{cond::6'} is valid, but $\M, \bm{t}, \ass \not\models \neg\delta_1\vee\ldots\vee\neg\delta_m$, then $\M, \bm{t}, \ass \models @_{\bm{j}}\varphi[y/b_1]$, $\M, \bm{t}, \ass \models @_{\bm{j}}\varphi[y/b_2]$, and $\M, \bm{t}, \ass \models b_1\neq b_2$. Since $\M, \bm{t}, \ass \not\models \neg @_{\bm{j}}(\lambda x\psi)(\imath y\varphi)$ and all formulas are satisfied at the same state named by $\bm{j}$, we obtain a contradiction. One may notice that the proof was symmetric to the previous one and based on straightforward dualities.
	
	The cases of $(\texttt{L} \neg\imath^o)$ and $(\texttt{R} \neg\imath^o)$ are proven in an analogous, but simpler, way since only singular quantification is needed (if any). We also skip the straightforward case of $(\texttt{X} \imath^t_1)$ and demonstrate one of the cases for $({\imath^t_2}')$, namely $(\texttt{R} {\imath^t_2}')$ (the case of $(\texttt{L} {\imath^t_2}')$ can be handled by a symmetric argument based on obvious dualities, similarly to the preceding two proofs).
	
	We start by assuming that:
	\begin{enumerate}[label=\arabic*''., ref=\arabic*'']
		\item \label{cond::1''}$\models \gamma_1\wedge \ldots\wedge\gamma_n\rightarrow\chi_1$
		\item \label{cond::2''}$\models\chi_1\rightarrow\neg\delta_1\vee ...\vee\neg\delta_m\vee @_{\bm{j}_2}\varphi[\bm{x}/\bm{j}_2]$
		\item \label{cond::3''}$\models \gamma_1\wedge \ldots\wedge\gamma_n\rightarrow\chi_2$
		\item \label{cond::4''}$\models\chi_2\rightarrow\neg\delta_1\vee \ldots\vee\neg\delta_m\vee\neg @_{\bm{j}_1}\bm{j}_2$		
	\end{enumerate}
	On this basis we can show that $\downarrow_{\bm{x}}(\chi_1\wedge \chi_2)[\bm{j}_2/\bm{x}]$ is the required interpolant. 
	Let us assume that $\bm{j}_2$ occurs in any of $\chi_1, \chi_2$, but not in $\delta_1, \ldots, \delta_n$. It follows that it must also occur in $\gamma_1\wedge \ldots\wedge\gamma_n$. If the assumption is false, then the binding by the $\downarrow$ operator of $\chi_1\wedge\chi_2$ is either unnecessary (if $\bm{j}_2$ also occurs in 
	$\delta_1, \ldots, \delta_n$) or void (if $\bm{j}_2$ is not in $\chi_1, \chi_2$), in which case it is also unnecessary. 
	From \eqref{cond::1''}, \eqref{cond::3''} it obviously follows that $\models\gamma_1\wedge \ldots\wedge\gamma_n\rightarrow\downarrow_{\bm{x}}(\chi_1\wedge \chi_2)[\bm{j}_2/\bm{x}]$. We show that from \eqref{cond::2''} and \eqref{cond::4''} it follows that  $\models\downarrow_{\bm{x}}(\chi_1\wedge \chi_2)[\bm{j}_2/\bm{x}]\rightarrow\neg\delta_1\vee\ldots\vee\neg\delta_m\vee \neg @_{\bm{j}_1}\imath \bm{x}\varphi$. Assume, indirectly, that this is not the case, that is, in some model we have $\M, \bm{t}, \ass \models\downarrow_{\bm{x}}(\chi_1\wedge \chi_2)[\bm{j}_2/\bm{x}]$ but 
	$\M, \bm{t}, \ass \not\models \neg\delta_1\vee \ldots\vee\neg\delta_m$ and $\M, \bm{t}, \ass \models@_{\bm{j}_1}\imath \bm{x}\varphi$. By the latter and the soundness of $({\imath^t_2}')$, either $\M, \bm{t}, \ass \not\models @_{\bm{j}_2}\varphi[\bm{x}/\bm{j}_2]$ or $\M, \bm{t}, \ass \models@_{\bm{j}_1}\bm{j}_2$. From the first disjunct, (\ref{cond::2''}), and the fact that $\M, \bm{t}, \ass \not\models \neg\delta_1\vee ...\vee\neg\delta_m$,
	we have that $\M, \bm{t}, \ass \not\models \chi_1$, which implies that 
	$\M, \bm{t}, \ass \not\models\downarrow_{\bm{x}}(\chi_1\wedge \chi_2)[\bm{j}_2/\bm{x}]$. The same conclusion follows if we take the second disjunct and \eqref{cond::4''}; in both cases we run into a contradiction.
	
	We omit the proofs of the remaining cases, since they are either identical (the cases of $(\texttt{L} \neg {@\imath}^t)$ and $(\texttt{R} \neg {@\imath}^t)$) to the ones conducted above, or simpler (the cases of $(\texttt{L} \neg {\imath}^t)$ and $(\texttt{R} \neg {\imath}^t)$), or straightforward (the case of $(\texttt{X} {@\imath}^t)$).
\end{proof}

Although this result was proven only for $\FOHLLDK$, it is trivially extendable to all stronger logics mentioned at the end of~\Cref{sec::Completeness} (see Blackburn and Marx~(\citeyear{BlMar03}) for details).

\section{Conclusion}
\label{sec::Conclusion}

As we mentioned above, our tableau system is different than the sequent calculus by Indrzejczak (\citeyear{Indrzejczak2020a}), which formalises the approach of Fitting and Mendelsohn (FMA). The differences concern, \textit{inter alia}, the background theory of definite descriptions and the language of both systems.  In the case of FMA the background theory of definite descriptions is based on the axiom of Hintikka which has the following form~\citep{Indrzejczak2020a}:
\begin{align}
	\mathcal{t} \approx\imath x\varphi \leftrightarrow (\lambda x\varphi)(\mathcal{t}) \wedge \forall y(\varphi[x/y]\rightarrow (\lambda x x=y)(\mathcal{t})), \text{ for } y \text{ not in }\varphi,\tag{$\mathsf{H}$}\label{ax::Hintikka}
\end{align}
where $\approx$ represents intensional equality (as opposed to extensional $=$).

In the tableau system of Fitting and Mendelsohn, instead of rules, suitable instances of implications building (\ref{ax::Hintikka}) are simply added to branches in proof trees. Indrzejczak (\citeyear{Indrzejczak2020a}) devised special rules to handle that but due to the form of (\ref{ax::Hintikka}) they always introduce definite descriptions as arguments of intensional equality. Moreover, some additional rules for $\approx$ are needed which introduce a certain kind of restricted cut to the system. In the present calculus the rules for definite descriptions are based on the principle of Russell as formalised with the help of the $\lambda$-operator:
\begin{align}
	(\lambda x\psi)(\imath y\varphi) \leftrightarrow \exists x(\forall y(\varphi \leftrightarrow y =x )\land\psi),\tag{$\mathsf{R}_\lambda$}\label{ax::Russell}
\end{align}
where $\varphi$ does not contain free occurrences of $x$.

Russell's modified principle (\ref{ax::Russell}) leads to simpler, more general and more natural rules characterising definite descriptions. There is also an important semantic difference between FMA and the present approach. In the former definite descriptions are semantically treated as terms and characterised by means of an interpretation function:
$$\I_\ass(\imath x\varphi,\bm{t}) = o\quad \text{iff}\quad \begin{minipage}[t]{7cm}$\M,  \bm{t}, \ass[x\mapsto o] \models \varphi$ and, for any $o'\in\D$, if $\M,  \bm{t}, \ass[x\mapsto o'] \models \varphi$, then $o'=o$.\end{minipage}$$

Here we decided to make a semantic characterisation of definite descriptions an inherent part of the characterisation of $\lambda$-atoms as expressed by satisfaction clauses. Such a solution was for the first time applied in the formalisation of the Russellian approach presented by Indrzejczak and Zawidzki~(\citeyear{IndZaw2023}). It leads to simpler metalogical proofs and better reflects Russell's eliminativist spirit.

There are also several differences concerning the language of both versions of $\FOHLLD$. FMA allows us to express a difference between non-existent and non-denoting terms by means of predicates of a special kind. It also makes expressible the difference between non-rigid terms and their rigidified versions. In the context of $\FOHL$ it is naturally captured by an application of $@_{\bm{j}}$ to terms, i.e., $@_{\bm{j}} \mathcal{t}$ denotes the object that is the denotation of $\mathcal{t}$ at the state named by ${\bm{j}}$. Since our major goal was to extend Marx and Blackburn's interpolation result to the case of definite descriptions (and $\lambda$-predicates), we maximally restricted the language and used their tableau calculus (\citeyear{BlMar03}) as the basis. On the other hand, in the present system $\downarrow$ plays a central role, and is absent from FMA as well as from its $\HL$ formalisation~\citep{Indrzejczak2020a}. Of course both systems may be extended to obtain a similar effect. In particular, note that rigidification of intensional terms in the present version of $\FOHLLD$ does not need an introduction of $@_{\bm{j}}$ applied to terms. For example, to say that the present king of France is bald, $@_{\bm{j}} (\lambda x B(x))(\imath yK(y))$ is not our only option. It can be expressed with the help of the present machinery as the formula $\downarrow_{\bm{x}}\!@_{\bm{x}}(\lambda x B(x))(\imath y K(y))$. Extending both approaches to the same vocabulary and comparing their deductive behaviour requires additional work involving implementation and preparation of benchmarks.

An interesting problem stems also from the application of two variants of interpolation proofs which were mixed here. In the recent work of Indrzejczak and Zawidzki~(\citeyear{IndZaw2023}) we applied a strategy based on a preliminary transformation of all many-premise rules into their one-premise equivalents. It seems that this solution leads to a shorter proof and a more uniform calculation of interpolants. To save space, we carried out the present proof on top of the ready-result of Blackburn and Marx, based on an alternative solution relying on multiple-premise rules directly. But this leads to a multiplication of cases. It would be interesting to check how our former strategy of computing interpolants works in the context of the present $\FOHLLD$. Another interesting task is to extend the result to systems for logics above $\sf K$, but characterised by suitable rules rather than by nominal-free pure hybrid axioms, as above. A possible departure point for obtaining such a non-trivial extension could be logics complete wrt the classes of frames defined by so-called geometric formulas. Bra\"uner (\citeyear{Brauner2011}) showed how to characterise such theories in a natural deduction system for $\HL$ by rules of a uniform character. His method can be adapted to the sequent or tableau setting. The catch is that such rules contain nominals, and so calculating interpolants requires binding nominals, which makes it a less trivial task than for the case of nominal-free axioms. Eventually, it would be interesting to prove the interpolation theorem for Indrzejczak's former system~(\citeyear{Indrzejczak2020a}) which is based on significantly different rules.

\paragraph{Acknowledgements} This research is funded by the European Union (ERC, ExtenDD, project number: 101054714). Views and opinions expressed are however those of the authors only and do not necessarily reflect those of the European Union or the European Research Council. Neither the European Union nor the granting authority can be held responsible for them.

\paragraph{Conflict of interest} The author(s) has(ve) no competing interests to declare that are relevant to the content of this article.



\end{document}